\algnewcommand{\Initialize}[1]{%
  \State \textbf{Initialize:}
  \Statex \hspace*{\algorithmicindent}\parbox[t]{.96\linewidth}{\raggedright #1}
}
\newtheorem{theorem}{Theorem}
\newtheorem{assumption}{Assumption}
\newtheorem{corollary}{Corollary}
\newtheorem{lemma}{Lemma}
\newtheorem{proposition}{Proposition}
\newtheorem{claim}{Claim}
\newtheorem{definition}{Definition}
\begin{document}
\title{Scheduling with Predictions}


\author{Woo-Hyung Cho\thanks{wc563@cornell.edu}}
\author{Shane Henderson\thanks{sgh9@cornell.edu}}
\author{David Shmoys\thanks{david.shmoys@cornell.edu}}

\affil{Cornell University}

\maketitle

\begin{abstract}
	There is significant interest in deploying machine learning algorithms for diagnostic radiology, as modern learning techniques have made it possible to detect abnormalities in medical images within minutes. While machine-assisted diagnoses cannot yet reliably replace human reviews of images by a radiologist, they could inform prioritization rules for determining the order by which to review patient cases so that patients with time-sensitive conditions could benefit from early intervention.

	We study this scenario by formulating it as a learning-augmented online scheduling problem. We are given information about each arriving patient’s urgency level in advance, but these predictions are inevitably error-prone. In this formulation, we face the challenges of decision making under imperfect information, and of responding dynamically to prediction error as we observe better data in real-time. We propose a simple online policy and show that this policy is in fact the best possible in certain stylized settings. We also demonstrate that our policy achieves the two desiderata of online algorithms with predictions: consistency (performance improvement with prediction accuracy) and robustness (protection against the worst case). We complement our theoretical findings with empirical evaluations of the policy under settings that more accurately reflect clinical scenarios in the real world.
\end{abstract}

\newpage
\section{Introduction}
Modern machine learning algorithms have been tremendously successful in a variety of application domains, and healthcare is no exception. In recent years, we have seen significant interest in deploying these algorithms for diagnostic radiology, a branch of medicine that uses imaging techniques such as X-rays, ultrasounds, and Magnetic Resonance Imaging (MRI) to diagnose a patient. The idea is to use these images as inputs to machine learning algorithms, which would then search for patterns that imply the presence of an abnormality. Advances in pattern recognition techniques for image processing and computer vision have made it possible for machine learning algorithms to detect abnormal conditions in medical images within minutes, or even seconds. Because this is still a nascent area of research, these algorithmic, machine-assisted diagnoses cannot yet reliably replace the thorough, human reviews of images by a radiologist. Meanwhile, they could be used to prioritize and speed up the review of images that are flagged as likely to contain time-sensitive conditions.

To make this more concrete, imagine a group of patients who have had diagnostic images taken after a referral. Radiologists are tasked with processing each patient case, which typically consists of reading images, then communicating any findings by filing a radiology report and sending it back to the referring provider. Appropriate patient care and treatment begin only upon case completion at the radiology department, so it is in the best interest of the patient for radiologists to organize their workflow in a way that prioritizes cases by urgency. This is especially true for patients with time-sensitive conditions such as stroke, intercranial hemorrhage, or pneumothorax, for which early intervention is key. In the case of acute stroke due to large vessel occlusion, for example, studies have shown that an interventional radiology procedure called mechanical thrombectomy could achieve a favorable clinical outcome when performed within 4 to 6 hours of symptom onset \cite{thrombectomy}. This is where machine learning could be helpful. Leveraging the speed and the predictive power of machine learning, radiologists could use algorithmic outputs to prioritize cases that are deemed urgent.

True urgency, however, cannot be fully assessed until a case is opened and images are at least partially read. For this reason, many imaging clinics including those in the New York-Presbyterian hospital network tend to rely on the referring providers' communication of expectations as well as on their own insights, expertise and experience when prioritizing cases. In some sense, current practices rely on \textit{human predictions} of urgency. The use of predictions powered by machine learning algorithms could augment current best practices and streamline the process of determining the order by which patient cases should be read.

But predictions, human-made or machine-learned, are rarely perfect. There will always exist never-before-seen cases that further compound the error. Good predictions have the potential to expedite the detection and treatment of time-sensitive conditions, but mispredictions could cause delays that are extremely costly. Given this understanding, the central question that we ask in this paper is, \textit{how can we take advantage of predictions to improve radiologists' workflow while accounting for prediction error?}

We abstract the setting described above and model it as a single-machine scheduling problem. A radiologist tasked with reviewing patient cases can be viewed as a single machine that is able to process one job at a time. Case urgencies are captured in the form of job weights, where the higher the weight, the greater the urgency. Patient treatment plans are often established upon completion of case review from the radiology department, so a natural objective would be to minimize the total sum of urgency-weighted completion times across all patients.

This single-machine problem of minimizing the weighted sum of job completion times is a decades-old problem that has already been extensively studied (see \cite{smith56,conwaymm67}, for example). In this paper, we study this problem with the addition of a key feature: imperfect predictions of urgency. Patient cases randomly arrive into the system. At each job's time of arrival, we observe its predicted level of urgency given by some black-box predictive mechanism. True urgency is unknown and unobservable at this time, so priority decisions are necessarily made based on imperfect information. However, when radiologists are working through a patient case, interpreting the associated images and deciding if an abnormality is present, they are also gradually learning whether or not the case on hand is truly urgent (or non-urgent). Anecdotal evidence suggests that an image study is roughly a process of elimination via inspection from different angles \cite{radiologist_interpretations2,radiologist_interpretations}, so it is likely that a job's true urgency is known even before its processing is complete. We therefore allow radiologists to \textit{preempt} a job midway then return to the remaining work for completion at a later point in time. With a preemptive strategy, we have an opportunity to hedge against prediction error by responding early to what is in hindsight a suboptimal decision made in the face of less-than-perfect information. We aim to find a policy for deciding which job or remainder thereof to process at any given time so that the total expected urgency-weighted sum of job completion times is minimized.

Our scheduling formulation allows for a wide range of models for describing the problem setting in ways that more accurately reflect clinical settings in the real world. For example, by using job weights to capture case urgencies, we are able to handle granularity in prioritization schemes beyond a binary classification of urgent vs. non-urgent. Our preemptive framework also allows flexibility in modeling the many different ways in which radiologists gain information as they process each patient case. Nevertheless, in this paper, we focus on a highly stylized version of this model. We assume that each job can be categorized as one of two types: urgent or non-urgent. All jobs are available before any decisions are made, and we are able to observe each job's predicted priority class at this time. We further assume that every job shares the same processing time requirement. Without loss of generality, we assume unit processing time requirements. A fixed parameter $\alpha\in(0,1)$ is used to denote the fraction of a job that must be processed before we learn its true type. We call this time point a job's $\alpha$-point. In our model, we allow preemptions to occur only at these $\alpha$-points, and assume that the residual work needed to complete an interrupted job is exactly the same as if the job had not been interrupted.

Our problem of scheduling with predictions is an exercise in online decision making even when all jobs are available to us in advance. Decisions are made with incomplete information in the form of imperfect predictions, to which we respond over time based on our observations of true job types. Classic results in online decision making have focused on finding solutions that are robust with provably good performance guarantees over all possible inputs and even in the worst case. An emerging line of research in this area leverages predictions to design algorithms that not only remain robust to worst-case inputs but also achieve performance guarantees that improve with prediction accuracy (see \cite{mitzenmacherv20} for a survey). We continue this line of research and extend it to our problem setting. In what follows, we first find a threshold-based policy for deciding which job to process at any given time, and show that our proposed policy is the best possible over all non-anticipating policies. We then show that performance guarantees for this policy degrade gracefully as a function of prediction error. Our results indicate that our policy simultaneously achieves consistency (improvement with prediction accuracy) and robustness (protection against the worst case).

\paragraph{Related Work}

There has been an explosion of research activity in recent years that seeks to augment online algorithms with machine-learned predictions. In this framework, the goal is to design algorithms with near-optimal performance when predictions are accurate while maintaining prediction-less guarantees in the worst case. The idea is that good predictions can help circumvent worst-case behavior. Classic optimization problems that are being reexamined under this framework include caching \cite{lykouris21}, matching \cite{dinitzilm21,antoniadis20,chensvz22,lavastidamrx21}, secretary \cite{duttingllv21}, knapsack \cite{imkmqp21} and facility location \cite{almanzaclpr21}. Problems in Nash social welfare \cite{banerjeeggj22}, mechanism design \cite{xulu22} and revenue management \cite{balseirokk22} are also actively being studied in this context.

In online scheduling, problems that are being newly examined with learning augmentation include problems for minimizing average flow time \cite{mitzenmacher20,mitzenmacherv20,azarlt21}, average completion time \cite{purohitsk18, im21}, average weighted completion time \cite{megow22}, and makespan \cite{lattanzi20,balkanski22}. Many of these studies with min-sum objectives assume that job processing requirements are not known to us in advance. In these settings, it is natural to use predictions of individual job processing times \cite{purohitsk18, im21, mitzenmacher20, mitzenmacherv20, azarlt21}. More recent work examines the use of permutation predictions, directly predicting algorithmic actions rather than input characteristics \cite{megow22}. Our work studies the min-sum weighted completion time objective using predictions of an input characteristic that has not been considered in previous work: job weights.

Job weights are used to capture urgencies or priorities in our problem setting. Outside the realm of online scheduling with predictions, there is an extensive body of work that investigates the effect of priority classes. In the context of prediction error, our work is closely related to the works of \citet{argon09} and \citet{mclaym13}. In a priority queue model, \citet{argon09} make priority assignments for arriving customers based on imperfect indicators of priority types. The signal available to the decision maker is the probability that a customer is high priority. \citet{mclaym13} study the problem of dispatching ambulances when operators make classification errors in assessing patient risk via a Markov Decision Process. Our model is fundamentally different not just in framework, but more importantly in that we respond dynamically to real-time information gained while processing each job. Despite these major modeling differences, there are striking similarities in some of the insights and conclusions we draw. With \citet{argon09}, we share the same optimal policy structure given two priority classes with linear waiting costs. Both works have a signal (or, in our case, prediction)-based thresholding policy with strong ties to the generalized $c\mu$ rule. Our threshold policy also reveals how prediction quality impacts decision making. Similar insights are given by \citet{mclaym13} on when to over- or under-respond to perceived patient risk based on rates of classification error. It is clear that there are connections in our approaches despite their differences. In future work, it would be interesting to see when and how these frameworks converge.

In other related work, \citet{vanderzee61} directly model misclassification rates in a single-server queue where priority assignments are made based on a probabilistic classifier. Steady-state results are derived when classification errors are known and very small. \citet{singhgv20} eliminates the use of priority types as a middle-man altogether and directly prescribes placement into the priority queue. Finally, very recent work by \citet{thompson2022} explores the impact of prediction-driven prioritization schemes using a preemptive priority queue. Their simulated clinical impact assumes a fixed prediction error based on the expected diagnostic performance of machine-learned algorithms.

The remainder of this paper is organized as follows. In Section \ref{section:formulation}, we introduce our model as well as a scheduling formulation of the problem. We present our main results in Section \ref{section:beta}, where we show that a simple threshold-based policy is in fact the best possible in certain stylized settings. Section \ref{sectionch2:extensions} extends this idea to a number of settings that more accurately reflect realistic scenarios. Finally, we conclude and lay out some additional thoughts for future research in Section \ref{sectionch2:discussion}.


\section{Problem Formulation}\label{section:formulation}

We have a set of patient cases that must be processed by a radiologist. At time of arrival, each patient case is labeled with its predicted urgency level. These labels are observable. At every decision point, the radiologist decides which patient case to process. After processing a pre-specified fraction of a patient case, the radiologist learns the true priority of the case on hand and has the option to preempt that case in favor of another patient case. We capture this decision making process with a preemptive scheduling model. Our goal is to find a policy for minimizing the expected urgency-weighted sum of completion times across all patients. We describe the problem data and model, followed by a scheduling formulation of the problem.

\paragraph{Problem Data}
We have one radiologist (a single machine) processing patient cases (jobs) indexed by $[n]=\{1,\dots,n\}$. A machine can only process one job at a time, and each job requires 1 unit in processing time. All jobs are assumed available at time 0 in advance of any decision making, i.e., release dates $r_j=0$ for each job $j\in [n]$.

There are two priority classes, type 0 (urgent) and type 1 (non-urgent), each with its associated cost per unit delay (weights) $\omega_0$ and $\omega_1$, respectively, where $\omega_0> \omega_1>0$. Each job is independently an urgent job with probability $\rho\in(0,1)$, which we assume is known based on historical data. Job $j$'s true urgency $true(j)\in\{0,1\}$ is unknown a priori, and is revealed only after partially completing some fixed $\alpha\in (0,1)$ fraction of the job. On the other hand, its predicted priority $pred(j)\in\{0,1\}$ is immediately observable at its release date $r_j$. A binary classification system predicts the urgency level of each job independently according to the following probability matrix.

\begin{table}[ht]
	\centering
	\begin{tabular}{|c|c|c|}\hline
	&predicted 0& predicted 1\\
	\hline
	true 0 & $1-\varepsilon_0$&$\varepsilon_0$\\
	\hline
	true 1 & $\varepsilon_1$&$1-\varepsilon_1$\\
	\hline
	\end{tabular}
	\caption{Prediction probability matrix}
	\label{tab:pred-matrix}
\end{table}

The probability of misclassifying a true type 0 job is the false negative rate $\varepsilon_0$, and the probability of misclassifying a true type 1 job is the false positive rate $\varepsilon_1$. We assume that $\varepsilon_0\leq 1/2$ and $\varepsilon_1\leq 1/2$, and that these prediction errors are known. We expect that they could be inferred from historical data or from expected generalization error rates associated with the machine learning algorithm that we use.

By Bayes' rule, job $j$ is a type 0 job with probability $p_j$, where 
	\begin{equation}
		p_j=\mathbb{P}(true(j)=0|pred(j)=0)=\frac{(1-\varepsilon_0)\rho}{(1-\varepsilon_0)\rho+\varepsilon_1(1-\rho)}\label{eq:pj-1}
	\end{equation}
	if job $j$ is predicted to be of high priority, and 
	\begin{equation}
		p_j=\mathbb{P}(true(j)=0|pred(j)=1)=\frac{\varepsilon_0\rho}{\varepsilon_0\rho+(1-\varepsilon_1)(1-\rho)}\label{eq:pj-2}
	\end{equation} otherwise. It is easy to verify that $\mathbb{P}(true(j)=0|pred(j)=0)\geq \mathbb{P}(true(j)=0|pred(j)=1)$ given our assumptions that $\varepsilon_0$ and $\varepsilon_1$ are both at most one half. Finally, the weight of job $j$ is
\begin{align}
	w_j &= \omega_0\cdot\mathbf{1}\left\{true(j)=0\right\} + \omega_1\cdot\mathbf{1}\left\{true(j)=1\right\}\nonumber\\
	&=\omega_1+(\omega_0-\omega_1)\cdot\mathbf{1}\left\{true(j)=0\right\}.\label{eq:job-weight}
\end{align}

\begin{assumption}\label{asm:wsrpt}
	$\omega_1<\omega_0(1-\alpha)$ holds.
\end{assumption}
Intuitively, Assumption \ref{asm:wsrpt} ensures that there is a large enough weight differential between urgent and non-urgent jobs to make preemption meaningful. The technical reasons for making this assumption will be discussed in the next section when it becomes relevant.

\paragraph{Model}
A decision point occurs whenever a job completes or a job's true priority is revealed. We call the latter decision point an $\alpha$-point. Our model allows \textit{preemptions}; at each $\alpha$-point, we can either complete the job immediately, or preempt then process the remaining $1-\alpha$ units of work at a later point in time. 

At each decision point $t$, we observe the state, which consists of the set of unopened jobs sorted in some order, and the set of partially processed jobs of which true types are already known. Of unopened jobs, only predicted priorities are known. Based on the state, we decide whether to open a new job of as-yet-unknown urgency or complete a job of known priority that only has $1-\alpha$ units of work remaining. Our decisions at each decision point are therefore made based on the \textit{predicted} priorities of unopened jobs and the \textit{true} priorities of partially processed jobs. If we decide to open a new job, we process a job chosen according to some predetermined order and meet our next decision point at the next $\alpha$-point $t+\alpha$, at which time we observe the job's true type. We then update the state by moving this job from the set of unopened to the set of partially processed jobs. Otherwise, we complete a job at $t+(1-\alpha)$, incur a weighted cost to the objective based on the true urgency of the job just completed, and remove that job from the system entirely.

\paragraph{Objective}
Each of the $n$ arriving jobs is independently a high priority job with probability $\rho$, and is assigned a predictive label according to the probability matrix given in Table \ref{tab:pred-matrix}. Letting $C_j$ denote the completion time of job $j$, our goal is to minimize $\mathbb{E}\left(\sum_{j=1}^n w_jC_j\right)$ where $w_j$ is the true weight of job $j$ as defined in Equation \eqref{eq:job-weight}.

\paragraph{Scheduling Formulation}
We first consider the offline version of this problem in which jobs' true types are known a priori. This is a single-machine problem of minimizing the weighted sum of completion times, written $1||\sum w_jC_j$ in the scheduling notation of \citet{grahamllr79}, and can be solved using the following result given by \citet{smith56}.

\begin{theorem}[Smith's WSPT Rule]
	For the single-machine problem of minimizing the weighted sum of completion times, the Weighted Shortest Processing Time (WSPT) rule is optimal.\label{thm:smith-wspt}
\end{theorem}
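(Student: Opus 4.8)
The plan is to prove Theorem~\ref{thm:smith-wspt} by the classical adjacent-interchange (exchange) argument. Recall that in the general single-machine problem each job $j$ has a processing requirement and the WSPT rule sequences jobs in non-increasing order of weight-to-processing-time ratio. First I would reduce to a convenient class of schedules. Since all release dates are zero and all weights are positive, an optimal schedule may be assumed to have no idle time: shifting work earlier only decreases completion times. It may also be assumed non-preemptive: given any schedule, run the jobs non-preemptively in their original order of completion; by the time a job finishes in the old schedule the new schedule has finished (at least) the same set of jobs, so no completion time increases. Hence it suffices to show that among non-idling, non-preemptive schedules --- that is, permutations $\sigma$ of $[n]$ run back-to-back from time $0$ --- the WSPT order minimizes $\sum_j w_j C_j$.

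Next I would argue by contradiction. Suppose $\sigma$ is optimal but not in WSPT order; then there are two jobs $i$ and $j$ scheduled consecutively, $i$ immediately before $j$, with $w_i/p_i < w_j/p_j$, where $p_i$ and $p_j$ are their processing requirements. Let $t$ be the start time of $i$ in $\sigma$, and let $\sigma'$ swap $i$ and $j$ and leave all other jobs in place. Only $C_i$ and $C_j$ change, so
\begin{align*}
\sum_\ell w_\ell C_\ell(\sigma) - \sum_\ell w_\ell C_\ell(\sigma')
&= \big(w_i(t+p_i)+w_j(t+p_i+p_j)\big)-\big(w_j(t+p_j)+w_i(t+p_i+p_j)\big)\\
&= w_j p_i - w_i p_j .
\end{align*}
Because $w_j/p_j > w_i/p_i$ and the processing times are positive, $w_j p_i - w_i p_j > 0$, so $\sigma'$ strictly improves on $\sigma$ --- a contradiction. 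Therefore every optimal permutation lists the jobs in non-increasing order of $w_j/p_j$, which is exactly WSPT.

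Finally I would note that ties are innocuous: when $w_i/p_i = w_j/p_j$ the displayed difference vanishes, so any ordering among jobs of equal ratio is equally optimal, and WSPT with a fixed tie-breaking rule attains the minimum. The only step that is not purely mechanical is the reduction to non-idling, non-preemptive permutation schedules; everything after that is the one-line identity above. As a byproduct, that reduction shows preemption cannot help in the offline problem, so the offline optimum used as a benchmark later is precisely the cost of the WSPT permutation.
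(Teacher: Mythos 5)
Your proof is correct, and it is the standard adjacent-interchange argument for Smith's rule; the paper simply cites this classical 1956 result without reproducing a proof, so there is no in-paper argument to compare against. The reduction to non-idling, non-preemptive permutation schedules, the pairwise swap yielding the difference $w_j p_i - w_i p_j$, and the handling of ties are all exactly as in the textbook treatment, and your closing remark that the reduction also shows preemption does not help offline is a useful observation consistent with how the paper later uses $\mathsf{OPT}$.
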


The WSPT rule sorts jobs in nonincreasing order of weight-to-processing-time ratios. Given our unit processing time assumption, sorting jobs in WSPT order is equivalent to sorting jobs in nonincreasing order of true priorities. Then, processing job $j$ for completion at time $j$ yields an optimal schedule, so $\mathbb{E}(\mathsf{OPT})= \mathbb{E}\left(\sum_{j=1}^n jw_j\right).$

Our problem, however, is a \textit{non-clairvoyant} online decision making problem in which jobs' true priorities are not known until jobs are at least partially processed. We follow the WSPT rule and sort jobs in nonincreasing order of \textit{predicted} weights, breaking ties arbitrarily. We then proceed by opening jobs in this sorted order. The rest of this paper is focused on showing that the performance gap between the online and offline versions of this problem can be reduced with the use of \textit{predictions}, especially when the predictor has low error.

\paragraph{Example}
Consider the following deterministic 9-job example where each column represents a single job.

\begin{table}[ht]
	\centering
	\begin{tabular}{|c|c c c c c c c c c|}
	\hline
	true types (unknown) &0&0&0&0&1&1&1&1&1\\
	\hline
	predicted types (observed) &0&1&0&0&0&1&1&1&0\\
	\hline
	\end{tabular}
\end{table}

At each job's release date, we observe its predicted type. True types are not known at this time. We proceed by sorting jobs in WSPT order of predicted priorities.

\begin{table}[ht]
	\centering
	\begin{tabular}{|c|c c c c c c c c c|}
	\hline
	predicted types (sorted) &0&0&0&0&0&1&1&1&1\\
	\hline
	true types (permutation $\pi$) &0&1&0&0&1&1&1&0&1\\
	\hline
	\end{tabular}
	\caption{An example of a possible job ordering}
\end{table}

Among the five jobs predicted to be of high priority, there are two jobs that are actually of type 1. Similarly, there is one true type 0 job among the four jobs that are predicted to be of low priority. Therefore, in this example, there are ${5\choose 2}{4\choose 1}$ possible misprediction-driven ways of sequencing jobs by true priority. One such permutation $\pi$ is given as an example above. 

Once a job is opened, we learn its true type after processing $\alpha$ units of the job. At this $\alpha$-point, we have the option of either completing the remaining $1-\alpha$ units of work, or opening the next job in $\pi$. Decisions are made over time with the goal of minimizing $\mathbb{E}\left(\sum_{j=1}^n w_jC_j\right)$ across all possible permutations of job orderings.

\section{The $\beta$-Threshold Rule: An Optimal Policy}\label{section:beta}
We provide an optimal policy for our problem in this section. Before we do so, we first discuss an old scheduling result by \citet{schrage68}.

\begin{theorem}[Schrage's SRPT Rule]
	In the preemptive single-machine problem of minimizing the sum of completion times, where jobs are arriving over time, the Shortest Remaining Processing Time (SRPT) rule is optimal.\label{thm:schrage-srpt}
\end{theorem}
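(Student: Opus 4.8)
The plan is to establish optimality of SRPT by the classical exchange argument: starting from any schedule, transform it step by step into one that obeys the SRPT rule without ever increasing $\sum_j C_j$. The first reduction is to a schedule with no unforced idle time. If the machine is idle at some time $t$ while a released job $j$ is still unfinished, I would slide the earliest future piece of $j$'s processing into the idle interval; since $j$ is available throughout $[t,\infty)$, this is feasible, it only front-loads $j$'s work, and it touches no other job, so $\sum_j C_j$ does not increase. A standard argument then yields an optimal schedule free of unforced idle time.

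Given such a schedule $S$ that still disagrees with SRPT, let $t$ be the earliest time of disagreement: at $t$, $S$ processes some job $i$, whereas SRPT would process an available job $j$ with strictly smaller remaining requirement $p_j(t)<p_i(t)$ (ties are absorbed by choosing SRPT's tie-break to coincide with $S$ before $t$). Let $U\subseteq[t,\infty)$ be the union of the intervals on which $S$ runs $i$ or $j$; then $t$ is the left endpoint of $U$ and $|U|=p_i(t)+p_j(t)$. Build $S'$ by keeping exactly the time slots $U$ but running $j$ during the first $p_j(t)$ units of $U$ and $i$ during the last $p_i(t)$ units, leaving every other job as in $S$; since $i$ and $j$ are both available throughout $[t,\infty)$, $S'$ is feasible. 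In $S'$, job $i$ finishes at the right endpoint of $U$, so $C'_i=\max(C_i,C_j)$, and $C'_j$ is the time at which $p_j(t)$ units of $U$ have elapsed. The crux is the inequality $C'_j\le\min(C_i,C_j)$: by the moment the first of $i,j$ completes in $S$, that job's entire remaining requirement — which is at least $p_j(t)$ — has been drawn from $U$. Hence $C'_i+C'_j\le\max(C_i,C_j)+\min(C_i,C_j)=C_i+C_j$, and since no other completion time moves, $S'$ is still optimal while now agreeing with SRPT on $[0,t+\delta)$ for some $\delta>0$.

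It remains to iterate: each exchange strictly postpones the earliest SRPT-violation, and because the instance is finite (finitely many jobs, hence finitely many release dates, and we may take the schedule to preempt only finitely often) this terminates at an optimal schedule identical to the one produced by SRPT. The step I expect to be the main obstacle is the exchange itself — carrying out the continuous-time bookkeeping carefully when $i$ and $j$ are each processed in many fragments interleaved with other jobs, and in particular pinning down $C'_j\le\min(C_i,C_j)$. Justifying that the iteration terminates is routine but needs the standard observation that an optimal schedule preempts only finitely often; one could alternatively sidestep both points by showing that SRPT simultaneously maximizes, for every $\tau$, the number of jobs completed by time $\tau$, and then invoking $\sum_j C_j=\int_0^\infty N_S(t)\,dt$, where $N_S(t)$ is the number of jobs uncompleted at time $t$.
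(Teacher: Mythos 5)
The paper does not prove this theorem; it is cited as a classical result due to Schrage (1968), so there is no proof in the paper against which to compare. Your exchange argument is the standard route and is essentially correct: the key inequality $C'_j\le\min(C_i,C_j)$ is justified exactly as you say, since whichever of $i,j$ finishes first in $S$ has remaining work at least $p_j(t)$ at time $t$ and all of that work is drawn from $U$, so $p_j(t)$ units of $U$ have elapsed by $\min(C_i,C_j)$; combined with $C'_i=\max(C_i,C_j)$ and the fact that no other job's completion moves, the swap does not increase $\sum_j C_j$. You correctly flag the two spots that need care. For termination, the cleanest fix is the one you already mention in passing: rather than iterating exchanges, prove directly (by a single exchange step) that at every time $\tau$ the SRPT schedule has completed at least as many jobs as any other feasible schedule, i.e., SRPT pointwise minimizes the number $N(t)$ of uncompleted jobs, and then conclude via $\sum_j C_j=\int_0^\infty N(t)\,dt$. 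This avoids both the continuous-time bookkeeping over interleaved fragments and the finite-preemption argument, and is the form of the proof most commonly given for SRPT.
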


By the SRPT rule, given any two jobs of the same weight, we should process the job with the shorter amount of remaining work first. SRPT applied to our problem confirms a general intuition that it is never optimal to preempt a job that is revealed to be of high priority; type 0 jobs will always be processed nonpreemptively. This does not change our model, but it does help simplify some aspects of it. Since preemption only occurs on type 1 jobs, we are able to eliminate $\alpha$-points with respect to type 0 jobs. It also suffices to track the number of partially processed jobs as these jobs are all of type 1.

We now present a thresholding policy that minimizes our objective across all non-anticipating policies. Without loss of generality, we sort jobs according to their predicted priorities, breaking ties arbitrarily. This is equivalent to sorting jobs in nonincreasing order of their type 0 probabilities $p_j$ as defined in equations (\ref{eq:pj-1})-(\ref{eq:pj-2}). As we proceed with our policy, we open jobs in this order. Define a constant $$\beta=\frac{\alpha}{1-\alpha}\cdot\frac{\omega_1}{\omega_0-\omega_1}.$$ At each decision point, we observe the state $(\mathcal{S},\ell)$, where $\mathcal{S}$ is the set of unopened jobs and $\ell$ is the number of partially processed type 1 jobs. If either $\mathcal{S}=\emptyset$ or $\ell=0$, we do not have a decision to make; if the former, we complete a partially processed type 1 job, and if the latter, we open a new job. If $\mathcal{S}=\emptyset$ and $\ell=0$, we are done. Thus, we assume that $\ell>0$ and $\mathcal{S}\neq\emptyset$ with $k=\min(\mathcal{S})$, which means that job $k$ is the next job in line. We make our decisions by comparing $p_k$ against $\beta$: if $p_k\leq\beta$, we process the remaining $1-\alpha$ units of a partially completed low priority job and reach our next decision point at completion. If $p_k>\beta$, we open job $k$ and process $\alpha$ units of the job, at which time we learn of job $k$'s true type. If job $k$ is a type 1 job, we are at our new decision point. Otherwise, we process job $k$ to completion for another $1-\alpha$ units and make our next decision when job $k$'s processing is complete.

\begin{algorithm}
	\caption{The $\beta$-Threshold Rule}
	\label{alg:beta}

	\begin{algorithmic}[1]
	\Require{jobs sorted in nonincreasing order of $p_j$}
	
	\Initialize{$t\leftarrow 0$ \Comment{time} \\ $\mathcal{S}\leftarrow[n]$ \Comment{set of unopened jobs} \\ $\ell\leftarrow 0$ \Comment{number of partially completed type 1 jobs}}
	
	\Procedure{CompleteLow}{$t,\mathcal{S},\ell$} \Comment{complete a known low priority job}
		\State $\ell\leftarrow\ell-1$
		\State $t\leftarrow t+(1-\alpha)$
	\EndProcedure
	
	\Procedure{OpenNext}{$t,\mathcal{S},\ell$} \Comment{open a new job, then stop at the $\alpha$-point}
		\State $k\leftarrow\min(\mathcal{S})$
		\State $\mathcal{S}\leftarrow\mathcal{S}\setminus\{k\}$
		\State $t\leftarrow t+\alpha$
		\If{$true(k)=0$}\Comment{nonpreemptively complete a type 0 job}
			\State $t\leftarrow t+(1-\alpha)$
		\Else
			\State $\ell\leftarrow\ell+1$
		\EndIf
	\EndProcedure
	
	\While{$(\mathcal{S},\ell)$ is not $(\emptyset,0)$}
		\If{$\mathcal{S}=\emptyset$} 
			\State \textsc{CompleteLow$(t,\mathcal{S},\ell)$}
			
		\ElsIf{$\ell=0$}
			\State \textsc{OpenNext$(t,\mathcal{S},\ell)$}

		\Else
			\State $k\leftarrow\min(\mathcal{S})$
			\If{$p_k>\beta$}
				\State \textsc{OpenNext$(t,\mathcal{S},\ell)$}
			\Else
				\State \textsc{CompleteLow$(t,\mathcal{S},\ell)$}
			\EndIf
		\EndIf
	\EndWhile
	\end{algorithmic}
	
\end{algorithm}

\begin{theorem}\label{thm:beta-optimal}
	The $\beta$-threshold rule is optimal.
\end{theorem}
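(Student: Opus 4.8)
The plan is to recast the problem as a finite Markov decision process and prove optimality by a local two-step exchange argument nested inside a backward induction. First I would invoke Schrage's SRPT rule (Theorem~\ref{thm:schrage-srpt}), as the text already observes, to conclude that a job revealed to be type~$0$ is completed nonpreemptively; this collapses the state to the pair $(\mathcal{S},\ell)$ and leaves, at any state with $\mathcal{S}\neq\emptyset$ and $\ell\ge 1$, the single binary decision of whether to open $k=\min(\mathcal{S})$ or to complete one of the $\ell$ pending type-$1$ jobs. I would also record that $(\mathcal{S},\ell)$ is a sufficient statistic: writing each unfinished job's completion time as $t+D_j$ with $D_j$ the delay from the current time $t$, the remaining objective splits as $t\sum_{j\in\mathrm{rem}}w_j+\sum_{j\in\mathrm{rem}}w_jD_j$, and since every job is eventually finished and the not-yet-opened jobs' true types are still fresh, $\mathbb{E}[\sum_{j\in\mathrm{rem}}w_j\mid(\mathcal{S},\ell)]$ is policy-independent; so it suffices to minimize $V(\mathcal{S},\ell):=\mathbb{E}[\sum_{j\in\mathrm{rem}}w_jD_j\mid(\mathcal{S},\ell)]$, which indeed depends only on $(\mathcal{S},\ell)$.

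The core step is a two-step exchange. Fix a state $(\mathcal{S},\ell)$ with $\ell\ge 1$ and $k=\min(\mathcal{S})$, and compare the window ``open $k$, then complete a pending low job'' against ``complete a pending low job, then open $k$'', where ``open $k$'' is understood to carry its forced completion if $k$ turns out type~$0$. A short check shows that in each realization of $true(k)$ the two windows consume the same total time and leave the system in the same state --- $(\mathcal{S}\setminus k,\ell-1)$ after elapsed time $2-\alpha$ if $true(k)=0$, and $(\mathcal{S}\setminus k,\ell)$ after elapsed time $1$ if $true(k)=1$ --- so any common continuation contributes equal expected cost from the window's end. Taking expectations over $true(k)$, the difference between the costs incurred inside the two windows equals $\omega_1\alpha-p_k(1-\alpha)(\omega_0-\omega_1)$, which is $\le 0$ exactly when $p_k\ge\beta$. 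Hence placing an ``open $k$'' before an adjacent ``complete low'' is weakly preferable if and only if $p_k\ge\beta$, which is precisely the threshold test in Algorithm~\ref{alg:beta}.

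Finally I would run a backward induction over states ordered lexicographically by $(|\mathcal{S}|,\ell)$, both decreasing; every action leads to a strictly smaller state, and the base cases $\mathcal{S}=\emptyset$ are forced. At a state with $\ell\ge1$, assume the $\beta$-rule is optimal at all smaller states. If $p_k\ge\beta$: the value of ``complete low'' equals that move followed by the $\beta$-rule, which by the inductive hypothesis opens $k$ next, i.e.\ it is ``complete low, open $k$, then $\beta$-rule from the resulting state''; by the exchange lemma this is at least ``open $k$, complete low, then $\beta$-rule from that same state'', an admissible continuation that begins by opening $k$, hence at least the value of ``open $k$'' --- so opening $k$ is optimal. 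If $p_k\le\beta$, the symmetric argument, using that right after opening $k$ the $\beta$-rule's next move is ``complete low'' because $p_{\min(\mathcal{S}\setminus k)}\le p_k\le\beta$, shows completing a low job is optimal. Thus the $\beta$-rule attains $V$ at every state. (Assumption~\ref{asm:wsrpt} is equivalent to $\beta<1$ and merely makes the threshold nontrivial; the optimality argument itself does not use it.) The step I expect to be the main obstacle is the bookkeeping in the exchange lemma: verifying that the two orderings truly realign in elapsed time and in $(\mathcal{S},\ell)$ in each branch of $true(k)$ --- especially the branch where $k$ is type~$0$ and its forced completion sits inside the window --- and that a pending type-$1$ job is available wherever ``complete low'' is invoked (ensured by $\ell\ge1$). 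Folding this local exchange into an induction that sees both branches is exactly what lets me sidestep a more delicate global interchange over arbitrary blocks of actions.
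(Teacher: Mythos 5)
Your proof is correct, and it takes a genuinely different route from the paper's. The paper argues by contradiction with a ``last deviation'' interchange: it assumes an optimal policy disagrees with the $\beta$-rule somewhere, locates the last such decision point $t$, and then performs a \emph{global} rearrangement over a possibly long interval --- $[t,u)$ in case (i), where $u$ is the first time a previously preempted low job is resumed, or $[t,\,t+\ell(1-\alpha)+1)$ in case (ii) --- tallying the net change in completion times over all $z$ jobs opened (or all $\ell$ low jobs finished) in that window. Your argument instead sets up the MDP with sufficient statistic $(\mathcal{S},\ell)$ and reduces everything to a \emph{local} two-step exchange --- swap one ``open $k$'' with one adjacent ``complete low'' --- plus backward induction on $(|\mathcal{S}|,\ell)$. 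The trade-off is instructive: the paper's block interchange is self-contained but forces it to reason carefully about what happens to the $z_0$ type-0 jobs and $z-z_0$ preempted jobs inside $[t,u)$; your local exchange is a two-line calculation (yielding $\omega_1\alpha - p_k(1-\alpha)(\omega_0-\omega_1)$, which has the right sign iff $p_k\gtrless\beta$), but you pay for it by needing the Markov/sufficient-statistic observation to make the induction well-founded and by the bookkeeping you flagged: checking that the two orderings realign in both elapsed time and $(\mathcal{S},\ell)$ in each branch of $true(k)$, and that ``complete low'' is always available ($\ell\ge 1$ before the window, $\ell$ or $\ell+1$ after opening $k$). I verified both of those checks, as well as your cost calculation and the claim that Assumption~\ref{asm:wsrpt} is precisely $\beta<1$, and they all hold. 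One small thing to tighten up in a written version: at the tie $p_k=\beta$ the rule in Algorithm~\ref{alg:beta} completes low, so the chain ``CL, then $\beta$-rule $=$ CL, OK, $\beta$-rule'' fails as an \emph{equality}; but the exchange is cost-neutral there, so either action is optimal and the conclusion still holds.
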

\begin{proof}
	Suppose on the contrary that there exists an optimal policy that does not follow the $\beta$-threshold rule. By assumption, if we run this policy on any instance of our problem input, there exists at least one decision point where the optimal policy observes the given state and makes a decision that deviates from ours. We consider the \textit{last} such decision point and call it time $t$, and further assume that the observed state at that time is $(\mathcal{S},\ell)$, where $\mathcal{S}$ is the set of unopened jobs such that $k = \min(\mathcal{S})$ and $\ell>0$ is the number of partially processed type 1 jobs.
	
	Two things may have occurred at time $t$: $p_k>\beta$ and the optimal policy processes a low priority job for completion at time $t+(1-\alpha)$, or $p_k\leq\beta$ and the optimal policy proceeds by opening job $k$. In both cases, we show that choosing the alternative improves the objective value and ensures that the resulting schedule is consistent with the $\beta$-threshold rule.
	
	\begin{enumerate}[i]
		\item $p_k>\beta$: according to the $\beta$-threshold rule, we should have opened job $k$ at time $t$; the optimal policy decided otherwise and completed a type 1 job (let us call this job $i$) at time $t+(1-\alpha)$. We proceed by identifying another point in time in the schedule generated by the optimal policy to process job $i$, which would allow job $k$ to be processed at time $t$ instead. We then show by an interchange argument that doing so improves the objective value.

		Starting from time $t$, trace time forward in the schedule generated by the optimal policy. Since $t$ is the last decision point that deviates from the $\beta$-threshold rule by assumption and the set of unopened jobs at the next decision point $t+(1-\alpha)$ remains unchanged so that $k=\min(\mathcal{S})$, the optimal policy opens job $k$ at time $t+(1-\alpha)$. We continue to trace time forward until some time $u$, when the optimal policy begins processing the remaining $1-\alpha$ units of a previously preempted, true low priority job for the first time since completing job $i$ at time $t+(1-\alpha)$. We show that within the interval $[t,u)$, we can improve the objective by delaying the completion of job $i$ to $C_i=u$ and moving up the schedule in $[t+1-\alpha,u)$ by $1-\alpha$ units to $[t,u-1+\alpha)$.
		
		By our assumptions, $u$ is the time at which either $\mathcal{S}=\emptyset$ (every job has been opened), or the next job's type 0 probability falls below $\beta$, whichever happens first. Therefore, all jobs opened in $[t+1-\alpha,u)$ are above the probability threshold $\beta$. Let $z\geq 1$ denote the number of jobs opened in $[t+1-\alpha,u)$ including job $k$. Among these $z$ jobs, suppose there are $z_0$ jobs of true type 0 that are completed immediately where $z_0=\sum_{j=k}^{k+z-1}\mathbf{1}\left\{true(j)=0\right\}$. The remaining $z-z_0$ jobs are revealed to be of true type 1 after $\alpha$ units of processing, then are preempted. These preempted jobs are not processed until at least time $u$.
		
		By interchange, each of the $z_0$ type 0 jobs complete $1-\alpha$ units earlier. On the other hand, completion of job $i$, a type 1 job, is delayed by $u-(t+1-\alpha) = z_0+(z-z_0)\alpha$, which sums one unit of delay for every completed type 0 job and an $\alpha$ unit of delay for every preempted type 1 job. None of the other jobs are affected by this interchange. Thus, the overall change to the objective is
		\begin{eqnarray}
			&&-z_0\omega_0(1-\alpha)+\omega_1\left(z\alpha+z_0(1-\alpha)\right)\label{eq:thm3-change1}\\
			&=&-z_0(\omega_0-\omega_1)(1-\alpha)+\alpha \omega_1z \nonumber\\
			&=& -(\omega_0-\omega_1)(1-\alpha)\left(z_0-\frac{\alpha}{1-\alpha}\cdot\frac{\omega_1}{\omega_0-\omega_1} z\right)\nonumber\\
			&=&-(\omega_0-\omega_1)(1-\alpha)\left(z_0-\beta z\right)\nonumber\\
			&=&-(\omega_0-\omega_1)(1-\alpha)\sum_{j=k}^{k+z-1}\left(\mathbf{1}\left\{true(j)=0\right\}-\beta\right).\nonumber
		\end{eqnarray}
		
		In expectation, $$-(\omega_0-\omega_1)(1-\alpha) \sum_{j=k}^{k+z-1}\left(p_j-\beta\right)<0$$ since $p_j>\beta$ for each $j=k,\dots,k+z-1$, which establishes a contradiction. We have also shown how to choose the interval $[t,u)$ for interchange to ensure that the schedule is consistent with the $\beta$-threshold rule from time $t$ onward.
				
		\item $p_k\leq\beta$: according to the $\beta$-threshold rule, we should have completed a type 1 job for completion at time $t+1-\alpha$. Instead, the optimal policy opened job $k$. The proof proceeds similarly to the above in that we first identify an appropriate point in time in the schedule generated by the optimal policy to open job $k$, then use interchange. The main difference lies in that the amount of delay from opening job $k$ is not immediately clear, since that depends on job $k$'s true type.
		
		We first argue that in fact, regardless of type, job $k$ completes at $t+1$ in the schedule generated by the optimal policy. This is trivially true if job $k$ is a type 0 job. Otherwise, the optimal policy meets its next decision point at $t+\alpha$, where the set of unopened jobs is $\mathcal{S} = [n]\setminus [k]$ and there are now $\ell+1$ true type 1 jobs that are not yet fully processed including job $k$. By our assumption that time $t$ is the last decision point at which the optimal policy deviates from the $\beta$-threshold rule, the optimal policy completes a type 1 job at $(t+\alpha)+(1-\alpha) = t+1$ since $\beta\geq p_k\geq p_{k+1}$ at time $t+\alpha$. We are free to label this job as job $k$. Then, starting from $t+1$, 	the optimal policy will complete the remaining $\ell$ type 1 jobs in succession, completing the last type 1 job at time $t+\ell(1-\alpha)+1$.
		
		We show that within the interval $[t,t+\ell(1-\alpha)+1)$, we can improve the objective by delaying the opening of job $k$ to time $t+\ell(1-\alpha)$, when $\ell$ type 1 jobs have each completed the remaining $1-\alpha$ units of work. Even with this interchange, since $\beta\geq p_{k+1}$, job $k$ will be processed nonpreemptively regardless of type so that $C_k=t+\ell(1-\alpha)+1$. None of the other jobs are affected. The overall change to the objective is
		\begin{eqnarray}
			&&-\ell \omega_1  + w_k\ell(1-\alpha)\label{eq:thm3-change2}\\
			&=& -\ell \omega_1  + \left(\omega_1+(\omega_0-\omega_1)\cdot\mathbf{1}\left\{true(k)=0\right\}\right)\ell(1-\alpha)\qquad\qquad\qquad\ \text{by \eqref{eq:job-weight}}\nonumber\\
			&=& -\ell \omega_1  +\omega_1\ell(1-\alpha)+(\omega_0-\omega_1)\mathbf{1}\left\{true(k)=0\right\}\ell(1-\alpha)\nonumber\\
			&=&-\ell \alpha \omega_1+(\omega_0-\omega_1)\mathbf{1}\left\{true(k)=0\right\}\ell(1-\alpha)\nonumber\\
			&=&-\ell(\omega_0-\omega_1)(1-\alpha)\left(\frac{\alpha}{1-\alpha}\cdot\frac{\omega_1}{\omega_0-\omega_1} -\mathbf{1}\left\{true(k)=0\right\}\right)\nonumber\\
			&=&-\ell(\omega_0-\omega_1)(1-\alpha)\left(\beta-\mathbf{1}\left\{true(k)=0\right\}\right).\nonumber
		\end{eqnarray}
		In expectation, $$-\ell(\omega_0-\omega_1)(1-\alpha)\left(\beta-p_k\right)\leq 0$$ since $p_k\leq\beta$ by assumption. The resulting schedule is consistent with the $\beta$-threshold rule from time $t$ onward. This establishes the desired contradiction and concludes the proof.\qedhere
	\end{enumerate}
\end{proof}

Given our results in Theorem \ref{thm:beta-optimal}, we now provide a technical reason behind Assumption \ref{asm:wsrpt} which requires $\omega_1 < \omega_0 (1-\alpha)$. Suppose on the contrary that $\omega_1\geq \omega_0 (1-\alpha)$. Rearranging inequalities, this also implies that $\alpha\geq 1-\omega_1/\omega_0$. Then, $$\beta=\frac{\alpha}{1-\alpha}\cdot\frac{\omega_1}{\omega_0-\omega_1}\geq \frac{\alpha\omega_0}{\omega_0-\omega_1} = \frac{\alpha}{1-\omega_1/\omega_0}\geq 1$$ and so by the $\beta$-threshold rule we would complete every job nonpreemptively. Because this is not a particularly interesting case, we focus our efforts where preemption offers room for improvement. All the same, we provide a performance upper bound for this case in Corollary \ref{cor:asm-guarantee}.

The $\beta$-threshold rule may seem arbitrary at first, but there is an intuitive explanation for it that reveals a strong connection with the celebrated $c\mu$ rule. Recall that for any job $j$, $\mathbb{E}(w_j) = \omega_1+(\omega_0-\omega_1)p_j$ by Equation \eqref{eq:job-weight}.

\begin{proposition}
$$\frac{\mathbb{E}(w_j)}{1}>\frac{\omega_1}{1-\alpha}\iff p_j>\beta.$$
\end{proposition}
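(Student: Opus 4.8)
The plan is to prove the equivalence by a chain of reversible algebraic manipulations, starting from the left-hand inequality and isolating $p_j$. First I would substitute the identity $\mathbb{E}(w_j) = \omega_1 + (\omega_0 - \omega_1)p_j$ from Equation \eqref{eq:job-weight} into $\mathbb{E}(w_j) > \omega_1/(1-\alpha)$, obtaining $\omega_1 + (\omega_0 - \omega_1)p_j > \omega_1/(1-\alpha)$. Subtracting $\omega_1$ from both sides and simplifying the right-hand side gives $(\omega_0 - \omega_1)p_j > \omega_1\left(\frac{1}{1-\alpha} - 1\right) = \omega_1 \cdot \frac{\alpha}{1-\alpha}$.

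Next I would divide both sides by $\omega_0 - \omega_1$, which is legitimate and direction-preserving because $\omega_0 > \omega_1 > 0$ by hypothesis, yielding $p_j > \frac{\alpha}{1-\alpha}\cdot\frac{\omega_1}{\omega_0 - \omega_1} = \beta$, exactly the definition of $\beta$. The only points requiring a word of justification are that $\omega_0 - \omega_1 > 0$ and $1 - \alpha > 0$ (the latter since $\alpha \in (0,1)$), so that no inequality flips and no division by zero occurs; I would state these explicitly. Since every step is an equivalence rather than a one-way implication, the biconditional follows immediately by reading the chain in both directions.

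There is no real obstacle here — the statement is essentially a restatement of the definition of $\beta$ after clearing denominators — so the "hard part" is merely bookkeeping: keeping track of signs when multiplying through by $1-\alpha$ and $\omega_0 - \omega_1$, and presenting the algebra cleanly (ideally in a single displayed \texttt{align*} chain of $\iff$'s). I would close by remarking that this gives the advertised interpretation: the $\beta$-threshold rule opens job $j$ precisely when its expected WSPT ratio $\mathbb{E}(w_j)/1$ exceeds $\omega_1/(1-\alpha)$, the effective rate at which a partially processed low-priority job can be cleared, which is the $c\mu$-rule-style comparison alluded to in the surrounding text.
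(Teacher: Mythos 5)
Your proof is correct and follows essentially the same chain of reversible algebraic manipulations as the paper's proof, differing only in cosmetic ordering (you isolate $p_j$ directly rather than first clearing the $1-\alpha$ denominator). The explicit justification that $\omega_0-\omega_1>0$ and $1-\alpha>0$ preserve the inequality direction is a sound addition that the paper leaves implicit.
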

\begin{proof}
	Expanding the left hand side of the inequality,
	\begin{align*}
		(1-\alpha)\left(\omega_1+(\omega_0-\omega_1)p_j\right)>\omega_1 &\iff (1-\alpha)(\omega_0-\omega_1)p_j >\alpha\omega_1\\
		&\iff p_j>\frac{\alpha}{1-\alpha}\cdot\frac{\omega_1}{\omega_0-\omega_1}\\
		&\iff p_j>\beta,
	\end{align*}
	which is equivalent to the conditions given in the $\beta$-threshold rule. Inequality in the other direction holds analogously.
\end{proof}

At every decision point, applying the $\beta$-threshold rule is equivalent to comparing the $c\mu$ of an unopened job $j$ against the $c\mu$ of a known low priority job with $1-\alpha$ units of residual work, and choosing the job with the higher $c\mu$ value.


Depending on our chosen parameter values, the $\beta$-threshold rule may give rise to three modes of decision making: a \textit{nonpreemptive} policy, a \textit{preemptive} policy, and a \textit{hybrid} policy that switches from a preemptive policy to a nonpreemptive policy sometime in between.

Let us first assume that jobs are sorted in WSPT order of predicted priorities. A \textit{nonpreemptive} policy completes every job in sorted order without preemption. A schedule generated by a nonpreemptive policy is a nonpreemptive schedule. A policy is \textit{preemptive} if, opening jobs in sorted order, every type 1 job is preempted at its $\alpha$-point. These low priority jobs will only be revisited once all $n$ jobs have been opened and every high priority job has completed its processing. The resulting schedule is a preemptive schedule. Preemptive and nonpreemptive schedules are two non-adaptive special cases of a schedule generated by the $\beta$-threshold rule.

The \textit{hybrid} policy, on the other hand, is an adaptive policy that switches between the preemptive and nonpreemptive regimes based on the predictive label of the job being processed. More specifically, a preemptive strategy is used on jobs that are expected to be type 0, while a nonpreemptive strategy is used on the remaining jobs that are predicted to be non-urgent. Given our initial sort, we make this switch exactly once.

In what follows, we specify the conditions that give rise to each of our policies.

\begin{corollary}\label{cor:beta-cond-np}
	A nonpreemptive policy is optimal if $$\min\left(\rho(1-\beta),\beta(1-\rho)\right) \leq \rho(1-\beta)\varepsilon_0+\beta(1-\rho)\varepsilon_1\text{  and  }\rho\leq\beta.$$
\end{corollary}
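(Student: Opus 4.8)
The plan is to reduce the statement entirely to the optimality of the $\beta$-threshold rule (Theorem \ref{thm:beta-optimal}): under the two stated hypotheses I will show that Algorithm \ref{alg:beta} never actually preempts, so the schedule it produces on \emph{every} realization is precisely the nonpreemptive schedule (jobs completed in sorted order, $C_j=j$), whence that schedule is optimal.

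First I would translate the hypotheses into a statement about the probabilities $p_j$. Since jobs are sorted by $p_j$ and $p_j$ takes only the two values $p_{\mathrm{hi}}:=\mathbb P(true(j)=0\mid pred(j)=0)$ and $p_{\mathrm{lo}}:=\mathbb P(true(j)=0\mid pred(j)=1)$ of \eqref{eq:pj-1}--\eqref{eq:pj-2}, with $p_{\mathrm{lo}}\le p_{\mathrm{hi}}$, it suffices to prove $p_{\mathrm{hi}}\le\beta$. Write $a=\rho(1-\beta)$ and $b=\beta(1-\rho)$; then the hypothesis $\rho\le\beta$ is equivalent to $a\le b$, so $\min(a,b)=a$ and the first hypothesis reads $a\le a\varepsilon_0+b\varepsilon_1$, i.e. $(1-\varepsilon_0)\rho(1-\beta)\le\varepsilon_1(1-\rho)\beta$. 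Clearing denominators in \eqref{eq:pj-1} shows this is exactly $p_{\mathrm{hi}}\le\beta$. (The sole role of the auxiliary hypothesis $\rho\le\beta$ is to pick the correct branch of the $\min$: on its own the first inequality could instead encode $p_{\mathrm{lo}}\ge\beta$, which is the opposite, fully preemptive regime.) Since $p_{\mathrm{lo}}\le p_{\mathrm{hi}}\le\beta$, we conclude $p_j\le\beta$ for every job $j$.

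Next I would trace Algorithm \ref{alg:beta} under "$p_j\le\beta$ for all $j$." At any decision point $(\mathcal S,\ell)$ with $\ell>0$ and $\mathcal S\neq\emptyset$ we have $p_{\min(\mathcal S)}\le\beta$, so the rule calls \textsc{CompleteLow}; hence the only way a job is ever opened is through the $\ell=0$ branch via \textsc{OpenNext}. So suppose \textsc{OpenNext} opens job $k$. If $true(k)=0$ it is completed at once (another $1-\alpha$ units) with $\ell$ still $0$; if $true(k)=1$ then $\ell$ becomes $1$, and at the very next decision point either $\mathcal S=\emptyset$ or $p_{\min(\mathcal S)}\le\beta$, so \textsc{CompleteLow} is invoked and job $k$ is finished $1-\alpha$ units after its $\alpha$-point. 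Either way job $k$ occupies exactly the unit interval following its opening, with no other job touched in between, so $\ell$ never exceeds $1$, preemption never occurs, and jobs are completed in sorted order with $C_j=j$ — this is exactly the nonpreemptive schedule. By Theorem \ref{thm:beta-optimal} the $\beta$-threshold rule is optimal, so the nonpreemptive policy is optimal.

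The only substantive point is the one-line algebraic identification of the hypotheses with $p_{\mathrm{hi}}\le\beta$; the step that most needs care is the bookkeeping in the algorithm trace — one must verify that a freshly opened true type-$1$ job is cleared at the \emph{immediately following} decision point, rather than leaving $\ell$ positive and later interleaving with subsequent openings, even though the $\ell=0$ branch keeps opening jobs. An alternative, more computational route would be to write the expected costs of the nonpreemptive, hybrid, and preemptive schedules explicitly (tracking the random number of predicted-$0$ jobs) and compare them; routing through the already-established optimality of the $\beta$-threshold rule is cleaner and avoids that binomial bookkeeping.
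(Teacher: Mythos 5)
Your proof is correct and follows the same route as the paper: reduce to Theorem \ref{thm:beta-optimal}, show that under the stated hypotheses $\beta \geq \mathbb{P}(true(\cdot)=0\mid pred(\cdot)=0)\geq p_j$ for all $j$ by rearranging the first inequality (with $\rho\leq\beta$ selecting the $\rho(1-\beta)$ branch of the $\min$), and conclude the $\beta$-threshold rule never preempts. You simply spell out the algebra and the algorithm trace that the paper leaves as "rearranging the inequality."
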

\begin{proof}
	The statement follows directly from Theorem \ref{thm:beta-optimal}. We employ a nonpreemptive policy if $\beta \geq\mathbb{P}(true(\cdot)=0|pred(\cdot)=0)$ where the probability is as defined in \eqref{eq:pj-1}-\eqref{eq:pj-2}. Rearranging the inequality, we obtain the result.
\end{proof}

\begin{corollary}\label{cor:beta-cond-p}
	A preemptive policy is optimal if $$\min\left(\rho(1-\beta),\beta(1-\rho)\right) \leq \rho(1-\beta)\varepsilon_0+\beta(1-\rho)\varepsilon_1\text{  and  } \rho>\beta.$$
\end{corollary}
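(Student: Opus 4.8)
The proof mirrors that of Corollary~\ref{cor:beta-cond-np}, with the two posterior probabilities and the two directions of inequality interchanged. The plan is to (a) characterize exactly when the $\beta$-threshold rule degenerates into a purely preemptive policy, (b) invoke Theorem~\ref{thm:beta-optimal} to conclude optimality in that regime, and (c) do the denominator-clearing algebra that puts the resulting condition into the stated form.

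First I would pin down step (a). The $\beta$-threshold rule opens the next job $k=\min(\mathcal{S})$ precisely when $p_k>\beta$, and otherwise completes a partially processed type~1 job; since jobs are sorted in nonincreasing order of $p_j$, the value $p_{\min(\mathcal{S})}$ at the front of the queue is nonincreasing as the policy proceeds. Hence every decision point with $\ell>0$ and $\mathcal{S}\neq\emptyset$ results in opening a new job if and only if $p_j>\beta$ for every job $j$, equivalently $\beta<\min_j p_j = \mathbb{P}(true(\cdot)=0|pred(\cdot)=1)$, this last equality because a job predicted non-urgent has the smaller posterior (the discussion following \eqref{eq:pj-2}). Under this condition no partially processed type~1 job is completed until all $n$ jobs have been opened, which is exactly the definition of a preemptive policy, and by Theorem~\ref{thm:beta-optimal} it is optimal. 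To justify the \emph{weak} inequality in the statement, I would treat the boundary $\beta=\mathbb{P}(true(\cdot)=0|pred(\cdot)=1)$ using case~(ii) of the proof of Theorem~\ref{thm:beta-optimal}: when $p_k=\beta$, converting a complete-low step into an open step changes the objective by $-\ell(\omega_0-\omega_1)(1-\alpha)(\beta-p_k)=0$, so a preemptive policy is still (weakly) optimal there.

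Then for steps (b)--(c): clearing denominators in $\beta\le\frac{\varepsilon_0\rho}{\varepsilon_0\rho+(1-\varepsilon_1)(1-\rho)}$ gives, after a one-line rearrangement, $\beta(1-\rho)\le\rho(1-\beta)\varepsilon_0+\beta(1-\rho)\varepsilon_1$; and the extra hypothesis $\rho>\beta$ is equivalent to $\beta(1-\rho)<\rho(1-\beta)$, so under it $\min\bigl(\rho(1-\beta),\beta(1-\rho)\bigr)=\beta(1-\rho)$. The displayed condition in the corollary is therefore exactly $\beta\le\mathbb{P}(true(\cdot)=0|pred(\cdot)=1)$, which by the above yields optimality of a preemptive policy. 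The only genuinely delicate point is step (a)---arguing carefully that ``$p_{\min(\mathcal{S})}>\beta$ at every decision point that offers a choice'' collapses to the single inequality $\beta<\min_j p_j$, and handling the boundary so that the weak inequality is justified; the remaining algebra is identical in spirit to that used for Corollary~\ref{cor:beta-cond-np}.
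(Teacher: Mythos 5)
Your proof is correct and matches the paper's (implicit) argument: the paper proves Corollary~\ref{cor:beta-cond-np} by observing that nonpreemption holds when $\beta\ge\mathbb{P}(true(\cdot)=0\mid pred(\cdot)=0)$ and rearranging, and Corollary~\ref{cor:beta-cond-p} follows by the symmetric step $\beta\le\mathbb{P}(true(\cdot)=0\mid pred(\cdot)=1)$, which is exactly what you carry out, together with noting that $\rho>\beta$ forces $\min\bigl(\rho(1-\beta),\beta(1-\rho)\bigr)=\beta(1-\rho)$. Your extra care in observing that $p_{\min(\mathcal{S})}$ is nonincreasing (so a single inequality $\beta\le\min_j p_j$ governs every decision point) and in handling the boundary $p_k=\beta$ via the zero-cost interchange in case (ii) of Theorem~\ref{thm:beta-optimal} is a correct filling-in of details the paper leaves unstated.
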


\begin{corollary}\label{cor:beta-cond-hyb}
	A hybrid policy is optimal if $$\rho(1-\beta)\varepsilon_0+\beta(1-\rho)\varepsilon_1 < \min\left(\rho(1-\beta),\beta(1-\rho)\right).$$
\end{corollary}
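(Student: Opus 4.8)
The plan is to derive this corollary from Theorem~\ref{thm:beta-optimal} in exactly the way Corollaries~\ref{cor:beta-cond-np} and~\ref{cor:beta-cond-p} are obtained: by unpacking what the $\beta$-threshold rule does once we observe that, in the two-class model, the sorted sequence of type-$0$ probabilities takes only two values. Every job predicted urgent has $p_j$ equal to the common value $p^{(0)}:=\mathbb{P}(true(\cdot)=0\mid pred(\cdot)=0)$ of~\eqref{eq:pj-1}, and every job predicted non-urgent has $p_j$ equal to $p^{(1)}:=\mathbb{P}(true(\cdot)=0\mid pred(\cdot)=1)$ of~\eqref{eq:pj-2}, with $p^{(0)}\ge p^{(1)}$ because $\varepsilon_0,\varepsilon_1\le 1/2$. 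So after the initial WSPT sort the list is a block of predicted-$0$ jobs followed by a block of predicted-$1$ jobs, and the rule only ever compares the single threshold $\beta$ against these two numbers.

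First I would characterize the hybrid regime in terms of $p^{(0)}$ and $p^{(1)}$. The hybrid policy is exactly the case $p^{(1)}\le\beta<p^{(0)}$: while the next unopened job lies in the predicted-$0$ block we have $p_k=p^{(0)}>\beta$, so the $\beta$-threshold rule opens it and (by the SRPT observation of Theorem~\ref{thm:schrage-srpt}) completes it immediately if it is revealed urgent and preempts it otherwise; once the next unopened job lies in the predicted-$1$ block we have $p_k=p^{(1)}\le\beta$, so the rule drains the preempted work and thereafter completes each predicted-$1$ job nonpreemptively. Thus $p^{(0)}>\beta$ together with $p^{(1)}\le\beta$ say precisely that the rule makes its single switch from the preemptive to the nonpreemptive mode, i.e.\ follows the hybrid policy.

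Next I would convert each inequality into the claimed algebraic form by clearing denominators. The one point that needs a word of care is that Assumption~\ref{asm:wsrpt} forces $\beta<1$, so $1-\beta>0$ and every factor I multiply or divide by is strictly positive, hence all inequality directions are preserved. Expanding $p^{(0)}>\beta$ with $p^{(0)}=\tfrac{(1-\varepsilon_0)\rho}{(1-\varepsilon_0)\rho+\varepsilon_1(1-\rho)}$ and rearranging gives $\rho(1-\beta)\varepsilon_0+\beta(1-\rho)\varepsilon_1<\rho(1-\beta)$, and expanding $p^{(1)}\le\beta$ with $p^{(1)}=\tfrac{\varepsilon_0\rho}{\varepsilon_0\rho+(1-\varepsilon_1)(1-\rho)}$ and rearranging gives $\rho(1-\beta)\varepsilon_0+\beta(1-\rho)\varepsilon_1\le\beta(1-\rho)$. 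The hypothesis $\rho(1-\beta)\varepsilon_0+\beta(1-\rho)\varepsilon_1<\min(\rho(1-\beta),\beta(1-\rho))$ implies both of these, so by Theorem~\ref{thm:beta-optimal} the hybrid policy is optimal.

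There is no genuine obstacle here — the argument is the bookkeeping of two one-line manipulations — but the place to be slightly careful is the degenerate instances in which one of the two predicted blocks is empty, so that no real "switch" occurs. As in the previous two corollaries, I would remark that such instances simply collapse to the purely preemptive or purely nonpreemptive policy and do not affect optimality, and that on the measure-zero boundary $p^{(1)}=\beta$ the tie-break in the $\beta$-threshold rule still completes predicted-$1$ jobs nonpreemptively, so nothing is lost by regarding it as part of the hybrid regime.
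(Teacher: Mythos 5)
Your proof is correct and follows essentially the same route the paper takes (and that Corollary~\ref{cor:beta-cond-np} makes explicit): the hybrid regime is exactly $p^{(1)}\le\beta<p^{(0)}$, i.e., condition~\eqref{eq:hybrid-condition}, and clearing denominators in~\eqref{eq:pj-1}--\eqref{eq:pj-2} rearranges those two inequalities into the stated algebraic form, with Theorem~\ref{thm:beta-optimal} then giving optimality. Your remark that the strict inequality on the $\beta(1-\rho)$ side is slightly stronger than the $p^{(1)}\le\beta$ boundary actually needs is a fair observation about the paper's phrasing, and your handling of the degenerate single-block cases is consistent with the paper's treatment.
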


The $\beta$-threshold rule admits a hybrid policy if 
\begin{equation}
	\mathbb{P}(true(\cdot)=0|pred(\cdot)=1)\leq\beta<\mathbb{P}(true(\cdot)=0|pred(\cdot)=0).\label{eq:hybrid-condition}
\end{equation}
It follows naturally from Bayes' rule that the gap between the two conditional probabilities in \eqref{eq:hybrid-condition} is large when prediction error is low. When that is the case, $\beta$ is much more likely to fall in between these two probabilities for our chosen parameter values, resulting in an adaptive hybrid policy. On the other hand, when we have a predictor with high prediction error, this conditional probability gap is likely to be smaller, in which case a non-adaptive policy would be best.

\section{Analysis of the $\beta$-Threshold Rule}

\subsection{Performance Analysis}

We now quantify the performance of our policies as a function of prediction error. More specifically, we fix the number of urgent jobs among our $n$ available jobs, then obtain exact expressions for expected performance conditional on this quantity, which we denote $n_0$. Performance is measured against the offline optimum $\mathsf{OPT}$ given $n_0$. This focus on conditional expectation allows us to remove one layer of randomness from our problem and isolate the effects of misprediction. The expressions we derive in this section will also be useful for competitive analysis in our next section. Extending our results to obtain expressions for unconditional expectations of performance can be easily done by using the first and second moments of $n_0$.

\begin{proposition}
	Let $C_j^\phi$ and $C_j^\alpha$ each denote job $j$'s completion time in nonpreemptive and preemptive schedules, respectively. Given $n_0$,
	\begin{align}
		\mathsf{OPT} &= (\omega_0-\omega_1)\frac{n_0(n_0+1)}{2}+\omega_1\frac{n(n+1)}{2} \label{eq:obj-offline}\\
		\mathbb{E}\left(\sum_{j=1}^n w_jC_j^\phi\middle|n_0\right) &= \mathsf{OPT} + (\omega_0-\omega_1)\mathbb{E}(X|n_0)\label{eq:obj-nonpmtv}\\
		\mathbb{E}\left(\sum_{j=1}^n w_jC_j^{\alpha}\middle|n_0\right) &= \mathsf{OPT} +  \alpha \omega_0 \mathbb{E}(X|n_0)+\alpha\omega_1 \mathbb{E}(Y|n_0) \label{eq:obj-fullpmtv}
	\end{align}
	 where, letting $n_1=n-n_0$,
	 \begin{align}
	 	\mathbb{E}(X|n_0)=\frac{(\varepsilon_0+\varepsilon_1)n_0n_1}{2},\ \mathbb{E}(Y|n_0)=\frac{n_1(n_1-1)}{2}.\label{eq:pairwise_exp}
	 \end{align}
	 \label{prop:base-perf}
\end{proposition}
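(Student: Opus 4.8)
The plan is to reduce all three identities to a single combinatorial bookkeeping carried out on the fixed \emph{opening order} — the predicted-WSPT order, in which the jobs predicted type $0$ precede the jobs predicted type $1$, ties broken arbitrarily. Write $i\prec j$ when job $i$ is opened before job $j$. We condition on $n_0$ throughout; given $n_0$, the $n_0$ true type-$0$ jobs form a uniformly random $n_0$-subset, and the predicted labels are then conditionally independent with the probabilities of Table \ref{tab:pred-matrix}. Let $X$ count the unordered pairs $\{i,j\}$ with $true(i)=0$, $true(j)=1$, and $j\prec i$ (a ``misordered'' urgent/non-urgent pair), and let $Y=\binom{n_1}{2}$ count the pairs of true type-$1$ jobs, which is deterministic given $n_0$; thus $\mathbb{E}(Y\mid n_0)=\tfrac{n_1(n_1-1)}{2}$. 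To get $\mathbb{E}(X\mid n_0)$, note that given $n_0$ there are exactly $n_0 n_1$ urgent/non-urgent pairs, and for any one of them, conditioning on its two predicted labels — with ties, w.l.o.g., broken by a fair coin — it is opened in the wrong order with probability $\tfrac12\varepsilon_1(1-\varepsilon_0)+\tfrac12\varepsilon_0(1-\varepsilon_1)+\varepsilon_0\varepsilon_1=\tfrac{\varepsilon_0+\varepsilon_1}{2}$ (disagreeing labels force the order; agreeing labels give the type-$1$ job priority with probability $\tfrac12$). Linearity then gives $\mathbb{E}(X\mid n_0)=\tfrac{(\varepsilon_0+\varepsilon_1)n_0 n_1}{2}$, establishing \eqref{eq:pairwise_exp}.

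For \eqref{eq:obj-offline}, Theorem \ref{thm:smith-wspt} puts the $n_0$ type-$0$ jobs in the first $n_0$ unit slots and the $n_1$ type-$1$ jobs in the last $n_1$ slots, so $\mathsf{OPT}=\omega_0\sum_{k=1}^{n_0}k+\omega_1\sum_{k=n_0+1}^{n}k$; writing $\omega_0=\omega_1+(\omega_0-\omega_1)$ and collecting terms yields $(\omega_0-\omega_1)\tfrac{n_0(n_0+1)}{2}+\omega_1\tfrac{n(n+1)}{2}$. For \eqref{eq:obj-nonpmtv}, in the nonpreemptive schedule $C_j^{\phi}$ is exactly the position of $j$ in the opening order, so $C_j^{\phi}=1+\#\{i:i\prec j\}$, whence $\sum_j w_j C_j^{\phi}=\sum_j w_j+\sum_{\{i,j\}}(\text{weight of the later of }i,j)$. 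A same-type pair contributes its common weight regardless of order, while an urgent/non-urgent pair contributes $\omega_1$ when ordered as in $\mathsf{OPT}$ and $\omega_0$ when misordered; hence $\sum_j w_j C_j^{\phi}=\mathsf{OPT}+(\omega_0-\omega_1)X$. Taking $\mathbb{E}(\cdot\mid n_0)$ and inserting $\mathbb{E}(X\mid n_0)$ gives \eqref{eq:obj-nonpmtv}.

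The main work is \eqref{eq:obj-fullpmtv}. In the preemptive schedule the timeline splits into an \emph{opening phase} of total length $n_0+\alpha n_1$, during which every type-$0$ job is opened and completed (nonpreemptively) and every type-$1$ job is processed to its $\alpha$-point and preempted, followed by a \emph{completion phase} in which the $n_1$ preempted jobs each receive their residual $1-\alpha$ units. For a type-$0$ job $j$, $C_j^{\alpha}=\#\{i:true(i)=0,\,i\preceq j\}+\alpha\,\#\{i:true(i)=1,\,i\prec j\}$; summing over type-$0$ jobs turns the first counts into $\binom{n_0}{2}+n_0=\tfrac{n_0(n_0+1)}{2}$ and the second counts into exactly $X$, so $\sum_{true(j)=0}\omega_0 C_j^{\alpha}=\omega_0\bigl(\tfrac{n_0(n_0+1)}{2}+\alpha X\bigr)$. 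For type-$1$ jobs, all share weight $\omega_1$ and residual work $1-\alpha$, so $\sum_{true(j)=1}C_j^{\alpha}$ is independent of the completion-phase order and equals $\sum_{k=1}^{n_1}\bigl(n_0+\alpha n_1+k(1-\alpha)\bigr)=n_1(n_0+\alpha n_1)+(1-\alpha)\tfrac{n_1(n_1+1)}{2}$. Regrouping the $\alpha$-dependent terms — the one place that needs care — gives $n_1(n_0+\alpha n_1)+(1-\alpha)\tfrac{n_1(n_1+1)}{2}=\bigl(n_0 n_1+\tfrac{n_1(n_1+1)}{2}\bigr)+\alpha\tfrac{n_1(n_1-1)}{2}$, so that $\sum_j w_j C_j^{\alpha}=\mathsf{OPT}+\alpha\omega_0 X+\alpha\omega_1 Y$; taking $\mathbb{E}(\cdot\mid n_0)$ and substituting $\mathbb{E}(X\mid n_0)$ and $\mathbb{E}(Y\mid n_0)$ yields \eqref{eq:obj-fullpmtv}.

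I expect the preemptive identity to be the crux: the two-phase expression for $C_j^{\alpha}$ must be set up carefully, one must observe that permuting the completion phase is cost-neutral (equal weights and equal residuals), and the collapse of the $\alpha$-terms into $\alpha\omega_1\binom{n_1}{2}$ is the step most prone to algebra slips. A minor secondary subtlety is the ``w.l.o.g.\ fair-coin tie-break'' used for $\mathbb{E}(X\mid n_0)$: it is legitimate because $\mathbb{E}(X\mid n_0)$ does not depend on the tie-break rule, which in turn follows from the symmetry of the (uniformly random, given $n_0$) assignment of true types across jobs.
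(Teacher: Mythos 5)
Your proof is correct and takes essentially the same approach as the paper's: condition on $n_0$, express both online objectives as $\mathsf{OPT}$ plus a per-pair penalty ($X$ for inversions, $Y$ for type-$1$/type-$1$ pairs), and compute conditional expectations. The only cosmetic differences are that you compute $\mathbb{E}(X\mid n_0)$ pairwise with a fair-coin tie-break (the paper expands into four indicator sums with one cancellation) and you do explicit two-phase bookkeeping for the preemptive schedule where the paper gives the equivalent marginal-delay-per-pair argument; both routes yield identical results.
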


\begin{proof}
The offline optimum is easy to compute by WSPT:
\begin{align*}
	\mathsf{OPT}=\sum_{j=1}^n j w_j &= \sum_{j=1}^{n_0}\omega_0 j+\sum_{j=n_0+1}^n \omega_1 j\\
		&= (\omega_0-\omega_1)\frac{n_0(n_0+1)}{2}+\omega_1\frac{n(n+1)}{2}.
\end{align*}

For (\ref{eq:obj-nonpmtv}) and (\ref{eq:obj-fullpmtv}), recall that sorting jobs in WSPT order of predicted priorities results in a number of possible permutations of true priorities. We evaluate the objective for some fixed permutation $\pi$ of true types, then take the expectation across all possible permutations.

In a nonpreemptive schedule, each pair of jobs whose true types are out of order, i.e., a pair of (true 1, true 0), adds $\omega_0-\omega_1$ to the objective relative to the offline optimum. Letting $X$ denote the number of such inversions in $\pi$, $\sum_{j=1}^n w_jC_j^\phi = \mathsf{OPT} + (\omega_0-\omega_1)X$.

In a preemptive schedule, the cost of one inversion is $\alpha\omega_0$, since a type 1 job preempts after processing $\alpha$ units and allows a type 0 job to be processed and completed before resuming its $1-\alpha$ units of residual work. This schedule also incurs a cost of $\alpha\omega_1$ for every pair of (true 1, true 1) jobs, because the policy requires that we open and preempt both jobs before we begin processing any remaining work for completion. Thus, letting $Y$ denote the number of (true 1, true 1) pairs in $\pi$, $\sum_{j=1}^n w_jC_j^\alpha = \mathsf{OPT} + \alpha\omega_0 X+\alpha\omega_1 Y$.

We conclude the proof by computing $\mathbb{E}(X|n_0)$.
\begin{align*}
	X&= \sum_{j=1}^{n}\sum_{k>j}\mathbbm{1}\left\{\pi(j)=1\right\}\mathbbm{1}\left\{\pi(k)=0\right\}\\
	&= \sum_{j=1}^{n}\sum_{k>j}\mathbbm{1}\left\{\pi(j)=1,pred(j)=0\right\}\mathbbm{1}\left\{\pi(k)=0,pred(k)=0\right\}\\
	&\qquad+\sum_{j=1}^{n}\sum_{k>j}\mathbbm{1}\left\{\pi(j)=1,pred(j)=0\right\}\mathbbm{1}\left\{\pi(k)=0,pred(k)=1\right\}\\
	&\qquad+\sum_{j=1}^{n}\sum_{k>j}\cancel{\mathbbm{1}\left\{\pi(j)=1,pred(j)=1\right\}\mathbbm{1}\left\{\pi(k)=0,pred(k)=0\right\}}\\
	&\qquad+\sum_{j=1}^{n}\sum_{k>j}\mathbbm{1}\left\{\pi(j)=1,pred(j)=1\right\}\mathbbm{1}\left\{\pi(k)=0,pred(k)=1\right\}
\end{align*}
where the third term cancels because of our initial sort in WSPT order of predicted priorities. Accounting for the order of jobs, we can replace $\pi(\cdot)$ with $true(\cdot)$:
\begin{align*}
	X&= \frac{1}{2}\left(\sum_{j=1}^{n}\mathbbm{1}\left\{true(j)=1,pred(j)=0\right\}\right)\left(\sum_{k=1}^n\mathbbm{1}\left\{true(k)=0,pred(k)=0\right\}\right)\\
	&\qquad+\left(\sum_{j=1}^{n}\mathbbm{1}\left\{true(j)=1,pred(j)=0\right\}\right)\left(\sum_{k=1}^n\mathbbm{1}\left\{true(k)=0,pred(k)=1\right\}\right)\\
	&\qquad+\frac{1}{2}\left(\sum_{j=1}^{n}\mathbbm{1}\left\{true(j)=1,pred(j)=1\right\}\right)\left(\sum_{k=1}^n\mathbbm{1}\left\{true(k)=0,pred(k)=1\right\}\right).
\end{align*}
Order in the second term is, again, automatically satisfied by how we sort the jobs. Taking the conditional expectation given $n_0$ and letting $n_1=n-n_0$, we obtain
\begin{align}
	\mathbb{E}(X|n_0) &= \frac{\varepsilon_1(1-\varepsilon_0)n_0n_1}{2}+\varepsilon_0\varepsilon_1 n_0 n_1 + \frac{\varepsilon_0(1-\varepsilon_1)n_0n_1}{2}\label{eq:condEX}\\
	&=\frac{(\varepsilon_0+\varepsilon_1) n_0n_1}{2}.\nonumber
\end{align}
Finally, $\mathbb{E}(Y|n_0) = {n_1\choose 2}=n_1(n_1-1)/2$.
\end{proof}

Given the proposition above, we can combine (\ref{eq:obj-nonpmtv})-(\ref{eq:obj-fullpmtv}) to give expressions for the performance of the $\beta$-threshold rule. In essence, the $\beta$-threshold rule dictates when to move from a preemptive regime to a nonpreemptive regime. Based on our previous analyses, this cutoff occurs once we complete the last job that is predicted to be of high priority.

\begin{proposition}
	Let $C_j^\beta$ denote job $j$'s completion time in a schedule generated by the $\beta$-threshold rule. This schedule is nonpreemptive with performance given in (\ref{eq:obj-nonpmtv}) if $\beta\geq\max_jp_j$, and preemptive with performance given in (\ref{eq:obj-fullpmtv}) if $\beta<\min_jp_j$. Otherwise, relative to $\mathsf{OPT}$ as defined in (\ref{eq:obj-offline}), the conditional expectation given $n_0$ is 
	\begin{align}
		\mathbb{E}\left(\sum_{j=1}^n w_jC_j^\beta\middle|n_0\right)=\mathsf{OPT} + \mathbb{E}\left(\underbrace{\alpha \omega_0 X_0+\alpha\omega_1 Y_0}_\text{preemptive}+\underbrace{(\omega_0-\omega_1)(X-X_0)}_\text{nonpreemptive}\middle|n_0\right)\label{eq:obj-beta}
	\end{align}
	where, letting $n_1=n-n_0$,
	\begin{align*}
		\mathbb{E}(X_0|n_0)=\frac{\varepsilon_1(1-\varepsilon_0)n_0n_1}{2},\ \mathbb{E}(Y_0|n_0)=\frac{\varepsilon_1^2\left(n_1^2-n_1\right)}{2}
	\end{align*}
	and $\mathbb{E}(X|n_0)$ is as defined in (\ref{eq:pairwise_exp}).\label{prop:beta-perf}
\end{proposition}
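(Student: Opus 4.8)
The plan is to first determine, from the pseudocode of Algorithm~\ref{alg:beta}, exactly what schedule the $\beta$-threshold rule produces, and then to compute its cost using the same pairwise inversion-counting device that underlies the proof of Proposition~\ref{prop:base-perf}. By \eqref{eq:pj-1}--\eqref{eq:pj-2}, $p_j$ takes only two values, so sorting in nonincreasing order of $p_j$ simply places every job predicted to be type~$0$ ahead of every job predicted to be type~$1$, and (when both predicted classes are nonempty) $\max_j p_j = \mathbb{P}(true(\cdot)=0\mid pred(\cdot)=0)$ while $\min_j p_j = \mathbb{P}(true(\cdot)=0\mid pred(\cdot)=1)$. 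Tracing the algorithm: if $\beta \ge \max_j p_j$, the test $p_k>\beta$ never succeeds, so every job is completed as a single unit of work in sorted order, which is the nonpreemptive schedule with cost \eqref{eq:obj-nonpmtv}; if $\beta < \min_j p_j$, the test succeeds whenever $\mathcal{S}\neq\emptyset$, so every opened type~$1$ job is preempted and its residual deferred until all jobs have been opened, which is the preemptive schedule with cost \eqref{eq:obj-fullpmtv}. In the remaining regime $\min_j p_j \le \beta < \max_j p_j$ I would show that the execution breaks into three consecutive phases: (i) every predicted-$0$ job is opened in order, with the true-$0$ ones completed immediately and the true-$1$ ones preempted at their $\alpha$-points and set aside; (ii) the set-aside (true-$1$) jobs are completed one after another; (iii) the predicted-$1$ jobs are opened and completed nonpreemptively in order. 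The only delicate point is that once phase (iii) begins, $\ell$ never exceeds $1$, so phases (ii) and (iii) do not interleave; this is a short induction on the loop using the fact that $p_k\le\beta$ throughout the predicted-$1$ block.

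For the cost, write $C_j = 1 + \sum_{k\neq j} a_{kj}$, where $a_{kj}\in[0,1]$ is the fraction of job~$k$ processed before job~$j$ completes; then $\sum_j w_j C_j = \sum_j w_j + \sum_{\{j,k\}}\bigl(w_j a_{kj} + w_k a_{jk}\bigr)$ and likewise for $\mathsf{OPT}$, so $\sum_j w_j C_j^\beta - \mathsf{OPT}$ splits into a sum of per-pair gaps, each of which depends only on the two true types, the two weights, and the relative timing of the two jobs' processing. Because all predicted-$0$ jobs finish (during phases (i)--(ii)) before all predicted-$1$ jobs (during phase (iii)), the restriction of the $\beta$-schedule to the predicted-$0$ block is \emph{exactly} a preemptive schedule on those jobs, whereas the restriction to the predicted-$1$ block, together with every cross-block pair, behaves \emph{exactly} as in a nonpreemptive schedule. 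Plugging in the per-pair accounting already performed in the proof of Proposition~\ref{prop:base-perf} --- a $(\textrm{true }1,\textrm{true }0)$ inversion costs $\alpha\omega_0$ preemptively and $\omega_0-\omega_1$ nonpreemptively, a $(\textrm{true }1,\textrm{true }1)$ pair costs $\alpha\omega_1$ preemptively and nothing nonpreemptively --- and summing over all pairs gives $\sum_j w_j C_j^\beta - \mathsf{OPT} = \alpha\omega_0 X_0 + \alpha\omega_1 Y_0 + (\omega_0-\omega_1)(X-X_0)$, where $X_0$ counts the inversions whose two endpoints both lie in the predicted-$0$ block and $Y_0$ counts the $(\textrm{true }1,\textrm{true }1)$ pairs whose two endpoints both lie in that block. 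The only case not already covered by Proposition~\ref{prop:base-perf} is a cross-block pair whose predicted-$0$ member is true type~$1$ and whose predicted-$1$ member is true type~$0$: the former completes in phase (ii), ahead of the latter, so it remains an inversion of full cost $\omega_0-\omega_1$ even though it was preempted, which is why it is counted in $X-X_0$ rather than in $X_0$.

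Finally, condition on $n_0$ and take expectations. By inspection, $\mathbb{E}(X_0\mid n_0)$ is exactly the first summand of \eqref{eq:condEX} --- the contribution to $\mathbb{E}(X\mid n_0)$ from inversions with both jobs predicted~$0$ --- namely $\varepsilon_1(1-\varepsilon_0)n_0 n_1/2$. For $Y_0$, each of the $n_1$ true type-$1$ jobs is independently a false positive (predicted~$0$) with probability $\varepsilon_1$, so the number $M$ of false positives is $\mathrm{Bin}(n_1,\varepsilon_1)$ and $Y_0 = \binom{M}{2}$, whence $\mathbb{E}(Y_0\mid n_0) = \binom{n_1}{2}\varepsilon_1^2 = \varepsilon_1^2(n_1^2-n_1)/2$. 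Substituting these together with $\mathbb{E}(X\mid n_0)$ from \eqref{eq:pairwise_exp} into the displayed gap yields \eqref{eq:obj-beta}.

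The main obstacle is the first step: confirming the three-phase decomposition directly from the pseudocode, in particular that the stockpile of preempted jobs is drained in full before the predicted-$1$ block is touched, and dealing with the boundary cases in which one of the two blocks is empty (where the hybrid formula should and does collapse to \eqref{eq:obj-nonpmtv} or \eqref{eq:obj-fullpmtv}). Once the shape of the schedule is pinned down, the cost computation is just the per-pair bookkeeping of Proposition~\ref{prop:base-perf} applied blockwise, and the expectations are immediate.
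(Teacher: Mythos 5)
Your proof is correct and takes essentially the same approach as the paper's: it reuses the per-pair accounting from Proposition~\ref{prop:base-perf}, identifies $\mathbb{E}(X_0\mid n_0)$ with the first summand of \eqref{eq:condEX}, and evaluates $\mathbb{E}(Y_0\mid n_0)$ as $\mathbb{E}\binom{\mathsf{Bin}(n_1,\varepsilon_1)}{2}$. You spell out two steps the paper leaves implicit --- the three-phase shape of the hybrid schedule and the check that a cross-block inversion still costs $\omega_0-\omega_1$ --- but these fill in the same argument rather than change it.
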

\begin{proof}
	$X_0$ and $Y_0$ count the number of (true 1, true 0) and (true 1, true 1) pairs, respectively, from the set of jobs that are predicted to be of type 0. The expected value of $X_0$ given $n_0$ is given in the first term of \eqref{eq:condEX} in Proposition \ref{prop:base-perf}. The expected value of $Y_0$ given $n_0$ is the expected value of ${\mathsf{Bin}(n_1,\varepsilon_1)\choose 2}$ where $\mathsf{Bin}(n_1,\varepsilon_1)$ denotes a binomial random variable with parameters $n_1$ and $\varepsilon_1$. The remainder of the proof is identical to the one given in the proposition above.
\end{proof}

The expression in \eqref{eq:obj-beta} makes it clear that the impacts of false positive and false negative rates to performance may vary.

\begin{corollary}
	If the false positive rate $\varepsilon_1=0$, a hybrid policy gives a nonpreemptive schedule.
\end{corollary}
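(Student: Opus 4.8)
The plan is to unpack the definition of a hybrid policy and to show that the hypothesis $\varepsilon_1=0$ makes its preemptive component vacuous. Recall that under the hybrid condition \eqref{eq:hybrid-condition} the jobs $j$ with $p_j>\beta$ are exactly those predicted to be of type $0$, and that the $\beta$-threshold rule runs the preemptive regime on precisely these jobs while handling the predicted-type-$1$ jobs nonpreemptively (this is the ``switch exactly once'' description above). So it suffices to argue that when $\varepsilon_1=0$ no preemption ever occurs among the predicted-type-$0$ jobs.

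First I would apply Bayes' rule: setting $\varepsilon_1=0$ in \eqref{eq:pj-1} gives $p_j = (1-\varepsilon_0)\rho/\big((1-\varepsilon_0)\rho\big)=1$ for every job predicted to be type $0$; equivalently, a true type $1$ job is never labelled as predicted type $0$. Hence every job opened during the preemptive phase of the hybrid policy is revealed to be of true type $0$ at its $\alpha$-point, and by Schrage's SRPT rule (Theorem \ref{thm:schrage-srpt}) --- as already noted, type $0$ jobs are always completed nonpreemptively --- it is carried straight to completion. Together with the fact that the predicted-type-$1$ suffix is processed nonpreemptively by definition, this shows that every job is run over a single contiguous unit-length interval in the sorted order, i.e., the schedule produced is exactly the nonpreemptive schedule. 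As a consistency check one can instead read this off Proposition \ref{prop:beta-perf}: with $\varepsilon_1=0$ both $X_0$ and $Y_0$ count pairs involving a true type $1$ job drawn from the predicted-type-$0$ jobs, of which there are none, so $X_0=Y_0=0$ identically and \eqref{eq:obj-beta} collapses to the nonpreemptive cost \eqref{eq:obj-nonpmtv}.

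I do not expect a genuine obstacle here; the statement follows directly from \eqref{eq:pj-1} together with the SRPT observation. The one place that needs a line of care is a bookkeeping point: when Algorithm \ref{alg:beta} opens a predicted-type-$1$ job that turns out to be of true type $1$, it pauses at the $\alpha$-point only to invoke \textsc{CompleteLow} on that very job on the next iteration (the next job in line, if any, is again predicted type $1$, so $p_k\le\beta$), so no other work is inserted and the job still occupies one contiguous interval --- which is why the generated schedule coincides pathwise with the nonpreemptive one rather than merely matching it in expectation.
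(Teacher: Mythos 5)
Your proof is correct. The paper leaves this corollary unproved (it is stated immediately after Proposition~\ref{prop:beta-perf}, so the intended justification is presumably your ``consistency check'': with $\varepsilon_1=0$ one has $\mathbb{E}(X_0\mid n_0)=\mathbb{E}(Y_0\mid n_0)=0$, so \eqref{eq:obj-beta} collapses to \eqref{eq:obj-nonpmtv}). Your primary argument — Bayes gives $p_j=1$ on the predicted-type-$0$ prefix, so no preemption ever occurs there, and the predicted-type-$1$ suffix is nonpreemptive by construction (with the careful note that the $\alpha$-point pause followed by \textsc{CompleteLow} on the same job keeps each interval contiguous) — is in fact the stronger statement and the one the corollary literally asserts: it shows the schedules coincide pathwise rather than merely matching in expected cost, so it is the more appropriate reading.
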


\begin{figure}[h]
	\centering
	\includegraphics[scale =.55]{./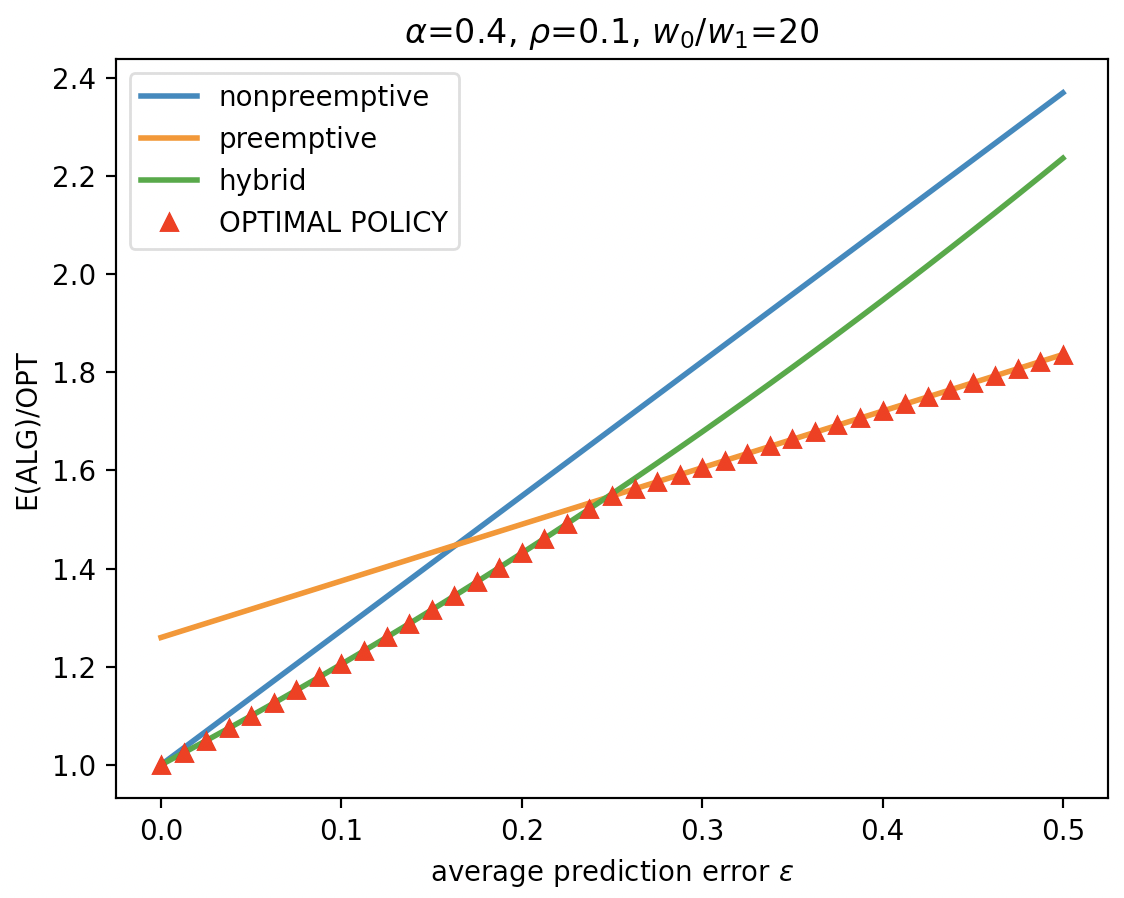}
	\caption{Expected performance}
	\label{fig:expperf_avgcase}
\end{figure}

Figure \ref{fig:expperf_avgcase} plots the unconditional expected performance of each of our policies as a function of prediction error, where performance is normalized by the offline optimum. For illustrative purposes, we assume $\varepsilon_0=\varepsilon_1$ and choose parameter values of $\alpha=0.4$, $\rho=0.1$, and $\omega_0/\omega_1=20$. Since our problem is a minimization problem, the lower the ratio of $\mathbb{E}(\mathsf{ALG})/\mathsf{OPT}$, the better.

The nonpreemptive policy performs very well when prediction error is low, in fact recovering the offline optimum when we are given perfect predictions. This policy blindly trusts the predictor, however, resulting in poor performance when prediction quality is low. On the other hand, the preemptive policy opts \textit{not} to trust the predictions and searches for high priority jobs regardless of the advice it receives. It performs well when prediction quality is low, but is overly aggressive against non-urgent jobs when predictions are accurate, penalizing them unnecessarily. The $\beta$-threshold rule takes the best of both worlds. When prediction error is low, our optimal policy strategically shifts from a preemptive to a nonpreemptive policy, outperforming each of the individual non-adaptive policies. Once prediction error reaches a certain point and predictive labels lose meaning, the $\beta$-threshold rule shifts to a preemptive policy.

\begin{figure}[h]
	\centering
	\includegraphics[scale =.4]{./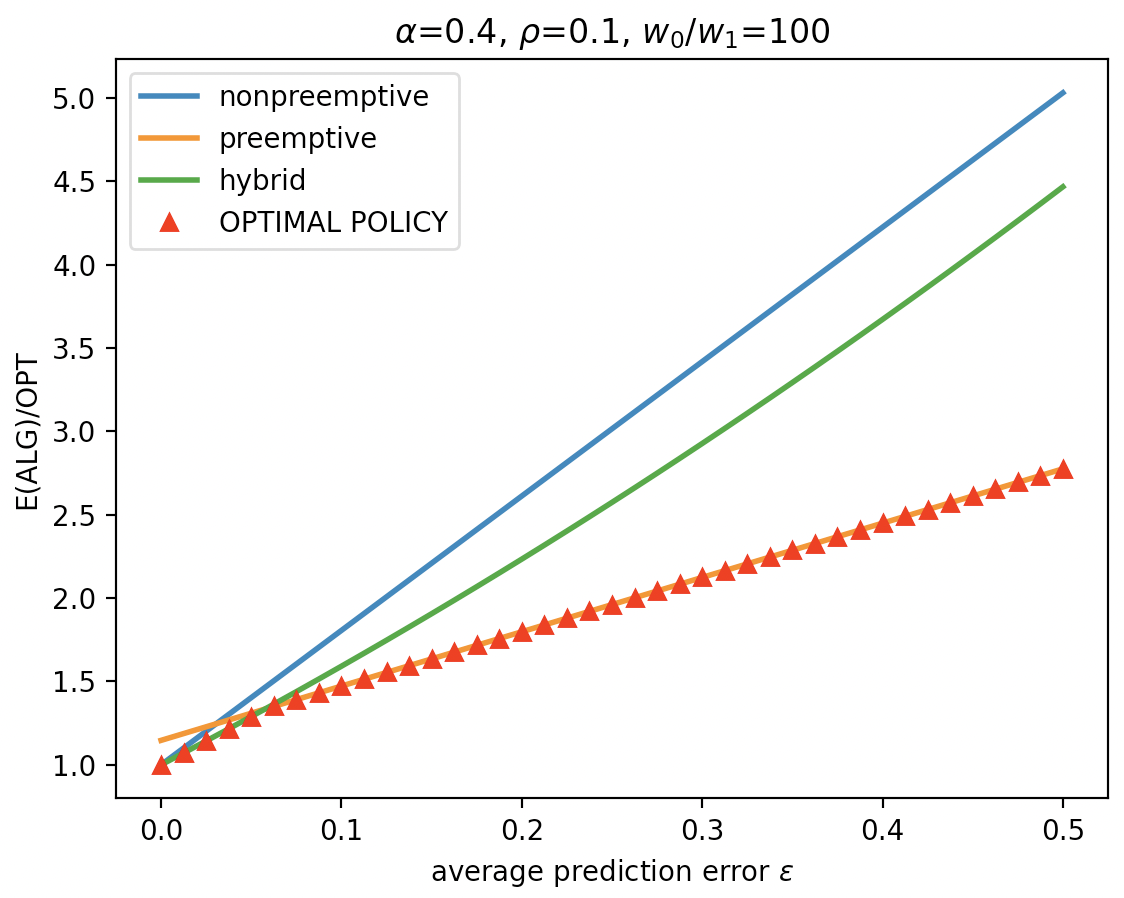}~\includegraphics[scale =.4]{./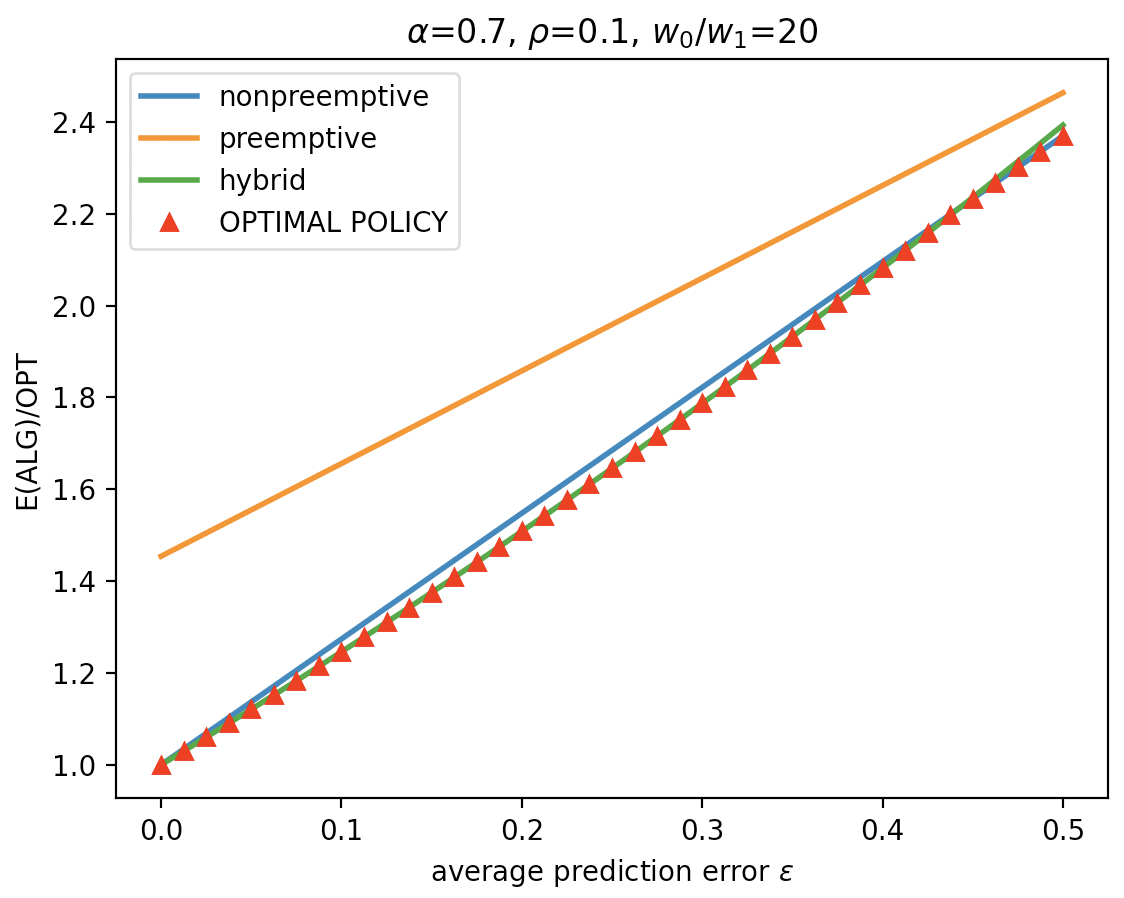}
	\caption{Expected performance}
	\label{fig:expperf-extreme}
\end{figure}

Needless to say, performance depends heavily on our chosen parameter values. Figure \ref{fig:expperf-extreme} gives two examples in which performance improvements from the $\beta$-threshold rule are modest at best. The plot on the left panel considers a case where relative priority values are set very high at $\omega_0/\omega_1=100$. Analytically, our chosen parameter values push down the $\beta$ value significantly so that it becomes unlikely that $\beta$ will fall between the two conditional probabilities given in \eqref{eq:hybrid-condition} unless the predictor is very accurate. Intuitively, the relative priority of urgent jobs is so great that there is simply no room for prediction error. This explains the low tolerance for error before our optimal policy switches from a hybrid policy to a preemptive policy. The hybrid policy still outperforms both non-adaptive policies when predictions are accurate.

The second plot in Figure \ref{fig:expperf-extreme} is a rare example in which a nonpreemptive policy outperforms a preemptive policy throughout. Here, we consider a high value of $\alpha$ where $\alpha=0.7$, which pushes up the $\beta$ value and the cost of preemption at the same time. In this case, the high cost of preemption makes it preferable to complete a low priority job than to open a new job that may or may not be high priority. We again observe that the hybrid policy outperforms both non-adaptive policies, but the improvements are small.

\subsection{Competitive Analysis}

We provide performance guarantees for our policies in this section. 

\begin{definition}
	The \emph{competitive ratio} of an online algorithm $\mathsf{ALG}$ is $\mathsf{CR}$ if the inequality $$\mathbb{E}(\mathsf{ALG})\leq\mathsf{CR}\cdot\mathsf{OPT}$$ holds for all possible inputs. Then, we can also say that $\mathsf{ALG}$ is $\mathsf{CR}$-\emph{competitive}.
\end{definition}

In our competitive analysis, an adversary deliberately choosing a difficult input has control over the mix of urgent and non-urgent jobs. Let $q$ denote the fraction of urgent jobs among all available jobs. We shall aim to find the worst case values of $q$. This analysis addresses a known weakness in our model. Our model assumes that each arriving job is independently a high priority job with probability $\rho\in (0,1)$, and our proposed $\beta$-threshold rule is optimal with respect to this parameter. While $\rho$ could be inferred from historical data, it also tends to be highly volatile and sensitive to environmental changes. Mass casualty events or insurance policy changes are some examples that could drive the value of $\rho$ up or down. With competitive analysis, we are able to guarantee performance for all possible values of $\rho$. Furthermore, as a byproduct of our analyses, we can observe which values of $\rho$ result in the worst case.

\begin{lemma}
	The performance of a nonpreemptive policy is bounded above by $\mathsf{CR}^\phi\cdot\mathsf{OPT}$ where $$\mathsf{CR}^\phi=1+\varepsilon\left(\sqrt{\frac{\omega_0}{\omega_1}}-1\right)$$ and $\varepsilon=(\varepsilon_0+\varepsilon_1)/2$ is the average of the false negative and false positive rates $\varepsilon_0$ and $\varepsilon_1$. \label{lemma:cr-np}
\end{lemma}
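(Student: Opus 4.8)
The plan is to start from the exact conditional performance formula for the nonpreemptive policy in Proposition~\ref{prop:base-perf} and reduce the competitive ratio to a single‑variable maximization over the adversary's choice of urgent fraction. Conditioning on $n_0$ urgent jobs among the $n$ available jobs and writing $n_1 = n - n_0$, equations \eqref{eq:obj-offline}, \eqref{eq:obj-nonpmtv}, and \eqref{eq:pairwise_exp} give
\[
\frac{\mathbb{E}\!\left(\sum_{j=1}^n w_j C_j^\phi \,\middle|\, n_0\right)}{\mathsf{OPT}}
= 1 + \frac{(\omega_0-\omega_1)(\varepsilon_0+\varepsilon_1)\,n_0 n_1/2}{(\omega_0-\omega_1)\,n_0(n_0+1)/2 + \omega_1\,n(n+1)/2}.
\]
If $n_0 \in \{0,n\}$ or $\varepsilon_0 = \varepsilon_1 = 0$ the second term vanishes and there is nothing to show, since $\mathsf{CR}^\phi \ge 1$ because $\omega_0 > \omega_1$; so I assume $0 < n_0 < n$ and $\varepsilon_0 + \varepsilon_1 > 0$.

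First I would lower‑bound the denominator by discarding the lower‑order terms, using $n_0(n_0+1) \ge n_0^2$ and $n(n+1) \ge n^2$; this can only increase the fraction. Substituting $q = n_0/n \in (0,1)$ and $r = \omega_0/\omega_1 > 1$ and cancelling the common factor $n^2/2$, the fraction is bounded above by $(\varepsilon_0+\varepsilon_1)\,g(q)$, where $g(q) := \dfrac{(r-1)\,q(1-q)}{(r-1)q^2 + 1}$. It then remains to prove that $\max_{q \in [0,1]} g(q) = \tfrac12(\sqrt r - 1)$, since then the ratio above is at most $1 + (\varepsilon_0+\varepsilon_1)\cdot\tfrac12(\sqrt r - 1) = 1 + \varepsilon\bigl(\sqrt{\omega_0/\omega_1}-1\bigr) = \mathsf{CR}^\phi$ for every value of $n_0$, hence for every input, which is exactly the claim.

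For the maximization, note $g(0) = g(1) = 0$ and $g > 0$ on $(0,1)$, so the maximum is attained at an interior stationary point; setting $g'(q) = 0$ and clearing denominators reduces to the quadratic $(r-1)q^2 + 2q - 1 = 0$, whose root in $(0,1)$ is $q^\star = \frac{\sqrt r - 1}{r-1} = \frac{1}{\sqrt r + 1}$. Plugging $q^\star$ back into $g$ and simplifying yields $g(q^\star) = \tfrac12(\sqrt r - 1)$, completing the argument. As a byproduct this identifies the worst‑case urgent fraction as $q^\star = 1/\bigl(1+\sqrt{\omega_0/\omega_1}\bigr)$, which is the value of $\rho$ flagged as of interest in the preamble to this section.

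The routine calculus (differentiating $g$, solving the quadratic, back‑substituting) is not where the difficulty lies; the points to get right are that relaxing $n_0/n$ to a continuous variable and dropping the $O(n)$ terms from $\mathsf{OPT}$ only weakens the bound in the direction we need — so the discrete maximum of the ratio is at most the continuous maximum computed above even though $q^\star$ need not be of the form $n_0/n$ — and that the lower‑order terms are removed from the denominator only, never the numerator. Everything else is bookkeeping.
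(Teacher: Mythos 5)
Your proposal is correct and follows essentially the same route as the paper: start from Proposition~\ref{prop:base-perf}, reduce to the ratio of $(\omega_0-\omega_1)\mathbb{E}(X\mid n_0)$ over $\mathsf{OPT}$, substitute $q=n_0/n$ and $r=\omega_0/\omega_1$, and maximize the resulting single-variable expression over $q\in[0,1]$ to get $\varepsilon(\sqrt{\omega_0/\omega_1}-1)$. The one small difference is presentational rather than substantive: the paper disposes of the finite-$n$ terms by letting $n\to\infty$ and noting the ratio increases to its limit, whereas you drop them directly via $n_0(n_0+1)\ge n_0^2$ and $n(n+1)\ge n^2$ in the denominator, which is a slightly crisper way of making the same monotonicity argument and applies uniformly to all $n$ rather than only in the limit.
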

\begin{proof}
	Let $C_j^\phi$ denote job $j$'s completion time in a nonpreemptive schedule. Based on (\ref{eq:obj-nonpmtv}), $$\frac{\mathbb{E}\left(\sum_{j=1}^n w_jC_j^\phi\middle|n_0\right)}{\mathsf{OPT}}-1=\frac{(\omega_0-\omega_1)\mathbb{E}(X|n_0)}{\mathsf{OPT}}.$$ Expanding the terms as given in Proposition \ref{prop:base-perf} and letting $q=n_0/n$,
	\begin{align*}
		\frac{(\omega_0-\omega_1)\mathbb{E}(X|n_0)}{\mathsf{OPT}} &= \frac{2\varepsilon(\omega_0-\omega_1) n_0 (n-n_0)}{(\omega_0-\omega_1) n_0(n_0+1)+\omega_1 n(n+1)}\\
		&= \frac{2\varepsilon(\omega_0-\omega_1) q(1-q)n^2}{(\omega_0-\omega_1) (q^2n^2+qn)+\omega_1 (n^2+n)}
	\end{align*} then, we approach the limit from below as $n\rightarrow\infty$ so the upper bound is
	\begin{align*}
		\frac{2\varepsilon(\omega_0-\omega_1)q(1-q)}{(\omega_0-\omega_1) q^2+\omega_1}.
	\end{align*}
	
	It is straightforward calculus to show that $$\max_{0\leq q\leq 1}\ \frac{2\varepsilon(\omega_0-\omega_1)q(1-q)}{(\omega_0-\omega_1) q^2+\omega_1}= \varepsilon\left(\sqrt{\frac{\omega_0}{\omega_1}}-1\right),$$ which implies the result. The maximum is attained by choosing $$q=\sqrt{\frac{\omega_1}{\omega_0-\omega_1}+\left(\frac{\omega_1}{\omega_0-\omega_1}\right)^2}-\frac{\omega_1}{\omega_0-\omega_1}.$$
\end{proof}

An immediate consequence of the above lemma is a performance bound for the schedule when Assumption \ref{asm:wsrpt} does not hold. Recall that without Assumption \ref{asm:wsrpt}, a nonpreemptive schedule is an optimal schedule.

\begin{corollary}
	If $\omega_1\geq \omega_0 (1-\alpha)$, the competitive ratio is $1+\varepsilon\left(\sqrt{\frac{1}{1-\alpha}}-1\right)$. \label{cor:asm-guarantee}
\end{corollary}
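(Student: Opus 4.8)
The plan is to reduce this immediately to Lemma \ref{lemma:cr-np}. The first step is to recall the structural fact established in the discussion following Theorem \ref{thm:beta-optimal}: when $\omega_1 \geq \omega_0(1-\alpha)$, we have $\beta = \frac{\alpha}{1-\alpha}\cdot\frac{\omega_1}{\omega_0-\omega_1} \geq 1$, so every job's type-$0$ probability $p_j$ lies below the threshold $\beta$, and the $\beta$-threshold rule completes every job nonpreemptively. By Theorem \ref{thm:beta-optimal} this nonpreemptive schedule is the optimal online policy in this regime, so its competitive ratio is exactly the quantity we must bound, and Lemma \ref{lemma:cr-np} already gives $\mathbb{E}(\mathsf{ALG}) \leq \left(1+\varepsilon\left(\sqrt{\omega_0/\omega_1}-1\right)\right)\mathsf{OPT}$.

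The second step is to translate the hypothesis $\omega_1 \geq \omega_0(1-\alpha)$ into a bound on the ratio $\omega_0/\omega_1$: rearranging gives $\omega_0/\omega_1 \leq 1/(1-\alpha)$. Since $x \mapsto \sqrt{x}$ is increasing and $\varepsilon = (\varepsilon_0+\varepsilon_1)/2 \geq 0$, this yields
\[
	1+\varepsilon\left(\sqrt{\tfrac{\omega_0}{\omega_1}}-1\right) \;\leq\; 1+\varepsilon\left(\sqrt{\tfrac{1}{1-\alpha}}-1\right),
\]
which combined with Lemma \ref{lemma:cr-np} gives the claimed competitive ratio. It is worth remarking (to justify the phrasing ``the competitive ratio \emph{is}'' this value) that the bound is attained in the limit $\omega_1 \to \omega_0(1-\alpha)^+$, so no smaller $\alpha$-only expression works uniformly over the admissible weight ratios.

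There is essentially no technical obstacle here: the entire content is the observation that outside Assumption \ref{asm:wsrpt} the optimal policy collapses to the nonpreemptive one, plus a one-line monotonicity argument. The only point requiring a sentence of care is confirming that the nonpreemptive policy really is optimal in this regime (so that Lemma \ref{lemma:cr-np}'s bound applies to $\mathsf{ALG}$ and not merely to a suboptimal benchmark), which follows directly from the $\beta \geq 1$ computation and Theorem \ref{thm:beta-optimal}.
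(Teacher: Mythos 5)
Your proposal is correct and mirrors the paper's reasoning exactly: the paper treats Corollary~\ref{cor:asm-guarantee} as an immediate consequence of Lemma~\ref{lemma:cr-np}, noting that when Assumption~\ref{asm:wsrpt} fails we have $\beta \geq 1$ so the $\beta$-threshold rule is nonpreemptive, and then substituting the constraint $\omega_0/\omega_1 \leq 1/(1-\alpha)$ into the nonpreemptive bound. Your extra remark about tightness in the limit $\omega_1 \to \omega_0(1-\alpha)$ is a small but welcome addition that the paper leaves implicit.
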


\begin{lemma}
	The competitive ratio $\mathsf{CR}^\alpha$ of a preemptive policy is
	$$\mathsf{CR}^\alpha=\begin{dcases*}
		1+\alpha & if $\varepsilon\leq\omega_1/\omega_0$, and\\
		1+\frac{\alpha}{2}\frac{\omega_0}{\omega_0-\omega_1}\left(1-2\varepsilon+\sqrt{1-4\varepsilon+4\varepsilon^2\left(\frac{\omega_0}{\omega_1}\right)}\right) & otherwise,	
	\end{dcases*}$$ where $\varepsilon=(\varepsilon_0+\varepsilon_1)/2$ is the average of the false negative and false positive rates $\varepsilon_0$ and $\varepsilon_1$. \label{lemma:cr-p}
\end{lemma}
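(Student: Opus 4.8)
The plan is to reuse the template of Lemma~\ref{lemma:cr-np}: pass from the exact conditional cost of a preemptive schedule to a limiting ratio in $n$, and then optimize over the adversary's urgent fraction $q=n_0/n$. From \eqref{eq:obj-fullpmtv} and \eqref{eq:pairwise_exp}, writing $n_1=(1-q)n$,
$$\frac{\mathbb{E}\!\left(\sum_j w_j C_j^{\alpha}\,\middle|\,n_0\right)}{\mathsf{OPT}}-1=\frac{2\alpha\omega_0\varepsilon\,n_0 n_1+\alpha\omega_1 n_1(n_1-1)}{(\omega_0-\omega_1)n_0(n_0+1)+\omega_1 n(n+1)},$$
and, exactly as in Lemma~\ref{lemma:cr-np}, the lower-order-in-$n$ terms only shrink the numerator and swell the denominator relative to their leading coefficients, so this increases to its $n\to\infty$ limit. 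Hence $\mathsf{CR}^\alpha=1+\alpha\cdot\max_{q\in[0,1]}g(q)$, where
$$g(q)=\frac{2\omega_0\varepsilon\,q(1-q)+\omega_1(1-q)^2}{(\omega_0-\omega_1)q^2+\omega_1}.$$

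Next I would separate the two regimes. Since $g(0)=1$ and $g(1)=0$, we always have $\mathsf{CR}^\alpha\ge 1+\alpha$, and the issue is when equality holds. Clearing denominators, $g(q)\le 1$ is, for $q>0$, equivalent to the linear inequality $2(\omega_0\varepsilon-\omega_1)\le q\big(\omega_0(1+2\varepsilon)-2\omega_1\big)$; a short sign check on the two sides (using $\omega_0>\omega_1>0$ and $\varepsilon\le 1/2$) shows this holds for all $q\in[0,1]$ exactly when $\omega_0\varepsilon\le\omega_1$. This gives the first branch $\mathsf{CR}^\alpha=1+\alpha$ when $\varepsilon\le\omega_1/\omega_0$; intuitively the worst instance there is ``all jobs non-urgent,'' on which a preemptive policy ignores the predictor and still pays asymptotically $1+\alpha$.

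For $\varepsilon>\omega_1/\omega_0$ the inequality above fails for small $q$, so the maximizer of $g$ is interior, and rather than solving $g'(q)=0$ explicitly I would use a tangency argument. Writing $g=N/D$ with $N,D$ the numerator and denominator, and setting $g^\star=\max_{q\in[0,1]}g(q)$, the quadratic $N(q)-g^\star D(q)$ has a double root at the maximizer, so its discriminant in $q$ vanishes. Expanding this condition, the $q^2$-part collapses and one is left with a quadratic in $g^\star$:
$$(\omega_0-\omega_1)\omega_1\,(g^\star)^2-\omega_0\omega_1(1-2\varepsilon)\,g^\star-\omega_0^2\varepsilon^2=0.$$
Its two roots have negative product, so the (positive) maximum value is the root taken with the $+$ sign, namely $g^\star=\frac{\omega_0}{2(\omega_0-\omega_1)}\big(1-2\varepsilon+\sqrt{1-4\varepsilon+4\varepsilon^2\,\omega_0/\omega_1}\,\big)$, and $\mathsf{CR}^\alpha=1+\alpha g^\star$ is the stated second branch. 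I would close by checking the housekeeping: the critical $q^\star$ lies in $(0,1)$ (evaluate the critical-point quadratic at $q=0$ and $q=1$ and use the sign change), $g^\star\ge 1$ throughout this regime with equality at $\varepsilon=\omega_1/\omega_0$ so the two branches agree at the threshold, and $1-4\varepsilon+4\varepsilon^2\omega_0/\omega_1>0$ so the expression is real.

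The main obstacle is not any single computation but keeping the case analysis consistent: establishing that the transition from ``worst case at $q=0$'' to ``worst case interior'' is exactly $\varepsilon=\omega_1/\omega_0$, selecting the correct root of the discriminant-derived quadratic, and confirming that the interior optimizer is a feasible value of $q$ rather than a spurious critical point outside $[0,1]$. Getting every sign condition to line up around the threshold $\varepsilon=\omega_1/\omega_0$ is where the care is needed.
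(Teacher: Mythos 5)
Your proposal is correct and its skeleton is the same as the paper's: pass to the $n\to\infty$ limit of the conditional ratio, reduce to $\mathsf{CR}^\alpha=1+\alpha\max_{q\in[0,1]}g(q)$, and split regimes according to $\varepsilon\lessgtr\omega_1/\omega_0$. The two real points of departure are in the execution, and both are improvements in economy. For the first branch, the paper differentiates $g$, checks the sign of the derivative's numerator at $q=0$ and $q=1$, and invokes convexity of that numerator (nonnegative leading coefficient) to conclude $g$ is decreasing when $\varepsilon\le\omega_1/\omega_0$; you instead show directly that $g(q)\le 1=g(0)$ by clearing denominators, which reduces to the linear inequality $2(\omega_0\varepsilon-\omega_1)\le q\bigl(\omega_0(1+2\varepsilon)-2\omega_1\bigr)$ and needs only an elementary sign check. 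For the second branch, the paper solves $g'(q)=0$ for the critical $q^\star$ and substitutes back (recording a rather unwieldy closed form for $q^\star$), whereas you impose the tangency condition that $N(q)-g^\star D(q)$ has a double root and set its discriminant to zero, which gives a quadratic in $g^\star$ whose positive root is the answer; your quadratic $(\omega_0-\omega_1)\omega_1(g^\star)^2-\omega_0\omega_1(1-2\varepsilon)g^\star-\omega_0^2\varepsilon^2=0$ and the resulting simplification $(1-2\varepsilon)^2+4(\omega_0-\omega_1)\varepsilon^2/\omega_1=1-4\varepsilon+4\varepsilon^2\omega_0/\omega_1$ both check out, and the negative product of roots correctly forces the $+$ sign. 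The trade-off is that the discriminant route never produces $q^\star$; the paper's calculus route does, and uses it later to read off how the adversary's worst-case urgent fraction scales with $\omega_0/\omega_1$. If you want to recover that observation you would still have to solve for the double root, so consider noting $q^\star=-b/(2a)$ from the tangent quadratic. Otherwise the housekeeping you list (reality of the square root, interior feasibility, continuity of the two branches at $\varepsilon=\omega_1/\omega_0$) is exactly what needs verifying and all of it holds.
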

\begin{proof}
	The first part of the proof proceeds similarly. Let $C_j^\alpha$ denote job $j$'s completion time in a preemptive schedule. Letting $n_1=n-n_0$ and $q=n_0/n$,
	\begin{align*}
		\frac{\mathbb{E}\left(\sum_{j=1}^n w_jC_j^\alpha\middle|n_0\right)}{\mathsf{OPT}}-1&=\frac{\alpha \omega_0 \mathbb{E}(X|n_0)+\alpha\omega_1 \mathbb{E}(Y|n_0)}{\mathsf{OPT}}\\
		&= \alpha\cdot\frac{2\varepsilon\omega_0 n_0n_1+\omega_1n_1(n_1-1)}{(\omega_0-\omega_1) n_0(n_0+1)+\omega_1 n(n+1)}\\
		&= \alpha\cdot \frac{2\varepsilon\omega_0 q(1-q)n^2+\omega_1 \left((1-q)^2n^2-(1-q)n\right)}{(\omega_0-\omega_1) \left(q^2n^2+qn\right)+\omega_1 (n^2+n)}
	\end{align*}
	then, we approach the limit from below as $n\rightarrow\infty$ so the upper bound is
	\begin{align}
		\alpha\cdot \frac{2\varepsilon\omega_0 q(1-q)+\omega_1 (1-q)^2}{(\omega_0-\omega_1) q^2+\omega_1}.\label{eq:deriv}
	\end{align}
	
	We want to maximize \eqref{eq:deriv} with respect to $q$ where $0\leq q\leq 1$. Taking the derivative,
	\begin{eqnarray*}
		&&\alpha\cdot\frac{\left(\splitfrac{\left(2\varepsilon\omega_0(1-2q)-2\omega_1(1-q)\right)\left((\omega_0-\omega_1) q^2+\omega_1\right)}{-2(\omega_0-\omega_1)q\left(2\varepsilon\omega_0 q(1-q)+\omega_1 (1-q)^2\right)}\right)}{\left((\omega_0-\omega_1) q^2+\omega_1\right)^2}\\
		&=&\alpha\cdot\frac{2(\omega_0-\omega_1)(\omega_1-\varepsilon\omega_0)q^2+2\omega_1(2(\omega_1-\varepsilon\omega_0)-\omega_0)q-2\omega_1(\omega_1-\varepsilon\omega_0)}{\left((\omega_0-\omega_1) q^2+\omega_1\right)^2}.
	\end{eqnarray*}
	We first evaluate this derivative at the boundaries. At $q=1$, the numerator is always non-positive with $-2\varepsilon\omega_0^2\leq 0$. When $q=0$, the numerator is $-2\omega_1(\omega_1-\varepsilon\omega_0)$. If this quantity is non-positive, i.e., $\varepsilon\leq \omega_1/\omega_0$, then the coefficient for $q^2$ given by $2(\omega_0-\omega_1)(\omega_1-\varepsilon\omega_0)$ is also non-negative, which implies that \eqref{eq:deriv} decreases in $q$ everywhere in the domain $0\leq q\leq 1$. Thus, if $\varepsilon\leq \omega_1/\omega_0$, we obtain the competitive ratio $1+\alpha$ by setting $q=0$ in \eqref{eq:deriv}.
	
	If $\varepsilon > \omega_1/\omega_0$, the expression in \eqref{eq:deriv} attains a maximum in the interior of the domain. The rest of the proof is straightforward calculus. We achieve the maximum given in the statement of the lemma by setting $$q=\sqrt{\frac{\omega_1}{\omega_0-\omega_1}+\left(\frac{\omega_1}{\omega_0-\omega_1}\cdot\frac{2\varepsilon\omega_0-2\omega_1+\omega_0}{2\varepsilon\omega_0-2\omega_1}\right)^2}-\frac{\omega_1}{\omega_0-\omega_1}\cdot\frac{2\varepsilon\omega_0-2\omega_1+\omega_0}{2\varepsilon\omega_0-2\omega_1}.$$
\end{proof}

A preemptive policy aggressively searches for high priority jobs by preempting every type 1 job it encounters, completing any low priority residual work only after all jobs are open and all type 0 jobs have completed their processing. Lemma \ref{lemma:cr-p} confirms our intuition that this policy performs poorly when prediction error is low. Consider for example an instance that consists exclusively of type 1 jobs. Indiscriminate preemption offers no advantage, as there are no urgent jobs to search for. In this case, a preemptive policy causes on average an $\alpha$ unit of delay in the completion of every job, resulting in a constant $1+\alpha$ competitive ratio when error rates are small ($\varepsilon\leq\omega_1/\omega_0$). Given our assumption that prediction errors are at most one half, we are able to deduce the following corollary.

\begin{corollary}
	If $\omega_0 < 2\omega_1$, a fully preemptive policy is $(1+\alpha)$-competitive.
\end{corollary}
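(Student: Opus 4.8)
The plan is to observe that this corollary is an immediate specialization of Lemma~\ref{lemma:cr-p} once we invoke the standing assumption on the error rates. Recall that the problem data fixes $\varepsilon_0 \le 1/2$ and $\varepsilon_1 \le 1/2$, so the average error $\varepsilon = (\varepsilon_0+\varepsilon_1)/2$ satisfies $\varepsilon \le 1/2$. The hypothesis $\omega_0 < 2\omega_1$ is precisely the statement that $\omega_1/\omega_0 > 1/2$. Chaining these two facts gives $\varepsilon \le 1/2 < \omega_1/\omega_0$, so the condition $\varepsilon \le \omega_1/\omega_0$ defining the first branch of the piecewise formula for $\mathsf{CR}^\alpha$ in Lemma~\ref{lemma:cr-p} holds, and we conclude $\mathsf{CR}^\alpha = 1+\alpha$.

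There is no real obstacle here; the only point worth checking is that Lemma~\ref{lemma:cr-p} asks for the non-strict inequality $\varepsilon \le \omega_1/\omega_0$, which our chain delivers with room to spare, so nothing delicate happens at any boundary. I would also add a sentence noting tightness: the all-type-$1$ instance described after Lemma~\ref{lemma:cr-p} forces an expected $\alpha$ unit of delay on the completion of every job under any fully preemptive policy, so the factor $1+\alpha$ cannot be improved, regardless of $\omega_0$ and $\omega_1$.
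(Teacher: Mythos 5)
Your argument is exactly the paper's: combine the standing assumption $\varepsilon_0,\varepsilon_1\le 1/2$ with $\omega_0<2\omega_1$ to get $\varepsilon\le 1/2<\omega_1/\omega_0$, placing us in the first branch of Lemma~\ref{lemma:cr-p}, and the tightness remark via the all-type-1 instance mirrors the paper's discussion preceding the corollary. Correct, and essentially identical in approach.
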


Thus, preemption offers little advantage when the relative weight differential is small.

\begin{lemma}
	A hybrid policy achieves a competitive ratio of
	\begin{equation}
		\mathsf{CR}^\beta := 1+\frac{1}{2}\left( \alpha\varepsilon_1^2-\lambda+\sqrt{\frac{\omega_0}{\omega_1}\lambda^2+\frac{\omega_0}{\omega_0-\omega_1}\left(\alpha\varepsilon_1^2\right)^2} \right)\label{eq:cr-hyb-exact}
	\end{equation}
	where
	\begin{equation}
		\lambda = \varepsilon_0(1+\varepsilon_1)+\frac{\alpha\omega_0}{\omega_0-\omega_1}\varepsilon_1(1-\varepsilon_0)-\frac{\alpha\omega_1}{\omega_0-\omega_1}\varepsilon_1^2.\label{eq:lambda-crbeta}
	\end{equation}\label{lemma:cr-hyb}
\end{lemma}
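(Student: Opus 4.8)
The plan is to reuse the template of the proofs of Lemmas~\ref{lemma:cr-np} and~\ref{lemma:cr-p}: turn the hybrid performance expression of Proposition~\ref{prop:beta-perf} into a ratio of quadratics in the adversarial urgent fraction $q=n_0/n$, pass to the limit $n\to\infty$, and maximize over $q\in[0,1]$ with calculus. Since $\mathsf{OPT}$ given $n_0$ depends only on $n_0$ and $n$ (by WSPT), the adversary effectively only picks $n_0$, so $\mathsf{CR}^\beta-1$ is the supremum over $q\in[0,1]$ of the limit of $\big(\mathbb{E}(\sum_j w_jC_j^\beta\mid n_0)-\mathsf{OPT}\big)/\mathsf{OPT}$.

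First I would substitute into \eqref{eq:obj-beta} the three pairwise expectations from Propositions~\ref{prop:base-perf} and~\ref{prop:beta-perf}, using $\mathbb{E}(X-X_0\mid n_0)=\tfrac12\varepsilon_0(1+\varepsilon_1)n_0n_1$. Dividing the resulting numerator by $\mathsf{OPT}$ from \eqref{eq:obj-offline} and writing $n_0=qn$, $n_1=(1-q)n$, every term is a polynomial in $n$ of degree at most two; as in the earlier lemmas the only sub-leading contribution to the numerator (the $-n_1$ inside $\mathbb{E}(Y_0\mid n_0)$) is nonpositive, so the finite-$n$ ratio is bounded above by its $n\to\infty$ limit, which is therefore a valid upper bound. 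This leaves the task of computing
\[
\mathsf{CR}^\beta-1=\sup_{0\le q\le 1}\ \frac{\alpha\omega_0\varepsilon_1(1-\varepsilon_0)\,q(1-q)+(\omega_0-\omega_1)\varepsilon_0(1+\varepsilon_1)\,q(1-q)+\alpha\omega_1\varepsilon_1^2\,(1-q)^2}{(\omega_0-\omega_1)q^2+\omega_1}.
\]
The crucial bookkeeping step, and the reason $\lambda$ is defined as in \eqref{eq:lambda-crbeta}, is the observation that the numerator factors as $(1-q)\big((\omega_0-\omega_1)\lambda\,q+\alpha\omega_1\varepsilon_1^2\big)$: collecting the two coefficients of $q(1-q)$ and subtracting the coefficient of $(1-q)^2$ yields exactly $(\omega_0-\omega_1)\lambda$.

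Next I would maximize $f(q)=\frac{(1-q)\big((\omega_0-\omega_1)\lambda q+\alpha\omega_1\varepsilon_1^2\big)}{(\omega_0-\omega_1)q^2+\omega_1}$ over $[0,1]$. The endpoints give $f(1)=0$ and $f(0)=\alpha\varepsilon_1^2$ (the all-type-$1$ instance, whose entire cost comes from the $Y_0$ term). Setting $f'(q)=0$ clears to a quadratic in $q$; I would solve it, select the root $q^\ast\in(0,1)$, and evaluate $f$ there via the identity $f(q^\ast)=g'(q^\ast)/h'(q^\ast)$ ($g,h$ the numerator and denominator of $f$), valid at an interior critical point. Rationalizing $1/q^\ast$ against the radical, the discriminant collapses because the combination $\omega_0-\omega_1+\omega_1=\omega_0$ emerges, producing $\sqrt{\tfrac{\omega_0}{\omega_1}\lambda^2+\tfrac{\omega_0}{\omega_0-\omega_1}(\alpha\varepsilon_1^2)^2}$ and hence \eqref{eq:cr-hyb-exact}. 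The role of Assumption~\ref{asm:wsrpt} here is to guarantee that $q^\ast$ genuinely lies in the interior — equivalently that $(\omega_0-\omega_1)\lambda\ge\alpha\omega_1\varepsilon_1^2$, which one checks follows from $\alpha<1-\omega_1/\omega_0$ together with $\varepsilon_0,\varepsilon_1\le\tfrac12$ — since otherwise the maximum would be the endpoint value $\alpha\varepsilon_1^2$, which the displayed formula nonetheless still reproduces by continuity.

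The main obstacle is purely computational: carrying the two aggregates $\lambda$ and $\alpha\varepsilon_1^2$ cleanly through the discriminant of the critical-point quadratic and the back-substitution so that the radical simplifies, and separately confirming that the relevant root lies in $(0,1)$ and is the maximizer (a sign check on $f'$ near $q=0$ and $q=1$, together with the Assumption~\ref{asm:wsrpt} argument above, exactly as in Lemma~\ref{lemma:cr-p}). A good sanity check en route: setting $\varepsilon_1=0$ gives $\lambda=\varepsilon_0$ and $\alpha\varepsilon_1^2=0$, collapsing \eqref{eq:cr-hyb-exact} to $1+\tfrac{\varepsilon_0}{2}\big(\sqrt{\omega_0/\omega_1}-1\big)$, which matches Lemma~\ref{lemma:cr-np} with $\varepsilon=\varepsilon_0/2$ and is consistent with the earlier corollary that $\varepsilon_1=0$ makes the hybrid schedule nonpreemptive.
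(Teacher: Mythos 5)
Your proposal mirrors the paper's proof exactly: form the conditional ratio from Proposition~\ref{prop:beta-perf}, pass to the limit $n\to\infty$ (noting that the only sub-leading numerator term, the $-n_1$ inside $\mathbb{E}(Y_0\mid n_0)$, is nonpositive so the limit is a valid upper bound), and maximize the resulting rational function of $q$ over $[0,1]$ by calculus — the same three steps as Lemmas~\ref{lemma:cr-np} and~\ref{lemma:cr-p}. The factorization of the numerator as $(1-q)\bigl((\omega_0-\omega_1)\lambda q + \alpha\omega_1\varepsilon_1^2\bigr)$, the endpoint check $f(0)=\alpha\varepsilon_1^2$, and the $\varepsilon_1=0$ consistency check against Lemma~\ref{lemma:cr-np} are useful bookkeeping additions, but the route is identical to the paper's.
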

\begin{proof}
	The proof proceeds similarly where $C_j^\beta$ denotes job $j$'s completion time in a hybrid policy. Letting $n_1=n-n_0$ and $q=n_0/n$,
	\begin{eqnarray*}
		&&\frac{\mathbb{E}\left(\sum_{j=1}^n w_jC_j^\beta\middle|n_0\right)}{\mathsf{OPT}}-1\\
		&=&\frac{\mathbb{E}\left(\alpha \omega_0 X_0+\alpha\omega_1 Y_0+(\omega_0-\omega_1)(X-X_0)\middle|n_0\right)}{\mathsf{OPT}}\\
		&=&\frac{\left(\splitfrac{\alpha\omega_0\varepsilon_1(1-\varepsilon_0)q(1-q)n^2+\alpha\omega_1 \varepsilon_1^2\left((1-q)^2n^2-(1-q)n\right)}{+(\omega_0-\omega_1)\varepsilon_0(1+\varepsilon_1)q(1-q)n^2}\right)}{(\omega_0-\omega_1) \left(q^2n^2+qn\right)+\omega_1 (n^2+n)}
	\end{eqnarray*}
	then, we approach the limit from below as $n\rightarrow\infty$ so the upper bound is
	\begin{align}
		\frac{\left(\alpha\omega_0\varepsilon_1(1-\varepsilon_0)+(\omega_0-\omega_1)\varepsilon_0(1+\varepsilon_1)\right)q(1-q)+\alpha\omega_1 \varepsilon_1^2(1-q)^2}{(\omega_0-\omega_1) q^2+\omega_1}.\label{eq:deriv-beta}
	\end{align}
	Using arguments similar to those given in the previous lemma, (\ref{eq:deriv-beta}) always attains a maximum in the domain $0\leq q\leq 1$ when $$q=\sqrt{\frac{\omega_1}{\omega_0-\omega_1}+\left(\frac{\omega_1}{\omega_0-\omega_1}\cdot\frac{\lambda+\alpha\varepsilon_1^2}{\lambda-\alpha\left(\frac{\omega_1}{\omega_0-\omega_1}\right) \varepsilon_1^2}\right)^2}-\frac{\omega_1}{\omega_0-\omega_1}\cdot\frac{\lambda+\alpha\varepsilon_1^2}{\lambda-\alpha\left(\frac{\omega_1}{\omega_0-\omega_1}\right) \varepsilon_1^2}$$ where $\lambda$ is as defined above.
\end{proof}

An interpretable upper bound for \eqref{eq:cr-hyb-exact} can be derived using the inequality $\sqrt{a+b}\leq\sqrt{a}+\sqrt{b}$, which yields
	\begin{equation*}
		\mathsf{CR}^\beta \leq 1+ \frac{\lambda}{2}\left(\sqrt{\frac{\omega_0}{\omega_1}}-1\right)+\frac{\alpha\varepsilon_1^2}{2} \left(1+\sqrt{\frac{\omega_0}{\omega_0-\omega_1}}\right).
	\end{equation*}
	Rearranging \eqref{eq:lambda-crbeta}, we have
	\begin{equation}
		\lambda = \varepsilon_0 + \varepsilon_1\left(\alpha+(1-\alpha)\varepsilon_0+\frac{\alpha\omega_1}{\omega_0-\omega_1}(1-\varepsilon_0-\varepsilon_1)\right).\label{eq:rearranged-ld}
	\end{equation}
	We first show that $\lambda/2\leq\varepsilon$, where $\varepsilon$ is the average of $\varepsilon_0$ and $\varepsilon_1$. To do so, it suffices to show that the coefficient to $\varepsilon_1$ in \eqref{eq:rearranged-ld} is no greater than 1.
	\begin{eqnarray*}
		&&1-\left(\alpha+(1-\alpha)\varepsilon_0+\frac{\alpha\omega_1}{\omega_0-\omega_1}(1-\varepsilon_0-\varepsilon_1)\right)\\
		&=& (1-\alpha)(1-\varepsilon_0)-\frac{\alpha\omega_1}{\omega_0-\omega_1}(1-\varepsilon_0-\varepsilon_1)\\
		&=& (1-\alpha)\left(1-\varepsilon_0-\frac{\alpha}{1-\alpha}\cdot\frac{\omega_1}{\omega_0-\omega_1}(1-\varepsilon_0-\varepsilon_1)\right)\\
		&=& (1-\alpha)\left(1-\varepsilon_0-\beta(1-\varepsilon_0-\varepsilon_1)\right)\\
		&=& (1-\alpha)\left(\left(1-\beta\right)(1-\varepsilon_0)+\beta\varepsilon_1\right)\geq 0.
	\end{eqnarray*}
	The last inequality follows since every term in the expression is nonnegative, so we have the desired inequality. Recalling that the nonpreemptive competitive ratio is $\mathsf{CR}^\phi=1+\varepsilon\left(\sqrt{\omega_0/\omega_1}-1\right)$, we are able to decompose the competitive ratio as follows:
	\begin{equation}
		\mathsf{CR}^\beta \leq \underbrace{ 1+ \frac{\lambda}{2}\left(\sqrt{\frac{\omega_0}{\omega_1}}-1\right)}_{\substack{\text{$\leq\mathsf{CR}^\phi$}\\ \text{gains relative to $\mathsf{CR}^\phi$}}}+ \underbrace{\frac{\alpha\varepsilon_1^2}{2} \left(1+\sqrt{\frac{\omega_0}{\omega_0-\omega_1}}\right)}_{\text{losses relative to $\mathsf{CR}^\phi$}}.\label{eq:cr-decomposed}
	\end{equation}
	When the relative urgency $\omega_0/\omega_1 \gg 1$, gains in the $\beta$-threshold policy relative to a nonpreemptive policy are large since the multiplier $\sqrt{\omega_0/\omega_1}-1$ is large. In comparison, the losses are approximately equal to $\alpha\varepsilon_1^2$ and small, which implies a guaranteed performance improvement.

\begin{theorem}
	The $\beta$-threshold policy achieves a competitive ratio
	$$\begin{cases}
		\mathsf{CR}^\phi &  \text{if }\rho(1-\beta)\varepsilon_0+\beta(1-\rho)\varepsilon_1 \geq \min\left(\rho(1-\beta),\beta(1-\rho)\right)\text{ and }\rho\leq\beta,\\
		\mathsf{CR}^\alpha & \text{if }\rho(1-\beta)\varepsilon_0+\beta(1-\rho)\varepsilon_1 \geq \min\left(\rho(1-\beta),\beta(1-\rho)\right)\text{ and }\rho>\beta,\text{ and}\\
		\mathsf{CR}^\beta & \text{if }\rho(1-\beta)\varepsilon_0+\beta(1-\rho)\varepsilon_1 < \min\left(\rho(1-\beta),\beta(1-\rho)\right).
	\end{cases}$$\label{thm:betaCR}
\end{theorem}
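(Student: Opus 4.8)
The plan is to derive this as an immediate consequence of the structural corollaries together with the three competitive-ratio lemmas; no new analytic work is required. The first step is to note that the three conditions in the statement are mutually exclusive and exhaustive: the sign of $\rho-\beta$ merely decides whether $\rho(1-\beta)$ or $\beta(1-\rho)$ is the smaller term inside the $\min$, and $\rho(1-\beta)\varepsilon_0+\beta(1-\rho)\varepsilon_1$ is then either at least that minimum or strictly below it. The second step is to match the conditions to Corollaries~\ref{cor:beta-cond-np}, \ref{cor:beta-cond-p}, \ref{cor:beta-cond-hyb}. A one-line Bayes-rule rearrangement of \eqref{eq:pj-1}--\eqref{eq:pj-2} shows that $\rho(1-\beta)\varepsilon_0+\beta(1-\rho)\varepsilon_1 \geq \rho(1-\beta)$ is equivalent to $\beta\geq\mathbb{P}(true(\cdot)=0\mid pred(\cdot)=0)$, and $\rho(1-\beta)\varepsilon_0+\beta(1-\rho)\varepsilon_1 \geq \beta(1-\rho)$ is equivalent to $\beta\leq\mathbb{P}(true(\cdot)=0\mid pred(\cdot)=1)$; combined with the sign of $\rho-\beta$ these reproduce exactly the hypotheses of the three corollaries. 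Hence in the three cases the $\beta$-threshold rule coincides with a nonpreemptive, a (fully) preemptive, and a hybrid policy, respectively.

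The one point worth spelling out is why this reduction survives the ``for all possible inputs'' quantifier in the definition of competitive ratio. Which regime the $\beta$-threshold rule falls into is determined entirely by the comparison of each $p_j$ with $\beta$, and by \eqref{eq:pj-1}--\eqref{eq:pj-2} the $p_j$ depend only on $\rho,\varepsilon_0,\varepsilon_1$ while $\beta$ depends only on $\alpha,\omega_0,\omega_1$, so the regime is a fixed function of the problem parameters, independent of the realized instance. Conversely, the conditional performance expressions \eqref{eq:obj-nonpmtv}, \eqref{eq:obj-fullpmtv}, \eqref{eq:obj-beta} for the three fixed policies, and therefore the ratios $\mathsf{CR}^\phi,\mathsf{CR}^\alpha,\mathsf{CR}^\beta$ that Lemmas~\ref{lemma:cr-np}, \ref{lemma:cr-p}, \ref{lemma:cr-hyb} extract from them by maximizing over $q=n_0/n$ as $n\to\infty$, do not involve $\rho$ at all. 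So once the parameters pin down which fixed policy the rule equals, the adversary's only remaining degree of freedom is the mix $q$ of urgent and non-urgent jobs, which is precisely the variable each lemma has already taken the worst case over; its bound therefore holds on every input.

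Assembling the pieces gives the case analysis: $\mathsf{CR}^\phi$ by Lemma~\ref{lemma:cr-np} in the first case, $\mathsf{CR}^\alpha$ by Lemma~\ref{lemma:cr-p} in the second, and $\mathsf{CR}^\beta$ by Lemma~\ref{lemma:cr-hyb} in the third. I do not anticipate a genuine obstacle — all the analytic heavy lifting lives in the lemmas — so the only things to get right are the Bayes-rule equivalences of the first paragraph and the instance-independence observation of the second. The strict inequality $p_k>\beta$ used in Algorithm~\ref{alg:beta} makes the regime classification well-defined even at the parameter boundaries where $\beta$ equals one of the two conditional probabilities (there the decision resolves to \textsc{CompleteLow}, consistently with the ``$\geq$'' appearing in the first two cases of the statement), so no separate boundary case arises.
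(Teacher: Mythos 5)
Your proposal is correct and matches the paper's proof exactly: the paper's argument is precisely to combine Lemmas \ref{lemma:cr-np}, \ref{lemma:cr-p}, \ref{lemma:cr-hyb} with Corollaries \ref{cor:beta-cond-np}, \ref{cor:beta-cond-p}, \ref{cor:beta-cond-hyb}, and the Bayes-rule equivalences and instance-independence you spell out are the (unstated) reasons this combination is legitimate. One small inaccuracy in your final remark: at the boundary $\beta=\mathbb{P}(true(\cdot)=0\mid pred(\cdot)=1)$ with $\rho>\beta$, predicted-type-1 jobs have $p_j=\beta$ and so the algorithm runs \textsc{CompleteLow} on them while still preempting predicted-type-0 jobs, giving a hybrid rather than a fully preemptive schedule; the stated bound $\mathsf{CR}^\alpha$ is still a valid upper bound there, but the ``consistently with the $\geq$'' claim does not actually hold for the second case.
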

\begin{proof}
	The theorem combines Lemmas \ref{lemma:cr-np}, \ref{lemma:cr-p} and \ref{lemma:cr-hyb} and the conditions in Corollaries \ref{cor:beta-cond-np}, \ref{cor:beta-cond-p} and \ref{cor:beta-cond-hyb}.
\end{proof}

An immediate observation from our competitive analyses is that the worst-case fraction of urgent jobs is inversely proportional to relative priority levels $\omega_0/\omega_1$. But more importantly, we are able to characterize how the competitive ratio evolves as a function of prediction error.

\begin{figure}[h]
	\centering
	\includegraphics[scale =.55]{./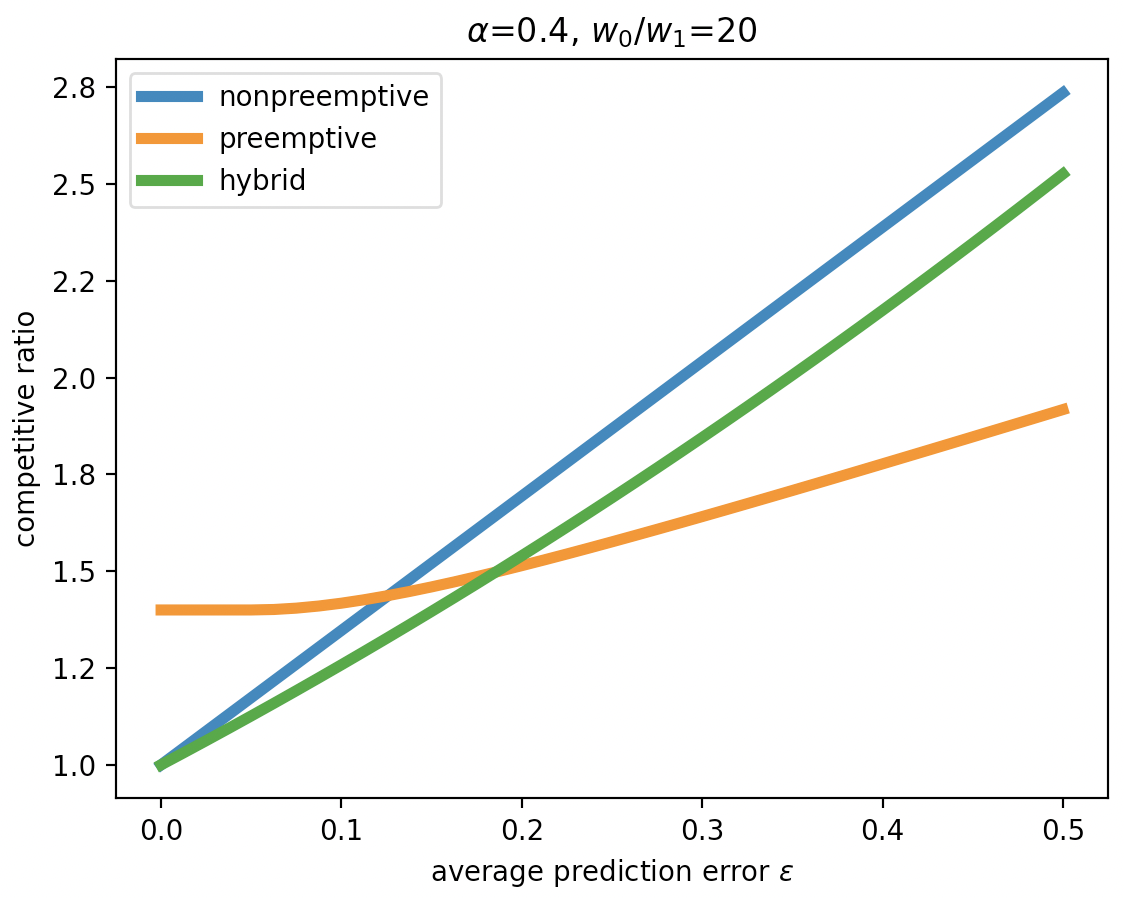}
	\caption{Competitive ratios}
	\label{fig:competitive_ratio}
\end{figure}

For analytical purposes, let us assume $\varepsilon=\varepsilon_0=\varepsilon_1$ and $\omega_0/\omega_1\gg 1$. We can easily see from our expression of $\mathsf{CR}^\phi$ in Lemma \ref{lemma:cr-np} that the competitive ratio of a nonpreemptive policy grows linearly in $O(\varepsilon)$, i.e., performance improves with prediction accuracy. In the case of a preemptive policy, the competitive ratio stays constant at $1+\alpha$ before it starts growing linearly in $O(\alpha\varepsilon)$. Compared with that of a nonpreemptive policy, its growth rate is scaled down by a factor of $\alpha$ where $0<\alpha<1$. Finally, Lemma \ref{lemma:cr-hyb} shows that the competitive ratio of a hybrid policy grows \textit{quadratically} in $O\left(\varepsilon^2\right)$. Given $\varepsilon\leq 1/2$, this rate of growth is slower than that of a nonpreemptive policy, offering yet another interpretation of the decomposition of $\mathsf{CR}^\beta$ given in \eqref{eq:cr-decomposed}. These findings are illustrated in Figure \ref{fig:competitive_ratio}.

Our analyses indicate that, for all three policies that the $\beta$-threshold rule admits, performance degrades gracefully as a function of prediction error. As such, we achieve the two qualities that an online algorithm with advice should exhibit: \textit{consistency} and \textit{robustness} \cite{lykouris21}. Consistency requires performance improvement when the predictor has low error. The idea is that performance with good advice should be better than performance with poor advice. At the same time, an algorithm should be robust to all inputs, with or without predictions. All three of our policies show improved performance with prediction accuracy. The nonpreemptive and hybrid policies even recover the offline optimum when offered perfect predictions. Our three policies are also robust in that the competitive ratios are bounded above when error rates are equal to one half. When $\varepsilon=1/2$, predictions are truly random, i.e., there are no predictions at play.

\section{Extensions}\label{sectionch2:extensions}

In this section, we consider a number of extensions to our model that more accurately reflect real-world settings.

\subsection{Probabilistic Classifiers}

We first consider a probabilistic classifier that is able to predict with what probability a job is of high priority. Rather than providing a binary predictive label of urgent vs. non-urgent, this probabilistic classifier directly offers an estimate of $p_j$. Let us denote these estimated probabilities as $\hat{p}_j$.

We first sort jobs in nonincreasing order of $\hat{p}_j$, breaking ties arbitrarily. As we proceed with our policy, jobs are opened in sorted order. Then, we have the following corollary to Theorem \ref{thm:beta-optimal}.

\begin{corollary}
	The $\beta$-threshold rule is optimal given a probabilistic classifier.
\end{corollary}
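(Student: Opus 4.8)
The plan is to show that the proof of Theorem \ref{thm:beta-optimal} never actually used the fact that the ordering of jobs came from binary predictive labels — it only used two structural features: (i) jobs are processed in nonincreasing order of their type-0 probabilities, so that $p_k \ge p_{k+1}$ for the next job $k$ in line; and (ii) the $\beta$-threshold comparison is made between $p_k$ and the fixed constant $\beta$. With a probabilistic classifier we simply replace $p_j$ by $\hat p_j$ throughout, sort in nonincreasing order of $\hat p_j$, and run the identical algorithm. So first I would restate the $\beta$-threshold rule verbatim with $\hat p_j$ in place of $p_j$, and then argue that the exchange argument of Theorem \ref{thm:beta-optimal} goes through line for line.

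Concretely, I would walk through the two cases. In case (i), where the deviating optimal policy declined to open job $k$ despite $\hat p_k > \beta$: the interval $[t,u)$ is chosen exactly as before (up to the point where either $\mathcal S = \emptyset$ or the next job's estimated probability drops below $\beta$), the interchange moves the $z_0$ true-type-0 jobs earlier by $1-\alpha$ and delays job $i$ by $z_0 + (z-z_0)\alpha$, and the net change to the objective is still
$$-(\omega_0-\omega_1)(1-\alpha)\sum_{j=k}^{k+z-1}\bigl(\mathbf 1\{true(j)=0\}-\beta\bigr),$$
whose conditional expectation given the estimates is $-(\omega_0-\omega_1)(1-\alpha)\sum_{j=k}^{k+z-1}(\hat p_j-\beta)<0$ because each $\hat p_j>\beta$. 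Here the one point that needs a sentence of justification is that $\hat p_j$ is by definition the (posterior) probability that job $j$ is type 0 given the classifier's output, so $\mathbb E[\mathbf 1\{true(j)=0\} \mid \text{estimates}] = \hat p_j$; everything else is arithmetic already done in the theorem. Case (ii), where the optimal policy opened job $k$ despite $\hat p_k \le \beta$, is symmetric: because the sort is by nonincreasing $\hat p_j$ we still have $\beta \ge \hat p_k \ge \hat p_{k+1}$, so job $k$ and all the $\ell$ preempted type-1 jobs behave exactly as in the original argument, and the net change in expectation is $-\ell(\omega_0-\omega_1)(1-\alpha)(\beta-\hat p_k)\le 0$.

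I expect the only real "obstacle" to be a presentational one: making precise what it means for a probabilistic classifier's output $\hat p_j$ to be used as a conditional probability in the expectation. If one does not assume the classifier is calibrated (i.e.\ $\mathbb P(true(j)=0 \mid \hat p_j) = \hat p_j$), then the optimality claim should be read as optimality of the Bayes-optimal policy that treats $\hat p_j$ as the true posterior — equivalently, the binary-label case is recovered by a two-valued classifier, and any finer classifier just refines the partition of the sample space on which the same conditioning argument runs. I would state this as a one-line remark and then conclude that, since the exchange argument of Theorem \ref{thm:beta-optimal} depended on the labels only through the induced ordering and the threshold test, replacing $p_j$ with $\hat p_j$ changes nothing, which proves the corollary.
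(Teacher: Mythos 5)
Your argument matches the paper's intent exactly: the paper gives no written proof of this corollary, only the assertion that the rule ``proceeds similarly,'' and your observation that the exchange argument of Theorem \ref{thm:beta-optimal} uses $p_j$ only through the sort order and the threshold test is precisely the reason the substitution of $\hat p_j$ is purely notational. Your side remark on calibration is a worthwhile clarification the paper leaves implicit: the step $\mathbb{E}[\mathbf{1}\{true(j)=0\}] = \hat p_j$ in the exchange requires that $\hat p_j$ actually be the posterior probability of type 0 given the classifier's output, so one must either assume the classifier is calibrated or, as you propose, read the corollary as Bayes-optimality with respect to the partition of the sample space that the classifier induces.
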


The $\beta$-threshold rule proceeds similarly even with a probabilistic classifier. A preemptive policy is applied to those jobs whose estimated probabilities lie above $\beta$, and a nonpreemptive policy is applied to those jobs whose $\hat{p}_j$ values fall below that threshold. The $\beta$-threshold rule remains optimal, minimizing the objective across all non-anticipating policies.

What differs from our original model is the measure of error. Beyond binary classification, the false negative and false positive rates $\varepsilon_0$ and $\varepsilon_1$ no longer apply. A more appropriate measure of error in this case would be the logarithmic loss function (also called the cross-entropy loss function) given as follows:
$$\eta=-\frac{1}{n}\sum_{j=1}^n (1-true(j))\cdot\log\left(\hat{p}_j\right)+true(j)\cdot\log\left(1-\hat{p}_j\right).$$
The $\beta$-threshold rule is the best possible policy for decision-making based on available information that is both imperfect and incomplete. However, its performance depends heavily on the accuracy of the classifier on hand. An exact characterization of performance as a function of the log-loss $\eta$ remains an open problem.

\subsection{Probabilistic Learning Outcome}

Our model assumes that a radiologist is always able to determine a job's true type at its $\alpha$-point with probability 1. Perhaps a more realistic model would be to leave some room for doubt. Suppose that at job $j$'s $\alpha$-point, we learn that job $j$ is an urgent job with some probability $\theta_j\in [0,1]$. Probability $\theta_j$ is a \textit{posterior} probability that offers a better likelihood of job $j$'s urgency based on $\alpha$ units of observed data.

In our notation, $\mathcal{S}$ denotes the set of unopened jobs. Every job in set $\mathcal{S}$ has one unit in remaining work, with an associated a \textit{prior} probability of being an urgent job. These prior probabilities are as defined in \eqref{eq:pj-1}-\eqref{eq:pj-2}. Let $\mathcal{I}$ denote the set of interrupted, previously preempted jobs that each have $1-\alpha$ units in residual work. Every job in $\mathcal{I}$ has an associated posterior probability. At every decision point, we make the decision of whether to open a job $k$ with the largest prior probability $p_k=\max_{j\in\mathcal{S}}p_j$, or to complete the remaining $1-\alpha$ units of work of job $i$, where job $i$ has the largest posterior probability among all jobs in $\mathcal{I}$ such that $\theta_i=\max_{j\in\mathcal{I}}\theta_j$. For this decision problem, the following modified version of the $\beta$-threshold rule is the best possible across all non-anticipating policies.

\begin{theorem}
	The $\left(\beta+\frac{\alpha}{1-\alpha}\frac{\omega_0}{\omega_0-\omega_1} \frac{\theta_i}{1-\theta_i}\right)$-threshold rule is optimal where $\theta_i=\max_{j\in\mathcal{I}}\theta_j$.
\end{theorem}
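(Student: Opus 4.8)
The plan is to reprise the local exchange argument behind Theorem~\ref{thm:beta-optimal}, now carrying along the posterior probabilities attached to interrupted jobs and a threshold that varies with the state. Write $\gamma := \beta+\frac{\alpha}{1-\alpha}\cdot\frac{\omega_0}{\omega_0-\omega_1}\cdot\frac{\theta_i}{1-\theta_i}$ for the claimed threshold and $\mathbb{E}(w_i)=\omega_1+(\omega_0-\omega_1)\theta_i$ for the (posterior) expected weight of the highest-posterior interrupted job. The first thing I would record is the identity
\[
\gamma \;=\; \frac{\alpha}{1-\alpha}\cdot\frac{\mathbb{E}(w_i)}{\,\omega_0-\mathbb{E}(w_i)\,},
\]
which holds because $\omega_0-\mathbb{E}(w_i)=(\omega_0-\omega_1)(1-\theta_i)$; thus $\gamma$ is nothing but the constant $\beta$ with $\omega_1$ replaced by $\mathbb{E}(w_i)$, and it collapses to $\beta$ precisely when $\theta_i=0$ (i.e.\ $\mathcal{I}=\emptyset$), recovering Theorem~\ref{thm:beta-optimal}. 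Equivalently, $p_k>\gamma$ iff $\frac{p_k\omega_0}{\,p_k+\alpha(1-p_k)\,}>\frac{\mathbb{E}(w_i)}{1-\alpha}$, which is the $c\mu$ reading that drives the decision: compare the (expected) $c\mu$-index of \emph{opening $k$}---a task that in expectation occupies $p_k+\alpha(1-p_k)$ time and carries ``urgent weight'' $p_k\omega_0$---against $\mathbb{E}(w_i)/(1-\alpha)$, the $c\mu$-index of resuming job $i$, exactly as the proposition after Theorem~\ref{thm:beta-optimal} does for $\beta$.

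With this in hand I would argue by contradiction in the same template as Theorem~\ref{thm:beta-optimal}. Assume some optimal policy disagrees with the modified rule; let $t$ be the last decision point at which it does, with state $(\mathcal{S},\mathcal{I})$, $k=\arg\max_{j\in\mathcal{S}}p_j$, $i=\arg\max_{j\in\mathcal{I}}\theta_j$. Two cases: (i) $p_k>\gamma$ but the optimal policy resumes an interrupted job; (ii) $p_k\le\gamma$ but the optimal policy opens $k$. In case~(i) I would trace the optimal schedule forward past $t+(1-\alpha)$---at which point it opens $k$, since completing the maximizer $i$ can only lower $\max_{j\in\mathcal{I}}\theta_j$ and hence the threshold---down to the first time $u$ at which it resumes a previously preempted job; in the window $[t+1-\alpha,u)$ it opens $z\ge1$ jobs, each with prior above the running threshold, of which $z_0$ are completed nonpreemptively and $z-z_0$ are preempted. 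Postponing $C_i$ to $u$ and sliding that block back by $1-\alpha$, and then taking conditional expectations---$\mathbb{E}(w_j)=\omega_1+(\omega_0-\omega_1)p_j$ for each opened job, $\mathbb{E}(w_i)$ for the displaced job---the net change should collapse, exactly as \eqref{eq:thm3-change1} does, to a negative multiple of $\sum_j\bigl(p_j-\gamma\bigr)<0$, a contradiction. Case~(ii) parallels case~(ii) of Theorem~\ref{thm:beta-optimal}: since priors are non-increasing, once the threshold exceeds $p_k$ it exceeds every remaining prior, so after opening $k$ the optimal policy resumes interrupted jobs until $k$ itself is finished; deferring the opening of $k$ past those resumptions yields, as in \eqref{eq:thm3-change2}, a net change equal in expectation to a nonpositive multiple of $\gamma-p_k$, with $\omega_1$ promoted to the expected weights of the resumed jobs.

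The genuinely new obstacle---absent from Theorem~\ref{thm:beta-optimal}---is that the threshold is no longer a fixed constant: it tracks $\max_{j\in\mathcal{I}}\theta_j$, which is itself revealed only as jobs are opened. Consequently, in the case-(i) window both the set of jobs opened and its split into nonpreemptive completions versus preemptions are random functions of the observed posteriors, whereas in Theorem~\ref{thm:beta-optimal} they were fixed once the sorted priors and the realized true types were known. I expect the bulk of the work to be controlling this: (a) using that the running maximum---hence the threshold---only rises while we keep opening and drops only when the current maximizer is resumed, to argue that the ``last deviation'' point still isolates the analysis to a single exchange window; (b) conditioning on the entire path of revealed posteriors, so that $z$, $z_0$, $u$, and the identities of the preempted jobs are fixed, and only afterwards averaging over true types via $\mathbb{E}(w_j\mid\theta_j)=\omega_1+(\omega_0-\omega_1)\theta_j$; and (c) checking that the signed sum retains the sign forced by the defining inequality ($p_j>\gamma$ in case~i, $p_{k+1}\le p_k\le\gamma$ in case~ii) together with the fact that a freshly opened job is completed nonpreemptively only when its posterior is currently maximal among the interrupted jobs. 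Once that bookkeeping is settled, the arithmetic is a direct generalization of \eqref{eq:thm3-change1}--\eqref{eq:thm3-change2} with $\omega_1$ replaced by $\mathbb{E}(w_i)$ in the appropriate places, and the contradiction follows as before.
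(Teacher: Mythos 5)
Your proposal follows the paper's proof strategy in all essentials: exchange at the last deviation from the rule, split into the two cases $p_k>\gamma$ and $p_k\le\gamma$, and read off the objective change as a signed multiple of $\sum_j(p_j-\gamma)$ and $\gamma-p_k$. The one real presentational difference is how the displaced job $i$ is handled: the paper conditions on $i$'s realized type, writing the change as a $\theta_i$-weighted mix of the two scenarios (the $(1-\theta_i)$-weighted copy of \eqref{eq:thm3-change1}--\eqref{eq:thm3-change2} plus the $\theta_i$-weighted urgent-$i$ term), while you fold these together by replacing $\omega_1$ with $\mathbb{E}(w_i)=\omega_1+(\omega_0-\omega_1)\theta_i$ directly in \eqref{eq:thm3-change1} and \eqref{eq:thm3-change2}. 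A short expansion shows the two expressions are literally identical, so this is a repackaging rather than a different argument — but it is a genuinely nicer one. Your identity
\[
\gamma=\frac{\alpha}{1-\alpha}\cdot\frac{\mathbb{E}(w_i)}{\omega_0-\mathbb{E}(w_i)}
\]
is correct, is not recorded in the paper, and explains the otherwise opaque-looking additive correction to $\beta$: the modified threshold is simply the $\beta$ one obtains when the competing residual job has expected weight $\mathbb{E}(w_i)$ instead of $\omega_1$, collapsing to the original when $\theta_i=0$. (Note, though, that the direct analogue of the paper's $c\mu$ proposition — comparing $\mathbb{E}(w_k)/1$ to $\mathbb{E}(w_i)/(1-\alpha)$ — does \emph{not} reproduce $\gamma$; your $c\mu$ reading with numerator $p_k\omega_0$ and effective service time $p_k+\alpha(1-p_k)$ is the one that does, which is itself a small observation worth making explicit.) Finally, you are right to flag the one genuinely new complication: the threshold now tracks $\max_{j\in\mathcal{I}}\theta_j$ and so moves during the exchange window. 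The paper's sketch does not address this at all, so your items (a)--(c) are the correct bookkeeping to spell out; as written, however, they remain a plan rather than a completed argument, so the proposal sits at roughly the same level of completeness as the paper's own proof of this theorem.
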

\begin{proof}
	The proof is nearly identical to the interchange argument given in Theorem \ref{thm:beta-optimal}, with small modifications. The main difference is in recognizing that the low priority job (job $i$) competing against the next unopened job (job $k$) in the proof of Theorem \ref{thm:beta-optimal} is now a low priority job with probability $1-\theta_i$, and a high priority job with probability $\theta_i$.
	
	As before, we consider the last decision point that deviates from this modified $\beta$-threshold rule. If $$p_k>\beta+\frac{\alpha}{1-\alpha}\frac{\omega_0}{\omega_0-\omega_1} \frac{\theta_i}{1-\theta_i}$$ and job $i$ is being processed at this decision point, the net change to the objective upon interchange is $$\underbrace{\left(-z_0\omega_0(1-\alpha)+\omega_1\left(z\alpha+z_0(1-\alpha)\right)\right)}_{\text{from }\eqref{eq:thm3-change1}}(1-\theta_i)+z(\omega_0\alpha) \theta_i$$ where $z$ is the number of jobs opened whose prior probabilities lie above the modified $\beta$ threshold, and $z_0$ is the number of type 0 jobs among them. The first half of this expression comes directly from our earlier proof, weighted by the probability that job $i$ is a low priority job. If job $i$ is an urgent job with weight $\omega_0$, it incurs an $\alpha$ unit of delay in completion for each of the $z$ jobs opened during interchange. Then,
	\begin{eqnarray*}
		&&\left(-z_0\omega_0(1-\alpha)+\omega_1\left(z\alpha+z_0(1-\alpha)\right)\right)(1-\theta_i)+z(\omega_0\alpha) \theta_i\\
		&=& -(\omega_0-\omega_1)(1-\alpha)(1-\theta_i)\left(z_0-\underbrace{\left(\beta+ \frac{\alpha}{1-\alpha}\frac{\omega_0}{\omega_0-\omega_1} \frac{\theta_i}{1-\theta_i}\right)}_{\text{modified $\beta$}}z\right).
	\end{eqnarray*} In expectation, the overall change to the objective is negative since each of the $z$ jobs have prior probabilities that lie above the modified $\beta$ threshold.
	
	The second case uses an identical argument. We modify \eqref{eq:thm3-change2} to account for the possibility that job $i$ is a type 0 job. Then,
	$$\underbrace{\left(-\ell \omega_1  + w_k\ell(1-\alpha)\right)}_{\text{from }\eqref{eq:thm3-change2}}(1-\theta_i)-\ell(\omega_0\alpha) \theta_i$$ and the rest of the proof proceeds similarly.
\end{proof}

We recover the original $\beta$-threshold rule when $\theta_i=0$, which is equivalent to learning that job $i$ is a non-urgent job with probability 1. This added uncertainty raises the $\beta$ threshold bar for opening new jobs to account for the possibility that job $i$ is an urgent job.


\subsection{Job Arrivals Over Time}

We had previously assumed that all jobs are available for processing at time 0. In this section, we consider the case where jobs arrive over time and are released for processing at various points in time. Each job $j$ has an associated release date $r_j\geq0$ and cannot be processed before then. At any given time, we assume no knowledge of jobs arriving in the future.

The offline version of this problem in which jobs' true types are known a priori can be written as $1|r_j,p_j=1,pmtn|\sum \{\omega_0,\omega_1\}C_j$ in the scheduling notation of \citet{grahamllr79}. The following theorem identifies an optimal policy for this offline problem.

\begin{theorem}
	The weighted shortest remaining processing time (WSRPT) rule is an optimal policy for $1|r_j,p_j=1,pmtn|\sum \{\omega_0,\omega_1\}C_j$. \label{thm:wsrpt}
\end{theorem}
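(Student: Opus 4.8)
The plan is to establish optimality of the weighted shortest remaining processing time (WSRPT) rule for the offline problem $1|r_j,p_j=1,pmtn|\sum \{\omega_0,\omega_1\}C_j$ via an exchange argument applied to the decision made at an arbitrary point in time. Since all processing requirements are unit and preemption is allowed, at any moment the machine is working on one of the currently available (released, unfinished) jobs, each of which has some residual work in $\{\alpha', 1\}$ for the relevant $\alpha'$ values arising from prior preemptions — though in fact, as in the discussion following Theorem~\ref{thm:schrage-srpt}, once a job is revealed we need only track residual work, and WSRPT says: among available jobs, process the one with the largest ratio of weight to remaining processing time, i.e., largest $w_j/(\text{residual}_j)$, which given two weight classes $\omega_0 > \omega_1$ reduces to a simple lexicographic rule (prefer type $0$; among equal types prefer the one with less residual work).

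First I would set up the exchange: suppose some optimal schedule $\sigma^*$ disagrees with WSRPT, and let $t$ be the earliest time at which $\sigma^*$ runs a job $j$ while another available job $i$ has strictly larger $w_i/(\text{residual}_i)$. I would then locate, later in $\sigma^*$, the next maximal block of time during which job $i$ is processed (it exists since $i$ must eventually complete), and perform a swap that moves an appropriate sliver of $i$'s processing earlier into the slot currently given to $j$, pushing the corresponding amount of $j$'s processing later. Because all release dates relevant to this block are $\le t$ (job $i$ is available at $t$, and availability is monotone), the swap preserves feasibility. The key step is to show the swap does not increase $\sum_j w_j C_j$: moving work on the higher weight-to-residual-ratio job earlier decreases $C_i$ (or leaves it unchanged) while the delay imposed on $j$ is bounded, and the net change has the sign $-w_i \Delta_i + w_j \Delta_j$ with $\Delta_i, \Delta_j$ the time shifts; using $w_i/(\text{residual}_i) > w_j/(\text{residual}_j)$ together with the fact that the exchanged durations are commensurate with residual works gives nonpositivity, exactly mirroring the interchange computation in Theorem~\ref{thm:beta-optimal}. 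Iterating (or appealing to a minimality/well-ordering argument on the first point of disagreement) converts $\sigma^*$ into the WSRPT schedule without ever increasing the objective.

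The main obstacle I anticipate is handling the bookkeeping when the residual works of $i$ and $j$ differ, so that a ``swap'' of equal time lengths does not cleanly correspond to swapping whole jobs: one must argue that it suffices to exchange an infinitesimal (or smallest-common) time quantum and that the per-quantum objective change already has the right sign, then integrate. A secondary subtlety is that moving $j$'s processing later may cascade and delay still other jobs; I would rule this out by choosing the exchange window to end exactly when job $i$ would otherwise next be processed (or when $i$ completes), so that the only completion times altered are those of $i$ and $j$, keeping all other $C_k$ fixed — the same device used to localize the interchange in the proof of Theorem~\ref{thm:beta-optimal}. Finally I would note that, because every job has unit total work and two possible weights, WSRPT here is equivalent to the intuitively obvious priority rule (type $0$ jobs preempt type $1$ jobs, ties broken by least remaining work), so the statement also matches and extends Schrage's SRPT rule (Theorem~\ref{thm:schrage-srpt}) to the weighted two-class setting.
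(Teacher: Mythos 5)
Your proposal has a genuine gap at exactly the point where the paper's proof does its real work. You acknowledge that delaying job $j$'s (i.e., the WSRPT-violating job's) processing ``may cascade and delay still other jobs,'' but you claim to rule this out ``by choosing the exchange window to end exactly when job $i$ would otherwise next be processed (or when $i$ completes), so that the only completion times altered are those of $i$ and $j$.'' That localization is not available in general. In the paper's Case~ii ($C_k < C_j$, where $k$ is the violating job and $j$ the job with the larger ratio), delaying $k$ necessarily delays the jobs scheduled between $C_k$ and $C_j$ as well; there is no time-slice swap touching only $i$ and $j$ that fixes the violation, because the last slot devoted to $\{j,k\}$ belongs to $j$, not $k$, so the clean SRPT-style swap of Case~i does not apply. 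The paper's interchange therefore promotes job $j$ to the front of $[t,C_j)$ and delays the \emph{entire} set $\mathcal{A}\cup\{k\}$ of jobs completing in that interval, and the crux of the proof is showing that every $\ell\in\mathcal{A}$ also has $w_\ell/x_\ell(t) < w_j/x_j(t)$ --- a fact that holds precisely because there are only two weight classes. Your argument never invokes the two-weight structure in the exchange step, which is a red flag: the paper explicitly notes that WSRPT fails with three or more weights, so any exchange argument that appears to work without using that structure must contain an error. The error is the unsupported localization claim.

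A secondary issue: your closing remark that WSRPT ``reduces to a simple lexicographic rule (prefer type $0$; among equal types prefer the one with less residual work)'' is false in the setting of Theorem~\ref{thm:wsrpt}, where preemption is unrestricted. A type~1 job with residual work $x < \omega_1/\omega_0$ has $\omega_1/x > \omega_0/1$, so WSRPT would \emph{not} preempt it for a freshly arriving type~0 job, contradicting the claimed lexicographic priority. (The equivalence you have in mind holds only under the $\alpha$-point preemption restriction together with Assumption~\ref{asm:wsrpt}, which forces type~1 residuals to equal $1-\alpha > \omega_1/\omega_0$; that is a different problem.) To repair the proof you would need to split into the two cases $C_j < C_k$ and $C_k < C_j$ as the paper does, establish the structural claims about the interval $[C_k, C_j)$ containing only weight-$\omega_1$ jobs with ratios below $j$'s, and only then perform the promotion interchange and sum the per-job contributions.
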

\begin{proof}
	Consider an optimal schedule where job $k$ is being processed at time $t$. Suppose there exists another available job $j$ at $t$ such that
	\begin{equation}
		\frac{w_k}{x_k(t)}<\frac{w_j}{x_j(t)}\label{eq:cont-wsrpt}
	\end{equation}
	where $x_j(t)$ denotes the amount of work remaining in job $j$ at time $t$. If $w_j=w_k$, the optimal policy is in violation of the SRPT rule so we immediately have a contradiction \cite{schrage68}. We therefore assume that $w_j\neq w_k$. In addition, we also assume without loss of generality that job $j$ is the job with the largest weight-to-remaining-work ratio among all available jobs at $t$. We establish a contradiction by interchange.
	
	Despite our assumption that the optimal schedule prioritizes job $k$ over job $j$ at time $t$, we do not know whether job $k$ was actually completed prior to job $j$. We let $C_j$ and $C_k$ denote the completion times of jobs $j$ and $k$ in the optimal schedule, respectively, and consider both cases.
	
	\begin{enumerate}[i]
		\item $C_j<C_k$ : we use a pairwise interchange argument similar to that used in the proof of optimality of SRPT. Starting from $t$, we take the first $x_j(t)$ units devoted to processing jobs $j$ or $k$ in the optimal schedule, and use that time to process job $j$ to completion at $\hat{C}_j$. The remaining $x_k(t)$ units of time are then used to process job $k$ with completion time $\hat{C}_k=C_k$. This interchange only affects the completion time of job $j$, and $\hat{C}_j<C_j$ by construction, so we have our desired contradiction.
		
		\item $C_k<C_j$ : in this case, we require some additional pieces that are unique to our problem with two distinct weights. We first claim that job $j$ is the only job of weight $w_j$ that is processed in the interval $[t,C_j)$. At time $t$, job $j$ has the largest weight-to-remaining-work ratio, so it would be against the SRPT rule to process any other available job of weight $w_j$ until job $j$ is complete. The same is true of any job of weight $w_j$ released in the interval $[t,C_j)$ since $x_j(t)\leq 1$ and every newly arriving job has 1 unit of remaining work. Using a similar argument for job $k$ in the interval $[t,C_k)$, we can conclude that only jobs $j$ and $k$ are processed in $[t,\min(C_k,C_j))=[t,C_k)$.
		
		It is possible, however, that other jobs are processed in $[C_k,C_j)$. By our earlier claim, only jobs of weight $w_k$ can be processed in this interval. Let $\mathcal{A}$ denote the set of jobs processed in $[C_k,C_j)$ where, for every job $\ell\in\mathcal{A}$, $w_\ell=w_k$ holds. We claim that every job $\ell\in\mathcal{A}$ satisfies
		\begin{equation}
			\frac{w_\ell}{x_\ell(t)}<\frac{w_j}{x_j(t)}.\label{eq:cont-wsrpt-add}
		\end{equation}
		For notational convenience, we shall continue to use $x_\ell(t)$ on jobs that are released after $t$, as we can simply set $x_\ell(t)=1$ without affecting the analysis. If job $\ell$ has release date $r_\ell\leq t$, then $x_\ell(t)\geq x_k(t)$ since the optimal schedule would otherwise be in violation of the SRPT rule by processing job $k$ instead of $\ell$ at time $t$. Combined with \eqref{eq:cont-wsrpt}, we obtain the inequality. The same is true if job $\ell$ has release date $r_\ell\in[t,C_j)$ since $x_k(t)\leq x_\ell(t)=1$. 
		
		Lastly, we argue that every job $\ell\in\mathcal{A}$  has completion time $C_\ell\in[C_k,C_j)$. Suppose on the contrary that there exists a job that is partially processed in $[C_k,C_j)$ that completes sometime after $C_j$. Then, shifting the time units devoted to processing this job to the end of the $[C_k,C_j)$ interval allows job $j$ to be completed earlier without affecting the completion time of any other job. Doing so strictly improves the objective and contradicts the fact that we have an optimal schedule.
		
		We finally have all the ingredients we need to proceed with the interchange. We first process job $j$ in the first $x_j(t)$ units of $[t,C_j)$, followed by jobs in $\mathcal{A}\cup \{k\}$ in the remainder of the interval $[t+x_j(t),C_j)$. Then, for each job in $\mathcal{A}\cup \{k\}$, there is a delay in completion of at most $x_j(t)$ units. Job $j$, on the other hand, completes $x_k(t)+\sum_{\ell\in\mathcal{A}}x_\ell(t)$ units earlier in the schedule. The net effect to the objective is thus bounded above by
		\begin{align*}
			&-w_j\left(x_k(t)+\sum_{\ell\in\mathcal{A}}x_\ell(t)\right)+|\mathcal{A}\cup \{k\}|w_kx_j(t)\\
			=&-w_jx_k(t)+w_kx_j(t)-w_j\left(\sum_{\ell\in\mathcal{A}}x_\ell(t)\right)+|\mathcal{A}|w_kx_j(t)\\
			=& \underbrace{-w_jx_k(t)+w_kx_j(t)}_\text{$<0$ by \eqref{eq:cont-wsrpt}}+\sum_{\ell\in\mathcal{A}}\underbrace{\left(-w_jx_\ell(t)+w_\ell x_j(t)\right)}_\text{$<0$ by \eqref{eq:cont-wsrpt-add}}<0
		\end{align*}
		which contradicts the fact that we have an optimal schedule.\qedhere
	\end{enumerate}
\end{proof}

It is worth adding that the theorem above does not generalize to problems of the same setting with three or more distinct weights. In particular, given our assumptions in \eqref{eq:cont-wsrpt}, our interchange argument relies on job $j$ being the job with the largest weight-to-remaining-work ratio among all jobs completing in the interval $[t,C_j)$. We have shown with \eqref{eq:cont-wsrpt-add} that this condition always holds when there are two distinct weights. When there are three or more distinct weights, we can easily construct examples for which this condition no longer holds, for example, by scheduling the arrival of a job with very large weight in $[t,C_j)$.

The WSRPT rule combines two well-known scheduling results: the WSPT rule (Theorem \ref{thm:smith-wspt}) and the SRPT rule (Theorem \ref{thm:schrage-srpt}). The WSRPT rule itself is not new; it has been used in other works as a popular heuristic (see \cite{batsyngps13,xiongc12}, for example). Nevertheless, to our knowledge, Theorem \ref{thm:wsrpt} is the first result on WSRPT optimality, and $1|r_j,p_j=1,pmtn|\sum \{\omega_0,\omega_1\}C_j$ is the first scheduling problem for which WSRPT is shown to be optimal.

\paragraph{Competitive Analysis}

We now consider the $\beta$-threshold rule when jobs arrive into the system over time. At each decision point, we make decisions based on an updated set of unopened jobs that accounts for any new job arrivals since our last decision point. These newly added jobs enter the queue according to their predicted priorities. Whereas our hybrid policy previously allowed a one-time switch from a preemptive policy to a nonpreemptive policy, the arrival of a high priority job could trigger preemptions when necessary, resulting in alternating preemptive and nonpreemptive regimes.

Online job arrivals add yet another layer of randomness and complexity to our model. Our efforts in competitive analysis incorporating both job arrivals and imperfect predictions were not yet fruitful. In what follows, we present our results when job arrivals are present with perfect type predictions.

Let $\mathsf{OPT}$ denote the offline optimum obtained by WSRPT, and let $\mathsf{ALG_0}$ denote the performance of the online $\beta$-threshold policy when job priorities are known a priori. The main difference between these two policies under consideration is that we are able to preempt a job whenever necessary in $\mathsf{OPT}$, but may do so at most once at a job's $\alpha$-point in $\mathsf{ALG}$.

The online policy $\mathsf{ALG_0}$ assumes that true job priorities are given to us at time of job arrival. This is a deterministic online problem where preemptions are limited to $\alpha$-points, so we might express this problem as $1|r_j,p_j=1,\alpha\text{-}pmtn|\sum \{\omega_0,\omega_1\}C_j$ in the scheduling notation of Graham et al. \cite{grahamllr79}. We follow the $\beta$-threshold rule at each decision point, where our set of unopened jobs includes jobs that have arrived since our last decision point. We review each decision in detail to highlight that each of our decisions are consistent with WSRPT. First, the existence of any unprocessed type 0 job will trigger a preemption at an $\alpha$-point. Preempting a type 1 job at an $\alpha$-point is WSRPT-consistent since, by Assumption \ref{asm:wsrpt}, $$\omega_1<\omega_0(1-\alpha)\iff \frac{\omega_1}{1-\alpha}<\frac{\omega_0}{1}.$$ Type 0 jobs will then complete nonpreemptively. When only type 1 jobs remain, any partially processed type 1 job will be processed to completion before we move on to an unopened type 1 job. This is consistent with the $\beta$-threshold rule, the SRPT rule, and by extension, the WSRPT rule. Thus, when true job types are known a priori, $\mathsf{ALG_0}$ is an optimal policy for $1|r_j,p_j=1,\alpha\text{-}pmtn|\sum \{\omega_0,\omega_1\}C_j$. We now compare its performance against $\mathsf{OPT}$. Our proofs frequently rely on the following inequality, widely known as the mediant inequality.

\paragraph{The Mediant Inequality.} For any positive real numbers $a,b,c,d>0$, $$\frac{a+b}{c+d}\leq\max\left(\frac{a}{c},\frac{b}{d}\right).$$


\begin{theorem}
	$\mathsf{ALG_0}$ is $\max\left(1+\alpha,\frac{2}{1+\alpha}\right)$-competitive, and $\sqrt{2}$-competitive if we choose $\alpha=\sqrt{2}-1$. \label{thm:alg0}
\end{theorem}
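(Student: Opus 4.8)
The plan is to bound the ratio $\mathsf{ALG_0}/\mathsf{OPT}$ by comparing the two schedules job by job, exploiting the fact that both policies agree on the \emph{relative order} in which jobs of a given weight are completed (both are WSRPT-consistent on each weight class), and differ only in how aggressively type $1$ jobs are preempted. First I would set up notation: for a fixed instance, let $C_j$ and $C_j^*$ denote the completion times of job $j$ under $\mathsf{ALG_0}$ and $\mathsf{OPT}$ respectively, and split the objective as $\sum_{j:\,true(j)=0}\omega_0 C_j + \sum_{j:\,true(j)=1}\omega_1 C_j$. Since $\mathsf{ALG_0}$ completes every type $0$ job nonpreemptively as soon as it reaches the $\alpha$-point while $\mathsf{OPT}$ may interleave, the key observation is that at any time the amount of ``type $0$ work'' pending under $\mathsf{ALG_0}$ exceeds what is pending under $\mathsf{OPT}$ by at most $\alpha$ (the residual of a job opened but not yet past its $\alpha$-point), so each $C_j$ for a type $0$ job is at most $C_j^* + \alpha$. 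That handles the urgent jobs cleanly.

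The harder part is the type $1$ jobs, because $\mathsf{ALG_0}$ must \emph{open} (process $\alpha$ of) each type $1$ job and then, once only type $1$ jobs remain, return to finish residuals; this can inflate the completion time of a type $1$ job by more than an additive constant when many type $1$ jobs pile up. Here I would use a volume/area argument: compare $\sum_{true(j)=1} C_j$ against $\sum_{true(j)=1} C_j^*$ by charging the delay of each type $1$ job to the work that precedes it. In $\mathsf{ALG_0}$, before a type $1$ job $j$ finally completes, the machine has spent time on (a) all higher-priority type $0$ work released so far, (b) an $\alpha$-chunk for each type $1$ job including $j$, and (c) $(1-\alpha)$-residuals of the type $1$ jobs completed before $j$. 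The total of (b)+(c) over a block of $m$ consecutive type $1$ completions is the same $\tfrac{m(m+1)}{2}$ pattern as in the nonpreemptive/offline schedule but with each unit replaced by the alternating $\alpha, (1-\alpha)$ structure; I expect this yields $\sum C_j \le \tfrac{2}{1+\alpha}\sum C_j^*$ in the worst case (many type $1$ jobs, no type $0$ jobs, all released at time $0$), which is exactly the second term in the claimed bound. The mediant inequality is the tool for combining the two weight-class ratios: writing $\mathsf{ALG_0} = A_0 + A_1$ and $\mathsf{OPT} = O_0 + O_1$ with $A_0/O_0 \le 1+\alpha$ and $A_1/O_1 \le \tfrac{2}{1+\alpha}$ (and, symmetrically, each also $\le$ the other bound in the regime where that weight class dominates), we get $\mathsf{ALG_0}/\mathsf{OPT} \le \max(1+\alpha, \tfrac{2}{1+\alpha})$.

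The final claim, $\sqrt 2$-competitiveness at $\alpha = \sqrt 2 - 1$, is then pure arithmetic: $1+\alpha = \sqrt 2$ and $\tfrac{2}{1+\alpha} = \tfrac{2}{\sqrt 2} = \sqrt 2$, so the two branches of the max coincide and the bound is $\sqrt 2$; this is the $\alpha$ that minimizes $\max(1+\alpha, \tfrac{2}{1+\alpha})$ over $\alpha \in (0,1)$, since one branch increases and the other decreases in $\alpha$.

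I expect the main obstacle to be making the type $1$ comparison rigorous when type $0$ and type $1$ jobs are interleaved in a non-trivial release pattern: one has to argue that inserting type $0$ work (which both schedules must do, and which $\mathsf{OPT}$ does at least as cheaply) cannot push the ratio above $\max(1+\alpha, \tfrac{2}{1+\alpha})$ — i.e., that the extremal instances are the two ``pure'' ones. The cleanest route is probably to prove the two additive/multiplicative bounds on $A_0$ vs.\ $O_0$ and $A_1$ vs.\ $O_1$ \emph{separately} on the full instance (each holds regardless of the other class's presence, because removing a class of jobs only decreases all completion times monotonically in the sense needed), and then invoke the mediant inequality; alternatively, set up a potential-function/charging scheme tracking the ``$\alpha$-lag'' of $\mathsf{ALG_0}$ behind $\mathsf{OPT}$. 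Either way, the delicate step is the monotonicity/decoupling claim, not the final optimization over $\alpha$.
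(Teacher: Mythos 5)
Your proposal has a genuine structural gap, and the decomposition you propose actually misattributes the source of the two branches of the $\max$. You split by weight class and claim $A_0/O_0\leq 1+\alpha$ (type~$0$ jobs delayed by at most $\alpha$) and $A_1/O_1\leq 2/(1+\alpha)$ (type~$1$ jobs, volume argument on a pure type-$1$ instance). Neither per-class bound is right. First, a type~$0$ job can be delayed by up to $1-\alpha$, not $\alpha$: if it is released just after a type~$1$ job passes its $\alpha$-point, $\mathsf{ALG_0}$ has no further decision point until that job finishes, while WSRPT (under Assumption~1, since $\omega_1/\omega_0<1-\alpha$) preempts immediately. So the bound $C_j\leq C_j^*+\alpha$ can fail, and with it your $1+\alpha$ bound on the type-$0$ ratio. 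Second, the extremal instances are not the ``pure'' ones: with only type-$1$ jobs, $\mathsf{ALG_0}$ never preempts (all $p_j<\beta$) and coincides with $\mathsf{OPT}$, giving ratio $1$; the $2/(1+\alpha)$ term does not come from type-$1$ jobs at all.

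The paper takes a different route that resolves exactly the interleaving difficulty you flag as ``the delicate step.'' It first shows (Claim~\ref{claim:type1-start}) that type~$1$ jobs begin processing at the \emph{same integer times} in both schedules, which licenses partitioning the timeline into blocks each containing exactly one type~$1$ job, via the mediant inequality. Within such a block it proves (Claim~\ref{claim:type1-cj}) that the single type-$1$ job finishes \emph{no later} in $\mathsf{ALG_0}$ than in $\mathsf{OPT}$, so the type-$1$ term can be dropped from the ratio by the mediant inequality --- type-$1$ jobs contribute nothing to the competitive gap. The entire bound then comes from the type-$0$ jobs: each is delayed by at most $\max(\alpha,1-\alpha)$, and pairing each delay with the smallest possible optimal completion time ($1$ for a delay of $\alpha$, $1+\alpha$ for a delay of $1-\alpha$) yields $\max\bigl(1+\alpha,\,2/(1+\alpha)\bigr)$. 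So both branches of the $\max$ arise from the same weight class, with the $2/(1+\alpha)$ branch coming from the $1-\alpha$ delay case that your proposal overlooks. Your final arithmetic ($\alpha=\sqrt2-1$ equalizes the two branches at $\sqrt2$) is correct, but the charging scheme you would need to make the intermediate bounds rigorous does not follow the path you sketched.
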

\begin{proof}
	We proceed by running the online $\beta$-threshold policy and WSRPT in parallel. Both policies schedule the same set of $n$ jobs arriving over time, where true job priorities are immediately observable upon job arrival. We refer to the schedule generated by WSRPT as the optimal schedule.
	
	We first discuss some reasonable assumptions we can impose on the data. Without loss of generality, we assume $\min_j r_j=0$, and that there is at least one job of each type in the dataset. We also assume that each of these $n$ jobs are processed without idle time in the optimal schedule so that the last job completes at time $n$. To see why, first observe that both policies are work-conserving. Any dataset that prompts a machine to become idle in an optimal schedule will simultaneously create idle time in a schedule generated by our online policy. Let us partition the dataset whenever there is idle time. Because our objective functions are linear, we can apply the mediant inequality to the competitive ratio based on said partition. Thus it suffices to consider a set of jobs that does not generate idle time.
	
	We proceed by identifying ways to further partition our set of jobs until we have a minimal set of jobs that gives the worst case performance. In order to do so, we need the following claims.
	
	\begin{claim}
		Type 1 jobs begin processing at the same time in $\mathsf{OPT}$ and in $\mathsf{ALG_0}$. This start time is always integer.\label{claim:type1-start}
	\end{claim}
	\begin{proof}[Proof of Claim \ref{claim:type1-start}]
		Let $t>0$ be any time at which a type 1 job begins its processing in an optimal schedule. By the optimality of the WSRPT rule, every type 0 job released prior to $t$ has completed by $t$, and no type 1 job that has begun its processing prior to $t$ is left unfinished. Integrality of $t$ follows naturally.
		
		Type 1 jobs also start at integer time points in the schedule generated by our online $\beta$-threshold policy because the policy requires that any partially processed jobs be completed before opening a new type 1 job. Consider time $t$ as defined above. The optimal schedule implies that all type 0 jobs released prior to $t$ have release dates no later than $t-1$. Because there is at least one decision point in the interval $[t-1,t)$, any type 0 job released prior to $t$ must have completed by $t$ in the schedule generated by the online policy. Then, by our assumption that precludes any idle time in the schedule, integer units of work have been done on type 1 jobs by $t$ and the result follows.
	\end{proof}
	
	Claim \ref{claim:type1-start} allows us to partition the schedule whenever a type 1 job begins its processing. Within each partitioned block, the same set of jobs will have completed processing in both policies. Thus, by the mediant inequality, we consider one such block. This implies that it suffices to consider a dataset with exactly one type 1 job. We call this job $k$. Without loss of generality, we assume that job $k$ begins its processing at time 0. Let $C_k^*$ and $ C_k^0$ denote the completion time of job $k$ in $\mathsf{OPT}$ and $\mathsf{ALG_0}$, respectively.
	
	\begin{claim}
		$C_k^*\geq C_k^0$.\label{claim:type1-cj}
	\end{claim}
	\begin{proof}[Proof of Claim \ref{claim:type1-cj}]
		First observe that $C_k^*$ and $C_k^0$ are positive integers for the same reasons given in the proof of Claim \ref{claim:type1-start}. Suppose on the contrary that $C_k^* < C_k^0$. Then there exists some job $\ell\neq k$ of type 0 that is \textit{not} processed prior to $C_k^*$ in the optimal schedule that is being processed in $\mathsf{ALG_0}$ at time $C_k^*$. Preemptions only occur at $\alpha$-points in $\mathsf{ALG_0}$, so job $\ell$ must have begun its processing at $C_k^*-(1-\alpha)$, which implies that $r_\ell\leq C_k^*-(1-\alpha)$. By Assumption \ref{asm:wsrpt}, it follows that $r_\ell\leq C_k^*-(1-\alpha) < C_k^*-(\omega_1/\omega_0)$.
		
		The optimal schedule follows WSRPT, so delaying the processing of any type 0 job in favor of completing job $k$ would occur only if a type 0 job arrives at such a time that the remaining work in job $k$, $x_k$, satisfies $\omega_1/x_k > \omega_0/1 \iff x_k<\omega_1/\omega_0$. Said differently, only those type 0 jobs arriving after time $C_k^*-(\omega_1/\omega_0)$ would be processed outside of the $\left[0,C_k^*\right)$ interval in an optimal schedule, and by Assumption \ref{asm:wsrpt}, also outside of $\left[0,C_k^*\right)$ in $\mathsf{ALG_0}$. Since $r_\ell < C_k^*-(\omega_1/\omega_0)$, job $\ell$ should have completed before job $k$ in an optimal schedule, which establishes the desired contradiction.
	\end{proof}
	
	An important byproduct of the proof of Claim \ref{claim:type1-cj} is that every job that completes in the interval $\left[0,C_k^*\right)$ in $\mathsf{ALG_0}$ also completes within the same interval in an optimal schedule. Both policies are work-conserving, so the converse also holds. Let $\mathcal{A}$ denote the set of jobs completing in this interval. Then, by another application of the mediant inequality, it suffices to consider the set of jobs $\mathcal{A}$, where job $k$ is the only type 1 job therein. An immediate consequence of this is an upper bound of $\max(\alpha,1-\alpha)$ on the delay in type 0 job completion times in $\mathsf{ALG_0}$ relative to those in $\mathsf{OPT}$. Intuitively, this bound captures how long a type 0 job will have to wait until the next decision point while job $k$ is being processed. Thus,
	\begin{align*}
		\frac{\mathsf{ALG_0}}{\mathsf{OPT}} &= \frac{\omega_1 C_k^0+\sum_{j\in\mathcal{A}\setminus\{k\}}\omega_0 C_j^0}{\omega_1 C_k^*+\sum_{j\in\mathcal{A}\setminus\{k\}}\omega_0 C_j^*}\\
		&\leq \frac{\sum_{j\in\mathcal{A}\setminus\{k\}}\cancel{\omega_0} C_j^0}{\sum_{j\in\mathcal{A}\setminus\{k\}}\cancel{\omega_0} C_j^*}&&\text{by the mediant inequality, since $C_k^*\geq C_k^0$}\\
		&\leq\max_{j\in\mathcal{A}\setminus\{k\}}\left(\frac{C_j^0}{C_j^*}\right)&&\text{by the mediant inequality}\\
		&= \max_{j\in\mathcal{A}\setminus\{k\}}\left(\frac{C_j^*+\delta_j}{C_j^*}\right)
	\end{align*} 
	where $\delta_j$ is job $j$'s delay in completion in $\mathsf{ALG_0}$ relative to $\mathsf{OPT}$. Finally, using $\delta_j\leq\max(\alpha,1-\alpha)$ and finding the earliest possible completion time in the optimal schedule for each type of delay,
	\begin{align*}
		\max_{j\in\mathcal{A}\setminus\{k\}}\left(\frac{C_j^*+\delta_j}{C_j^*}\right)&\leq\max\left(1+\frac{\alpha}{1},1+\frac{1-\alpha}{1+\alpha}\right)\\
		&=\max\left(1+\alpha,\frac{2}{1+\alpha}\right),
	\end{align*}
	which proves the result.
\end{proof}

Our results offer some guidance as to which values of $\alpha$ might be effective when dealing with limited preemption points. But given job arrivals and perfect predictions, the $\beta$-threshold rule is 2-competitive regardless of the $\alpha$ value that we choose. Given imperfect predictions without job arrivals, on the other hand, we obtain our earlier result given in Theorem \ref{thm:betaCR}. Competitive analysis featuring both uncertainties remains an open problem.\\

\begin{figure}[h]
	\centering
	\includegraphics[scale =.4]{./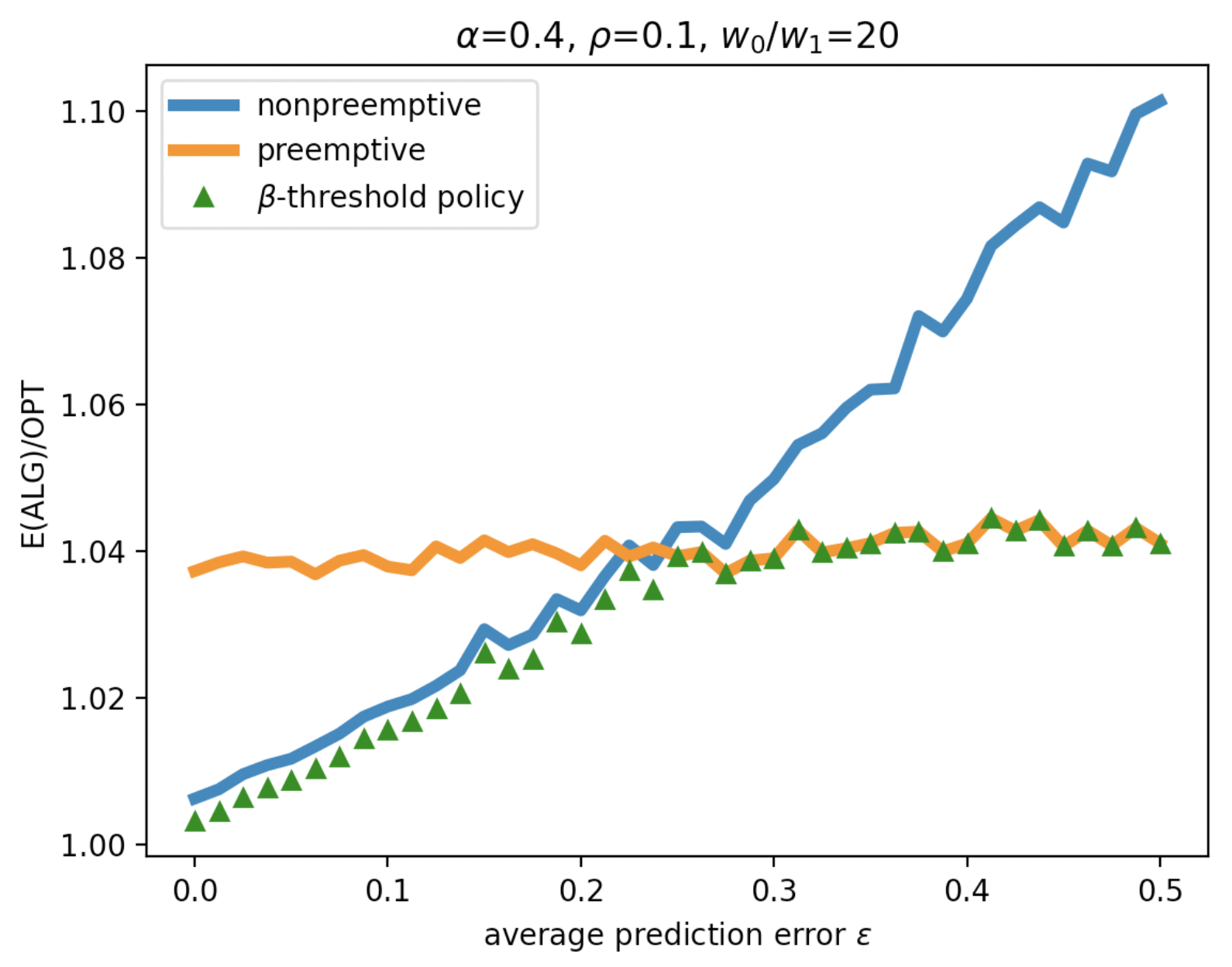}
	\caption{Expected performance when jobs arrive over time}
	\label{fig:release_dates}
\end{figure}

Despite the lack of theoretical guarantees, empirical evaluations of the $\beta$-threshold rule under realistic job arrival scenarios and imperfect prediction show that our policy still performs very well. Figure \ref{fig:release_dates} plots the expected performance of the $\beta$-threshold rule as a function of prediction error, where performance is normalized by the offline optimum obtained by WSRPT. For illustrative purposes, we assume $\varepsilon_0=\varepsilon_1$. In this plot, we assume that jobs are arriving according to a Poisson arrival process with mean interarrival time 0.9 (given unit processing times). The figure shows that our policies exhibit near-optimal performance, and that our $\beta$-threshold rule of alternating between the nonpreemptive and preemptive regimes outperforms both non-adaptive policies when we are given high quality advice.

Our experiments thus far reveal that our original stylized model without job arrivals results in the worst-case performance. This is surprising to us, and also somewhat counterintuitive given classic results in scheduling theory involving job release dates. One possible explanation for this could be the higher opportunity costs of misprediction stemming from having a long line of jobs waiting in the queue, but we do not have a good answer for this yet.

\section{Discussion and Future Directions}\label{sectionch2:discussion}

The work presented in this paper was motivated by recent interest in using machine learning algorithms for patient triage and prioritization. We modeled this as a learning-augmented online scheduling problem in which we are given good but imperfect predictions of patient risk, and sought to capture the trade-off between the need to prioritize emergency cases and the potential costs of misprediction. We presented a simple threshold-based policy that addressed these concerns and proved that our policy is in fact the best possible in certain stylized settings. The policy was also shown to remain effective in more realistic settings.

The model that we studied is grounded in reality. For many radiologists, preemptions and interruptions are simply facts of life, as is the fact that they are trained to collect information in real time while processing each patient case. In that sense, our policy recommendation is intuitive and easy to implement, and more importantly, does not require an overhaul of existing systems and Modality Worklists that are already in place. While it would be impossible to implement the $\beta$-threshold rule by the book in a clinical setting, we do believe that our policy can offer qualitative guidance on how to think about and respond to predictions of patient risk in connection with other input parameters.

That said, our work in this area is far from complete. Several concrete next steps have been outlined in Section \ref{sectionch2:extensions}, including exact characterizations of performance with probabilistic classifiers or with probabilistic learning outcomes. Theoretical guarantees of performance of the $\beta$-threshold rule with online job arrivals also remain an open problem.

Even beyond these extensions, there are many interesting directions that we can explore for future research. One natural direction would be to generalize our stylized model by allowing granularity in prioritization schemes beyond a binary classification of urgent vs. non-urgent. From a practical perspective, clinics tend to have their own internal methods of categorizing urgency levels. For example, the Department of Radiology at the Weill Cornell Medical Center categorizes urgency levels by the following:
\begin{itemize}
	\item Critical (JCAHO\footnote{Joint Commission on Accreditation of Healthcare Organizations}-designated): immediate communication required
	\item Emergent: immediate communication required
	\item Urgent: communication required in under 4 hours
	\item Important: closed-loop communication required but not in an urgent time frame (1-2 week limit).
\end{itemize}
While this is clearly a natural next step to consider, it is less evident whether our optimal policy structure extends under this more general setting. We have observed, for example, that the optimality of the WSRPT rule breaks immediately upon adding a third priority class.

Another direction would be to consider various preemptive strategies that better reflect clinical scenarios. In our scheduling formulation, preemptions could be used to model the many different ways in which radiologists learn true job types over time. One extension might be to consider multiple $\alpha$-points of preemption. For example, given $0<\alpha_1< \alpha_2<\dots <1$, we might imagine radiologists having improved confidence about a job's true priority with additional time spent processing that job. We could also consider varying preemption points for each job, for instance by letting job $j$ preempt at a unique $\alpha_j$-point once a radiologist meets a certain level of confidence. It would then be interesting to observe how performance evolves as a function of these preemption confidence levels.

In a similar vein, it is often the case that preemption comes at a cost. Our model assumes that the work required to complete an interrupted job is exactly the same as if it had not been interrupted. Realistically, it might take a while for a radiologist to warm up to a job, in which case restarting a previously preempted job may require an extra factor of $\gamma>1$ in processing time. Early attempts at tackling this problem with friction costs have not been successful due to difficulties in having to differentiate decision points by continuity in job processing. 

Continued advances in machine learning techniques mean that, over time, algorithms will likely become better at detecting abnormalities in medical images. Our current model assumes fixed error rates based on guarantees on expected generalization error, but we could also consider applying Bayesian inference techniques to update error rates over time based on observed data. This might lead to an adaptive $\beta$-threshold policy for which we might seek convergence results.

Finally, in the spirit of scheduling research, we could consider how the policy performs when there are multiple radiologists, i.e., parallel machines.

\bibliography{library}
\bibliographystyle{plainnat}

\end{document}